\def\compiletikz{0}
\newcommand\myshade{85}
\colorlet{mylinkcolor}{BrickRed}
\colorlet{mycitecolor}{NavyBlue}
\colorlet{myurlcolor}{Aquamarine}
\newcommand{\past}[2]{\cev{\bm #1}_{#2}}
\newcommand{\fut}[2]{\vec{\bm #1}_{#2}}
\newcommand{\finfut}[3]{\vec{\bm #1}^{#3}_{#2}}
\newcommand{\Mod}[1]{\ (\mathrm{mod}\ #1)}
\let\oldparagraph\paragraph
\renewcommand\paragraph[1]{\vskip0.5cm\oldparagraph{#1}}
\DeclareRobustCommand{\cev}[1]{%
  \mathpalette\do@cev{#1}%
}
\newcommand{\do@cev}[2]{%
  \fix@cev{#1}{+}%
  \reflectbox{$\m@th#1\vec{\reflectbox{$\fix@cev{#1}{-}\m@th#1#2\fix@cev{#1}{+}$}}$}%
  \fix@cev{#1}{-}%
}
\newcommand{\fix@cev}[2]{%
  \ifx#1\displaystyle
    \mkern#23mu
  \else
    \ifx#1\textstyle
      \mkern#23mu
    \else
      \ifx#1\scriptstyle
        \mkern#22mu
      \else
        \mkern#22mu
      \fi
    \fi
  \fi
}
\newcommand{\splitatcommas}[1]{%
\begingroup
\begingroup\lccode`~=`, \lowercase{\endgroup
    \edef~{\mathchar\the\mathcode`, \penalty0 \noexpand\hspace{0pt plus 1em}}%
    }\mathcode`,="8000 #1%
    \endgroup
}
\definecolor{darkgreen}{RGB}{50,200,8}
\newtheorem{definition}{Definition}
\newtheorem{theorem}{Theorem}
\newtheorem{lemma}{Lemma}
\newtheorem{proposition}{Proposition}
\newtheorem{example}{Counterexample}
\newtheorem{ex}{Example}
\newcommand{\X}{\mathcal{X}}
\newcommand*{\diff}{\mathop{\kern0pt\mathrm{d}}\!{}}
\begin{document}
\title{Software in the natural world: 
A computational approach to hierarchical emergence}

\author{Fernando E. Rosas} 
\email{email: f.rosas@sussex.ac.uk}
\affiliation{Department of Informatics, University of Sussex}
\affiliation{Sussex Centre for Consciousness Science and Sussex AI, University of Sussex}
\affiliation{Center for Psychedelic Research and Centre for Complexity Science, Department of Brain Science, Imperial College London}
\affiliation{Center for Eudaimonia and Human Flourishing, University of Oxford}

\author{Bernhard C. Geiger}
\affiliation{Know-Center GmbH, Graz, Austria}
\affiliation{Signal Processing and Speech Communication Laboratory, Graz University of Technology, Graz, Austria}

\author{Andrea~I~Luppi}
\affiliation{Montreal Neurological Institute, McGill University}

\author{Anil~K.~Seth} 
\affiliation{Department of Informatics, University of Sussex}
\affiliation{Sussex Centre for Consciousness Science and Sussex AI, University of Sussex}

\author{Daniel Polani}
\affiliation{Department of Computer Science, University of Hertfordshire, Hatfield, UK}

\author{Michael Gastpar}
\affiliation{School of Computer and Communication Sciences, EPFL, Lausanne, Switzerland} 

\author{Pedro A.M. Mediano}
\affiliation{Department of Computing, Imperial College London}
\affiliation{Division of Psychology and Language Sciences, University College London}

\begin{abstract}
\noindent
Understanding the functional architecture of complex systems is crucial to illuminate their inner workings and enable effective methods for their prediction and control. 
Recent advances have introduced tools to characterise emergent macroscopic levels; however, while these approaches are successful in identifying \emph{when} emergence takes place, they are limited in the extent they can determine \emph{how} it does. 
Here we address this important limitation by developing a computational approach to emergence, which characterises macroscopic processes in terms of their computational capabilities. 
Concretely, we articulate a view on emergence based on how software works, which is rooted on a mathematical formalisation of how macroscopic processes can express self-contained informational, interventional, and computational properties. 
This framework reveals a hierarchy of nested self-contained processes that determines what computations take place at what level, which in turn delineates the functional architecture of a complex system. 
This approach is illustrated on paradigmatic models from the statistical physics and computational neuroscience literature, which are shown to exhibit macroscopic processes that are akin to software in human-engineered systems. 
Overall, this framework enables a deeper understanding of the multi-level structure of complex systems, revealing specific ways in which they can be efficiently simulated, predicted, and controlled.
\end{abstract}

\maketitle

\section{Introduction}

Complex systems --- such as the global economy, the global weather, and the human brain --- are composed of many elements whose intricate dynamical interactions give rise to distinct phenomena at various spatio-temporal scales~\cite{waldrop1993complexity,jensen2022complexity}. 
For example, the interaction between the synaptic activity of individual neurons can give rise to mesoscopic oscillatory patterns involving entire brain regions~\cite{deco2008dynamic}, which in turn exhibit distinct coupling patterns at a macroscopic, whole-brain level that are implicated in distinct cognitive functions~\cite{fox2005human}. 
Crucially, phenomena at certain scales may not be relevant for identifying what is relevant for another; for instance, particle physics may not be relevant for neuroscience despite neurons being made of particles. 

In order to successfully understand, predict, and control such complex systems, it is crucial to be able to characterise their \emph{functional architecture} --- i.e., to identify the relevant macroscopic spatio-temporal levels and delineate the activity that `makes a difference' for each of them. Unfortunately, such characterisation is often done in ad-hoc ways, as finding rigorous, principled, and general methods is highly non-trivial~\cite{jensen2022complexity,pessoa2023entangled}. 
Importantly, this is not just a problem of lack of data or insufficient system specification: as the opacity of deep learning models dramatically illustrates~\cite{lipton2018mythos,zhang2021survey}, even a full description of the microscopic components and their interactions may not facilitate the identification of when and where emergent phenomena may take place.

Providing a rigorous characterisation of the functional architecture of multi-level complex systems remains an important, yet unsolved challenge. 
A promising line of investigation has developed general methods to identify and characterise macroscopic levels with emergent properties~\cite{seth2010measuring,hoel2013quantifying,rosas2020reconciling,varley2021emergence,barnett2021dynamical}. 
These works provide new information-theoretic tools for researchers to rigorously frame conjectures about emergence on a wide range of dynamical processes, and statistical tools to test these conjectures on data. 
Despite being only recently developed, the practical efficacy of these approaches has already been demonstrated in a number of empirical investigations~\cite{klein2020emergence,klein2021evolution,luppi2022synergistic,luppi2023reduced,proca2022synergistic,sas2024synch}. 

However, as promising as these developments are in providing effective methods to determine \emph{when} emergence takes place, they are limited in the degree they can specify \emph{how} this happens. For example, existing methods might be able to determine whether the dynamics of a recurrent neural network have emergent character via the calculation of specific information-theoretic quantities. However, these approaches could not specify which exactly are the underlying dynamical properties facilitating this, or how these properties are distributed across the various spatio-temporal scales of the system. Answering those questions is critical to deepen our understanding of how those systems work, and under what conditions they may break down.

Here we address these open questions by investigating emergence from a computational perspective, complementing  
information-theoretic methods with principles of theoretical computer science --- more specifically, automata theory~\cite{minsky1967computation} and computational mechanics~\cite{shalizi2001causal}. 
Concretely, we investigate emergence by operationalising three ways in which a macroscopic process can be self-contained: informational, causal, and computational closure --- which correspond to optimal prediction of spontaneous dynamics, exhaustive capability to
perform interventions, and containment of computational processing (respectively). By developing these notions within a unified mathematical framework, this approach establishes formal foundations to reason about the emergence of self-contained \emph{effective macroscopic theories} within microscopic stochastic dynamical processes, which are akin to software in human-engineered systems in their ability to that carry out computations independently of the substrate they are implemented in.

The results facilitated by our framework specify the conditions under which systems give rise to informational, causal, and computational closed macroscopic levels, and clarify the relationships between these properties. 
Furthermore, our framework reveals how computationally closed levels are hierarchically organised into a lattice of nested computational structures ordered by coarse-graining relationships. This lattice constitutes a blueprint for the functional architecture of a complex system, which describes what computations take place at what level. 
Additionally, our results establish links with the theory of lumpable time series~\cite{Kemeny_FMC}, which open the door for efficient algorithms to estimate these properties in real-world applications. 
The power and scope of this approach is illustrated in various case studies, including diffusion processes, spin models, random walks on networks, agent-based models, and neural systems that model processes of memory consolidation and retrieval. 
Our framework deepens our understanding of the multi-level dynamics of these systems, revealing specific ways in which they can be efficiently predicted and controlled. Moreover, its generality and broad applicability opens the way for a wide range of future theoretical developments and practical applications.

The rest of the paper is structured as follows. First Section~\ref{sec:preliminaries} provides background information, and then Section~\ref{sec:our_proposal} gives an intuitive overview of the proposed framework. After this, Section~\ref{sec:examples} presents a number of case studies to showcase the proposed framework. Finally, Section~\ref{sec:discussion} outlines the main implications of our findings, and discusses related work. A complete exposition of the formal theory supporting our framework is provided in the Appendix.

\section{Preliminaries}
\label{sec:preliminaries}

\subsection{The `software-ness' of software}
\label{sec:softwareness}

Here we investigate systems whose macroscopic levels posses a certain degree of causal `self-containment' with respect to their microscopic instantiation, which we describe as \emph{emergent}. Under this view, phenomena taking place at an emergent macroscopic level depend solely on other phenomena at the same level, without being causally affected by more fine-grained distinctions observed at microscopic levels. To illustrate these ideas, consider a simple scenario where a computer is running a script that executes the command \texttt{c=a+b}. Crucially, the value assigned to the variable \texttt{c} depends on the values assigned to \texttt{a} and \texttt{b}, but does not depend on peculiarities of the physical substrate over which the program is running (e.g.\ the spin of the electrons that make the transistors, etc). This relative autonomy from its substrate is a key feature of what \textit{software} is: 
software is always `running on top' of hardware while having some `life of its own' --- following its own rules irrespective of many details of the substrate over which it is instantiated.

Following this line of thinking, one can say that \textit{a system is running software} when it has a macroscopic scale that is \textit{causally closed} with respect to lower scales. 
Technically speaking, causal closure takes place when interventions at a set of macroscopic (i.e.\ aggregated, coarse-grained) variables are sufficient to guarantee specific outcomes on other macroscopic variables. 
Consequently, `programs' can be understood as specific (sets of) relations between macroscopic variables that do not depend on specific microscale instantiations --- in fact, one can imagine many different microscale instantiations with precisely equivalent macroscopic behaviour. 
In the case of the computer script considered above, the value assigned to the variable \texttt{c} only depends on the value previously assigned to \texttt{a} and \texttt{b}, and not on details of the physical instantiation of those variables in terms of transistors and other hardware aspects. 
Hence, causal closure implies that events that happen at a macroscale are `shielded' from how they are instantiated at a microscopic level --- i.e.\  from their implementation.

The distinction of software/hardware has been instrumental in the development of our modern technological world~\cite{patterson2013computer}. Distinguishing software from hardware is what allows designers to create programs that can be implemented over an variety of different physical substrates. 
From a philosophical perspective, the notions of software and causal closure are related to the principle of `multiple realisability'~\cite{bechtel1999multiple,polger2016multiple}, which states that there being many different microscopic ways to realise a specific macroscopic function, and the idea of `substrate independence'~\cite{bostrom2003we}, which highlights that some functions may be agnostic with respect to the substrate over which they are implemented (e.g.\ carbon vs silicon). 
From a scientific perspective, these notions are related to the idea of \textit{minimal model explanations}, which aims to explain a phenomenon via a minimal set of macroscopic features that remain stable after perturbations to their microscopic details~\cite{batterman2014minimal}.

While the idea of software was conceived in the context of device design, an interesting question arises as to whether natural, non-engineered systems could be usefully described in terms of hardware-software distinctions. 
For example, under what conditions is it reasonable to say that different natural systems (e.g.\ different living organisms of the same species) are running the same software? 
Also, would it be reasonable to think of the human brain as running software over neural hardware~\cite{block1995mind,bostrom2003we}\footnote{We remark that a simplistic interpretation of the brain/mind dichotomy in terms of a hardware/software distinction has most likely long outlived its usefulness~\cite{cobb2020idea}. Nonetheless, a computational perspective might still usefully be applied to brains to identify emergent functional architectures.}?

Arguments can be made, based on evolutionary and thermodynamic considerations, for why and to what degree biological systems may display software-like, macroscopic levels that are causally closed. Living organisms perform actions (i.e.\ interventions over their environments and themselves) in order to guarantee their survival. However, some interventions are more feasible to implement than others --- e.g., while it is almost impossible to intervene on the momentum of a particular molecule in a glass of water, it is much easier to increase its mean temperature together with all other molecules in the glass. 
Causally closed systems are those for which macroscopic interventions are efficacious in controlling their macroscopic scale. 
Hence, one can say that systems are \emph{as if} running software to the extent they are controllable at a particular scale at a reasonable thermodynamic cost. 
Building on similar arguments, causal closure has often been regarded by theoretical biologists as a necessary --- albeit not sufficient --- condition for the subsistence of living systems~\cite{varela1979principles,chandler2000emergent,pattee2012cell}.

In this paper we argue for a pragmatic approach to these issues, as our motivation is the development of analytical methods to turn these ideas into empirical questions that can be investigated quantitatively. While causal closure is certainly a necessary condition for software-like processes to exist, in this paper we explore the consequences of taking this condition to be also sufficient. 
Indeed, our framework reveals that causal closure is closely connected with a notion of \emph{computational closure}, which opens --- as our theory shows --- fertile avenues to analyse multi-level physical processes with the tools of theoretical computer science.

\subsection{From causality to computation}
\label{sec:two_faces_emachines}

\begin{figure}[th]
  \centering
  \if1\compiletikz
  \includetikz{tikz/}{emachine_intro}
  \else
    \includegraphics[width=\columnwidth]{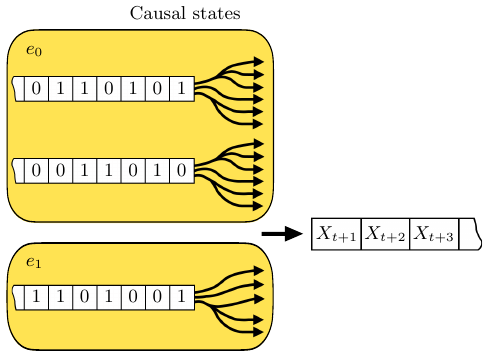}
  \fi \caption{\textbf{Illustration of causal states}. Causal states are sets of  
  of trajectories which bear equal predictions for the future evolution
of the system, as defined by the equivalence relationship in Eq.~\eqref{ref:e_machine}.}
  \label{fig:causal_states}
\end{figure} 

Above we described the nature of software from the perspective of interventions and causal closure, and suggested how it can be used for investigating natural systems. However, while such approach would allow us to determine if a system is running software or not in terms of causal relations, it does not facilitate a direct way to describe this in terms of \emph{computations}. 
One of the central aims of this paper is to illuminate the computational implications of causal closure.

To attain this, our approach takes inspiration and builds upon the rich literature of computational mechanics --- a framework for the study of stochastic dynamical processes from a computational perspective~\cite{crutchfield1994calculi,shalizi2001computational,shalizi2001causal,rupe2023principles}. Specifically, computational mechanics studies patterns and statistical regularities observed in time series data by asking quantitative questions such as: how much historical information does the system store, where is that information stored, and how is it processed to generate future behaviour~\cite{amslaurea15649,crutchfield2017origins,grassberger1986toward}? The answers to those questions, according to computational mechanics, can be found in the states and transitions of so-called `$\epsilon$-machines'~\cite{grassberger1986toward,crutchfield1989inferring,crutchfield1994calculi}.

\begin{figure*}
  \centering
  \if1\compiletikz
  \includetikz{tikz/}{diagram}
  \else
    \includegraphics{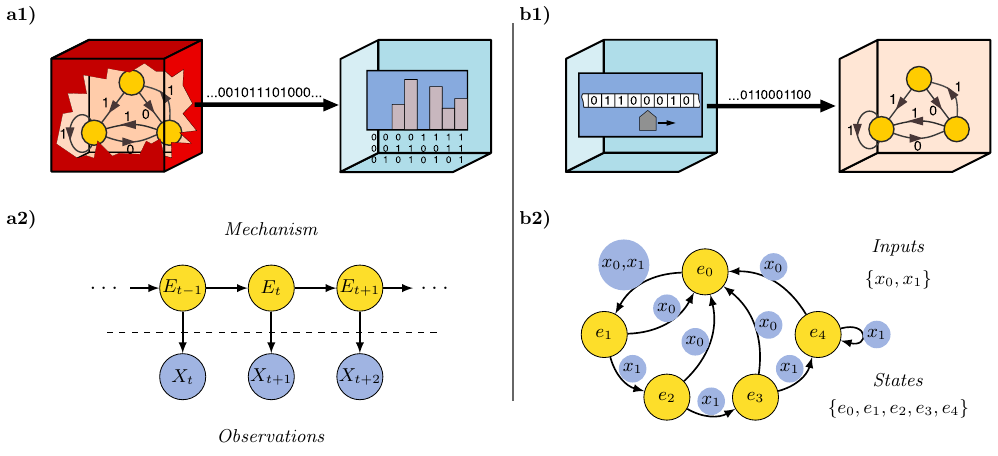}
  \fi 
  \caption{\textbf{The two faces of $\epsilon$-machines}.
  Illustration of the dual interpretation of $\epsilon$-machines that establish a bridge between causality and computation. 
  \textbf{a) Causal face}: View of $\epsilon$-machines as the effective mechanism driving the system, acting `behind the scenes' to generate observable data (a1). Technically, this corresponds to interpreting it as a hidden Markov process 
  --- i.e., dynamics that take place on variables $E_t$ on a latent state-space, while generating the observable data
$X_t$ (a2). \textbf{b) Computational face.} Alternative view of $\epsilon$-machines as discrete automata, where the data corresponds to inputs given by a user driving the system between different states (b1). Technically, this corresponds to seeing it as a discrete automata with states $e_k$, whose deterministic transitions are governed by the input data $x_i$ (b2). Note that (a1) focuses on variables (e.g.\ $X_t,E_t$), while (b2) portraits the states that those variables can take (e.g.\ $x_0,e_0$). Fig. (a1) is adapted from Ref.~\cite{crutchfield2001regularities}.}
\label{fig:SS_extended}
\end{figure*} 

Computational mechanics introduces $\epsilon$-machines as optimal (i.e.\  complete and minimimal) representations of a computational `engine' that can generate the patterns observed in data. 
Such representation can be built by grouping past trajectories according to their forecasted futures into so-called \textit{causal states} (see Figure~\ref{fig:causal_states}). 
More precisely, the causal states of a time series $\bm X = \{X_t\}_{t\in\mathbb{N}}$ 
are the equivalence classes of past trajectories $\past{x}{t} \coloneqq (\dots, x_{t-1}, x_{t})$ that are established by the following equivalence relationship:
\begin{equation}
\label{ref:e_machine}
    \past{x}{t} \equiv_\epsilon \past{x}{t}' \quad \text{iff} \quad p(\finfut{x}{t+1}{L} | \past{x}{t}) = p(\finfut{x}{t+1}{L} | \past{x}{t}') \quad \forall \finfut{x}{t+1}{L}, L\in\mathbb{N},
\end{equation}
where $\finfut{x}{t+1}{L}=(x_{t+1},\dots,x_{t+1+L})$ and $p(a|b)$ denotes the probability of $a$ given $b$~\footnote{Through the article we use uppercase letters ($X,Y$) to denote random variables and lowercase letters ($x,y$) to denote their realisations.}. 
Hence, the causal states are given by a mapping $\epsilon$ which coarse-grains past trajectories $\past{X}{t}$, thereby generating a new time series of causal states $\bm E=\{E_t\}_{t\in\mathbb{N}}$ with  $E_t=\epsilon(\past{X}{t})$. Formally, the $\epsilon$-machine of $\bm X$ is  defined as the pair $(\epsilon,T_{e,e'}^x)$, where $\epsilon$ is the above mapping to causal states and $T_{e,e'}^x$ is a collection of transition probabilities of the form
\begin{equation}
    T_{e,e'}^x = \mathbb{P}\{ E_t=e', X_t=x | E_{t-1}=e\}~.
\end{equation}
Hence, the causal states correspond to a kind of information bottleneck~\cite{shalizi2001computational}: $\epsilon$ is the coarsest coarse-graining of past trajectories $\past{x}{t}$ that retains full predictive power over future variables (i.e.\  $I(\past{X}{t}; \finfut{X}{t+1}{L}) = I(E_t; \finfut{X}{t+1}{L})$ for all $L\in\mathbb{N}$). 
Additionally, the causal states have Markov dynamics, and by `Markovianising' the process this forces the effective states of the process to become explicit. 

In summary, the $\epsilon$-machine provides the simplest  computational mechanism capable of giving rise to the system's dynamics. Hence, the $\epsilon$-machine of a process can be regarded as its effective theory --- akin to its equations of motion~\cite{crutchfield2017origins}. It is important, nonetheless, to note that the causal states provide counterfactual guarantees only if the system at hand is fully observed, or equivalently, if the conditional probabilities that describe $\bm X$ are respected by interventions~\cite{pearl2009causality}. In the case of partially observed scenarios, causal states ought to be understood in the Granger sense, i.e.\ as states of maximal non-mediated predictive ability~\cite{bressler2011wiener}, and as such, the best possible attempt to identify the causal drivers given the available knowledge.

A key contribution of computational mechanics is establishing a bridge between causality and computation, which is rooted in two alternative perspectives that can be used to interpret what $\epsilon$-machines are (see Figure~\ref{fig:SS_extended}):
\begin{itemize}
\item[\textbf{a)}] From a causal perspective, $\epsilon$-machines can be conceived as Markov processes taking place at a hidden, latent space~\cite{ephraim2002hidden}, corresponding to the underlying mechanism at work `under the hood.' From this view, the system is driven by the dynamics of causal states, and the data that one measures is a consequence of the transitions between them. 
\item[\textbf{b)}] From a computational perspective, $\epsilon$-machines can be thought of as deterministic automata --- i.e.\ (finite or infinite) state machines whose transitions are fully determined by an input sequence~\cite{hopcroft2001introduction}. Deterministic automata are a paradigmatic example of computational systems (typically used to drive vending machines, elevators, and traffic lights), where a string of inputs is used to drive the system through a sequence of internal states~\footnote{For an introduction to automata and their interpretation as computational systems, please see Refs.~\cite{minsky1967computation,hopcroft2001introduction}.}. 
Under this view, the input driving the automata is the observed data, and the dynamics over causal states is the result of those.     
\end{itemize}
These two quite distinct views are both consistent with what a
$\epsilon$-machine formally is, and their duality opens a principled link between principles of causality and computation~\footnote{This equivalence between hidden Markov models and discrete automatas is not general, but only holds because the Markov dynamics of $\epsilon$-machines have the so-called \emph{unifilar} property: transitions between causal states (e.g.\ from $E_t$ to $E_{t+1}$) are deterministic after the next symbol (in this case $X_{t+1}$ has been emitted.}.

\section{A computational approach to emergence}
\label{sec:our_proposal}

After setting the background and clarifying foundational ideas, this section presents the ideas at the core of our proposed framework in an intuitive manner. The exposition provides links to formal definitions and theorems in the \hyperref[sec:theory]{Appendix}, where the theory is presented with full technical detail. The ideas discussed here are then illustrated by a number of case studies in Section~\ref{sec:examples}.

\subsection{What drives the macro? A story of two machines}
\label{sec:what_caused_what}

Our approach starts by combining computational mechanics~\cite{shalizi2001computational,shalizi2001causal} and an interventionist view on causation~\cite{pearl2018book} to investigate the causality of macroscopic phenomena. 
Concretely, let us consider a dynamical process $X_t$ and a coarse-grained, macroscopic process $Z_t = f(X_t)$ derived from it, and ask: what are the causes of events in $Z_t$?

We address this question by building on the principle that a cause of an effect is singled out by the minimal set of distinctions that make a difference for that level. 
Concretely, by taking inspiration from $\epsilon$-machines (see Section~\ref{sec:two_faces_emachines} and Definition~\ref{def:e-machine}), our approach is to identify the minimal set of distinctions between past trajectories of $\bm X$ that lead to different potential outcomes in the future of $\bm Z$. Mathematically, we build a new computational mechanics `machine' based on the following equivalence relationship (see Definition~\ref{def:u-machine}):
\begin{equation}
    \past{x}{t} \equiv_\upsilon \past{x}{t}' \quad \text{iff} \quad p(\finfut{z}{t+1}{L} | \past{x}{t}) = p(\finfut{z}{t+1}{L} | \past{x}{t}') \quad \forall \finfut{z}{t+1}{L},L\in\mathbb{N}.
\nonumber
\end{equation}
In contrast with Eq.~\eqref{ref:e_machine}, the resulting equivalence classes are based on the prediction of the future of $\bm Z$, and not the future of $\bm X$. 
This gives rise to another, usually simpler set of distinctions in the past trajectories of $X$ (i.e.\ a different set of causal states), which we call the $\upsilon$-machine --- with upsilon the greek equivalent to the latin \textit{u} (although when combined with other letters it sounds like \textit{i}), referring to `underlying' for reasons explained below. 
The causal states of the $\upsilon$-machine are given by a mapping $\upsilon$ which coarse-grains $\past{X}{t}$, thereby generating a new time series $\bm U=\{U_t\}_{t\in\mathbb{N}}$ with  $U_t=\upsilon(\past{X}{t})$. Analogously than for the $\epsilon$-machine, our results show that the causal states of the $\upsilon$-machine to be an optimal information bottleneck --- concretely, $\upsilon$ is the coarsest coarse-graining of past trajectories $\past{x}{t}$ that retains full predictive power over the future of $\bm Z$ (see Proposition~\ref{prop:minimality_upsilon}).

\begin{figure}[th]
  \centering
  \if1\compiletikz
  \includetikz{tikz/}{e_vs_u}
  \else
    \includegraphics{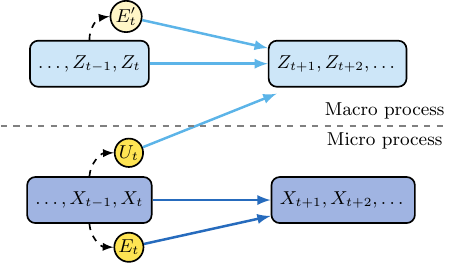}
  \fi 
  \caption{\textbf{The various machines associated with a macroscopic process}. Diagram of the relationship between the different machines associated with a macroscopic process $\bm Z$ and its corresponding microscopic process $\bm X$. The $\epsilon$-machines with causal states $E_t$ and $E'_t$ correspond to the optimal prediction of the future of $\bm X$ and $\bm Z$, respectively, using data from the same level. In contrast, the $\upsilon$-machine with causal states $U_t$ provides optimal prediction of the future of $\bm Z$ using data from $\bm X$, hence using the minimal amount of micro information for optimally predicting the future of the macro.}  
  \label{fig:e_vs_u}
\end{figure}

With this construction, each macroscopic process $Z_t=f(X_t)$ now possesses two associated machines (see Figure~\ref{fig:e_vs_u}): its $\upsilon$-machine (with causal states $U_t$) that determines the states of $\bm X$ that make a difference for the future of $\bm Z$ (Definition~\ref{def:u-machine}), 
and its $\epsilon$-machine (with causal states $E'_t$) that corresponds to the set of distinctions within the past of $\bm Z$ that make a difference for its own evolution (Definition~\ref{def:e-machine}). 
Note that the $\epsilon$-machine of $\bm Z$ is often different from the $\epsilon$-machine of the underlying microscopic process $\bm X$ (with causal states $E_t$). 
These three machines are related as follows. On the one hand, the $\upsilon$-machine of $\bm Z$ is usually more powerful than its own $\epsilon$-machine (see Lemma~\ref{lemma:u_over_e}), as it is not restricted to what can be observed from the macroscopic level, but also can access what is at the microscopic level --- as illustrated in Figure~\ref{fig:e_vs_u}. 
On the other hand, the $\upsilon$-machine of $\bm Z$ is usually weaker than the $\epsilon$-machine of $\bm X$, as the former only accounts for what is relevant for predicting $\bm Z$. 

Seen under this light, one can conclude that the $\upsilon$-machine of $\bm Z$ capture its `genuine cause' (or at least its best estimation based on the information available from the micro level), while its own $\epsilon$-machine is the best possible reconstruction of the causal states $U_t$ using only what is available at the level of $\bm Z$. That being said, please note that while the causal states $E_t$ and $U_t$ are coarse-grainings of $\past{X}{t}$ and $E'_t$ is a coarse-graining of $\past{Z}{t}$, their relationship can be highly non-trivial (see Counterexamples~\ref{ex:incompatibility_emachines} and \ref{ex:UdoesnotmaptoE}).

\subsection{Causal and information closure}

What happens if a macroscopic process $Z_t=f(X_t)$ is such that both its $\epsilon$-machines and $\upsilon$-machines are equivalent (i.e.\ if there exists a bijection between the states of the $\epsilon$- and $\upsilon$-machines)? If that is the case, then all the distinctions in the past trajectories of $\bm X$ that make a difference (i.e.\ the causal states of the macroscopic process, given by its $\upsilon$-machine) can be effectively accessed from the macroscopic level. Pragmatically, this implies that all interventions that make a difference for $\bm Z$ can be implemented at the level of $\bm Z$ without checking `microscopic details' related to how it is implemented by $\bm X$. Put simply, this means that the differences that make a difference for $Z_t$ are actually macroscopic differences. 
In this sense, it is fair to say that $Z_t$ is \emph{causing its own future} --- or more precisely, that all the causes of $Z_t$ are within its own level, and not below in a microscale. 
Building on these observations, we take this condition (i.e. the equivalence between the $\epsilon$-machine and $\upsilon$-machine of a macroscopic process $\bm Z$) as a formal definition of \emph{causal closure} (Definition~\ref{def:causal_closure}), and hence, following the discussion in Section~\ref{sec:softwareness}, we say that such are software-like processes running over their corresponding microscopic instantiations.

A somehow related notion is the idea of \textit{information closure}, which focuses on prediction instead of causation. Information closure was first introduced as a way to quantify the degree of autonomy of an organism with respect to its environment~\cite{bertschinger2006information}, and was then extended to explore the degree to which a macroscopic level depends on its underlying microscale~\cite{chang2020information}. This, in turn, has been used to investigate emergence~\cite{barnett2021dynamical}, understanding it as macroscopic processes which can optimally predict themselves. 
Concretely, a macroscopic process $\bm Z$ can be said to be informationally closed with respect to the microscopic process $\bm X$ if the following condition is satisfied (see Definition~\ref{def:infoclosure}):
\begin{equation}
I(\past{Z}{t}; \finfut{Z}{t+1}{L})
=
I(\past{X}{t}; \finfut{Z}{t+1}{L})
\quad \forall L\in\mathbb{N},
\end{equation}
where $I$ is Shannon's mutual information, which quantifies the amount of predictive information linking its arguments~\footnote{To make this point precise, prediction quality is assessed in terms of {\it expected logarithmic loss,} rather than directly via prediction error probability. See for example~\cite{bondaschi2023alphanml} for an in-depth discussion.}. 
This condition implies that knowledge about the micro-scale does not enhance the predictability of the macro-scale process above the self-predictive power of the process over itself. This makes the macro-scale `informationally sufficient' to predict itself, making it unnecessary to refer to the micro-scale~\footnote{More technically, a informationally-close macro-scale is a \textit{sufficient statistic}~\cite{casella2021statistical} with respect to the micro-scale to predict its own future.}.

What is the precise relationship between causal and informational closure?
Information closure only requires that one is able to predict the same amount of information from micro or macro, which may sound weaker than requiring that the actual structure of the underlying machines to be equivalent. 
Perhaps surprisingly, a first consequence of our framework (Theorem~\ref{thm:equivalence}) proves that information and causal closure are equivalent for any macroscopic process $Z_t=g(\past{X}{t})$. 
That said, it is important to note that causal closure (as defined here) only provides counterfactual guarantees if the system described by $\bm X$ is fully observed; otherwise, it ought to be understood in the Granger sense, i.e.\ as a best guess given the available knowledge (see the related discussion in Section~\ref{sec:two_faces_emachines}).

\subsection{Computational closure and renormalisation}

After establishing a rigorous definition of causal closure, identifying it with our intuitions of software, and clarifying its relationship with information closure, our next step is to investigate its properties in terms of computational principles. For this, we leverage the duality of $\epsilon$-machines as both state-space models and as deterministic automata (see Figure~\ref{fig:SS_extended}b) as follows.

As discussed in Section~\ref{sec:two_faces_emachines}, automata use sequences of input symbols to generate transitions between their internal states, and hence it is natural to ask: what would happen if --- due to e.g.\ noise or compression --- one only has access to coarse-grained versions of the input symbols? In general, the transitions of an automata under coarse-grained symbols are not anymore deterministic, which breaks its internal structure. However, under some circumstances it is possible to recover deterministic transitions by considering coarse-grained states --- a condition that we call \emph{computational closure}. 
More formally, computational closure applies to coarse-grainings of input symbols of an automaton for which there exists a corresponding coarse-graining of states such that the resulting transitions --- between coarse-grained states following the coarse-grained symbols --- are again deterministic (Definition~\ref{def:compu_closure}). Put simply, computational closure takes place when a new deterministic automaton can be obtained by coarse-graining both the states of the original automaton and its input alphabet (see Figure~\ref{fig:compu_closure_intuitive}).

\begin{figure}[t]
  \centering
  \if1\compiletikz
  \includetikz{tikz/}{compu_closure}
  \else
    \includegraphics{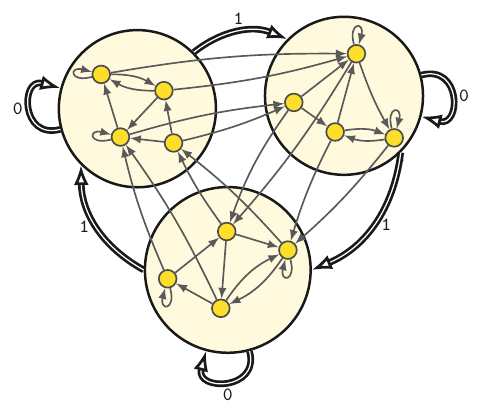}
  \fi 
  \caption{\textbf{Example of computational closure}. Illustration where micro causal states are shown as small golden nodes and macro causal states are represented as big pale-yellow nodes. 
  Transitions of micro causal states are represented as simple arrows responding to three possible inputs: two inputs denoted by \texttt{a} and \texttt{b} (not shown) trigger transitions within the same macro state, and one input denoted by \texttt{c} (not show) triggers a transition to a new macro state. The coarse-graining $f(\texttt{a})=f(\texttt{b})=\texttt{0}$ and $f(\texttt{c})=\texttt{1}$ generate deterministic dynamics for the macro states represented by double arrows, whereas \texttt{0} makes the state to remain and \texttt{1} makes a transition to the next state.}
    \label{fig:compu_closure_intuitive}
\end{figure}

\begin{figure*}
  \centering
  \if1\compiletikz
  \includetikz{tikz/}{multiscale}
  \else
    \includegraphics{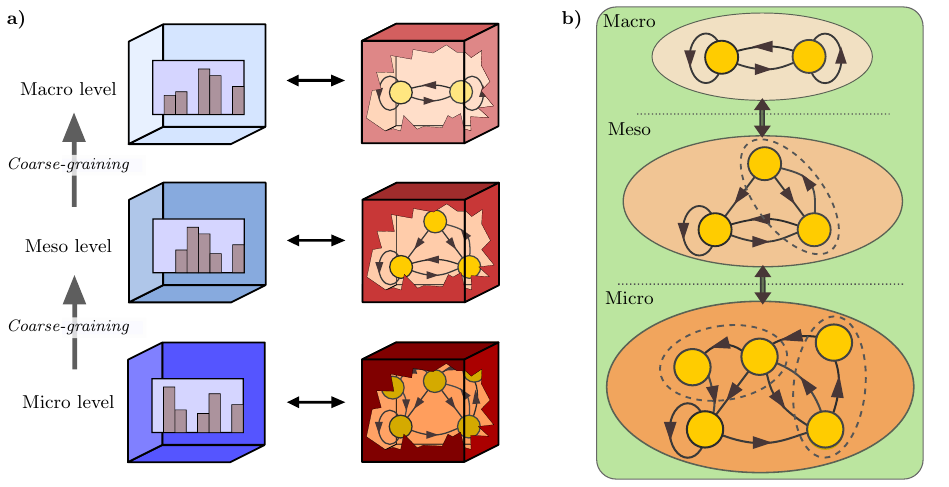}
  \fi 
  \caption{
  \textbf{Multilevel analysis via $\epsilon$-machines}. 
  \textbf{a)} Optimal automata can be built at different levels of coarse-graining of observed data. Each automaton accounts for the resulting patterns taking place at that scale. 
  \textbf{b)} If the considered levels of description are computationally closed, then the automata of higher levels are coarse-grainings of the ones of levels below. This process of coarse-graining of machines reveals the computations taking place at each of those levels.
  }
  \label{fig:multiscale}
\end{figure*}

To build some intuition about computational closure, let's consider a simple example: imagine a finite-state machine that actually consists of two independent machines, so its state is the vector of the states of each individual machine, and its input alphabet is the Cartesian product of the individual alphabets. 
It is clear that in this case different parts of the inputs satisfy the definition of computational closure,
as different parts of the input are processed separately. The definition of computational closure generalises this to cases of machines that are nested, where part of the information can be processed autonomously but another part cannot.

Computational closure have a natural characterisation for the case of $\epsilon$-machines (see Figure~\ref{fig:multiscale}). Let's consider a microscopic process $\bm X$ and a coarse-graining $\bm Z$, and denote by $E_t$ and $E'_t$ the causal states of the $\epsilon$-machines of them, respectively. Then, the $\epsilon$-machine of a macroscopic process $\bm Z$ is computationally closed if there is a coarse-graining mapping $f^*$ such that $E'_t= f^*(E_t)$. 
In effect, if that is the case, then $E'_t$ 
are coarse-grained states of the automata with states $E_t$, and $\bm Z$ provides a coarse-graining of the input symbols $\bm X$. 
Interestingly, computational closure implies that the operations of coarse-graining and calculating the $\epsilon$-machine are interchangeable, or equivalently, that the following diagram commutes:
\begin{figure}[h!]
  \centering
  \if1\compiletikz
  \includetikz{tikz/}{CommutingDiagram}
  \else
    \includegraphics{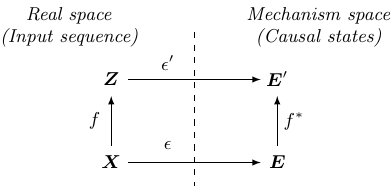}
  \fi 
\end{figure}

\noindent
where $f$ and $f^*$ are coarse-graining operations and $\epsilon$ and $\epsilon'$ are the coarse-grainings that lead to the causal states of the corresponding $\epsilon$-machines. Therefore, computationally closed levels can be said to carry out a specific portion of the computations of the whole process --- the part that remains after coarse-graining the $\epsilon$-machine of the base level.

An attractive aspect of computational closure is that it can be understood from a renormalisation theory~\cite{wilson1979problems} perspective as follows. The $\epsilon$-machines of the micro- and macroscropic processes can be seen as the `theory', or model, that best explains the statistical patterns observed at each level. 
Hence, the question we are considering can be reframed as asking how can one obtain an effective theory to explain the patterns observed in coarse-grained data. 
The commuting diagram above says that, if computational closure holds, then effective theories to explain macro levels can be found simply by coarse-graining the theory that explains the level below~\footnote{In such interpretation, the macroscopic level is the coarse-graining $\bm Z$, and its $\epsilon$-machine $E'_t$ is its corresponding theory.}. 
Therefore, macroscopic levels that are computationally closed can be thoroughly described without relying on the full microscopic theory, but using solely a simpler, coarse-grained version of it. 
Conversely, the part of the computations that take place at a computationally closed level can be extracted by an appropriate coarse-graining of the theory/$\epsilon$-machine of the microscopic level. 
Please note that this is certainly not true for arbitrary macroscopic processes, whose $\epsilon$-machines are often incompatible --- and hence incomparable --- with the one of the micro (for a minimal example, see Counterexample~\ref{ex:incompatibility_emachines}).

\subsection{Causal closure, computational closure, and lumpability}

\renewcommand{\arraystretch}{2}
\begin{table}[thbp]
\begin{center}%
\caption{Summary of main concepts}
\label{tab:sym}
\begin{tabular}{c c}%
\specialrule{.13em}{.0em}{.15em} %
Term & Explanation  \\
\specialrule{.13em}{.0em}{.15em} %
 \textit{Micro level} & Basic time-series\\[0.1cm]
 \hline
 \textit{Macro level} & \makecell{A coarse-graining of a micro process} \\[0.1cm]
 \hline
 $\epsilon$-\textit{machine} & \makecell{Model of a process built on\\ the history at the same level}\\
 \hline
 $\upsilon$-\textit{machine} & \makecell{Model of a process built on\\ the history at a level below} \\
 \hline
 \textit{Causal states} & Hidden states of $\epsilon$-  or $\upsilon$-machines \\[0.1cm]
 \hline
 \makecell{\textit{Information}\\ \textit{closure}} & \makecell{When best predictions of macro\\ can be done from the same macro}\\
 \hline
 \textit{Causal closure} & When $\epsilon$- and $\upsilon$-machines coincide \\[0.1cm]
 \hline
 \makecell{\textit{Computational}\\ \textit{closure}} & \makecell{When $\epsilon$-machine of macro is a\\ coarse-graining of $\epsilon$-machine of micro}\\
 \hline
 \textit{Lumpability} & \makecell{The ability of a Markov process to remain\\ Markovian after being coarse-grained}\\
\specialrule{.13em}{.15em}{.0em} 
\end{tabular}
\end{center}
\end{table}

At this point one may wonder: what is the relation between computational and causal closure? 
Our second main result (Theorem~\ref{teo:compu_closure}) reveals that, for macroscopic processes of the form $Z_t = f(X_t)$,
information closure implies computational closure~\footnote{Note that this implication holds only for macroscopic processes based on a spatial, but not temporal, coarse-grainings.}. 
This result may be surprising, as computational closure sounds more restrictive than information/causal closure. The reason why computational closure does not imply information/causal closure is that whereas the former is based on sufficiency, the latter also requires necessity. In effect, while for computational closure it is enough if the causal states of the macro are compatible with the causal states at the micro (i.e., induce a coarser equivalence class), causal closure also requires every relevant causal state at the micro to be accessible from the macro. Hence, a computationally closed coarse-graining may nonetheless be coarser than the partition induced by its $\upsilon$-machine, and hence fail to guarantee causal closure. For a minimal example of this, see Counterexample~\ref{ex:compcl_no_infocl}.

Theorem~\ref{teo:compu_closure} is important, as it provides a computational interpretation for causally closed systems --- i.e.,\ to macroscopic processes that can be described as software-like. Indeed, this result shows that systems that exhibit software-like macroscopic processes are characterised by having a portion of the computations that are carried out autonomously from the rest of the system, which can be efficiently controlled. 
Hence, our framework puts the arguments developed in Section~\ref{sec:softwareness} on a rigorous theoretical footing, while providing quantitative tools to investigate them in concrete, empirical data.

The relationship between causal and computational closure is further clarified by our final theoretical result, which relates these notions with \emph{weak} and \emph{strong lumpability}. 
Lumpability pertains to the study of how the memory of a stochastic process is affected after it is coarse-grained. More specifically, a memoryless (i.e.\ Markovian) process is lumpable if it has a coarse-graining that gives rise to equally memoryless dynamics --- being strongly lumpable if this only depends on the transition probabilities, and weakly lumpable if this only holds for specific initial conditions (see Definition~\ref{def:lumpability} and Ref.~\cite{Kemeny_FMC}). 
This type of inquiry leads to a question: which coarse-grainings of causal states give rise to $\epsilon$-machines of macroscopic processes of the original process? Our framework shows (Theorem~\ref{teo:lumpability}) that 
weakly lumpable coarse-grainings of causal states 
are enough to guarantee this, but the resulting levels may satisfy computational but not information/causal closure. In contrast, coarse-grainings that are strongly lumpable are guaranteed to give rise to the $\epsilon$-machine of causal/informationally closed macroscopic processes. These results imply that causal/informational closure is a more `robust' property than computational closure, as the latter can be disrupted by deviations from the stationary distribution while the former cannot.

In summary, lumpability identifies necessary and sufficient conditions for a microscopic process to have emergent macroscopic processes: a microscopic process has causally/informationally closed levels if and only if its causal states are strongly lumpable. 
By doing this, Theorem~\ref{teo:lumpability} also opens the door to the discovery of emergent levels in systems of interest. In effect, this theorem also allows us to leverage the rich literature of numerical methods that has been developed to empirically discovering lumpable partitions of a Markov chain. Thanks to this link, those methods can be readily used to empowering data-driven discovery of emergent coarse-grainings --- with the trick of applying them not in `real space' but on the space of causal states. Algorithms for finding lumpable partitions are discussed in Section~\ref{sec:lumpability}.

\subsection{The hierarchy of emergent computational levels}
\label{sec:lattice_of_emergence}

The findings discussed above have a number of important consequences. 
First, the collection of all causally closed coarse-grainings can be shown to be hierarchically organised as a lattice --- more precisely, they form a sub-lattice of the lattice of all coarse-grainings. 
Moreover, the fact that for spatial coarse-grainings information closure implies computational closure (discussed in the previous subsection) means that the collection of $\epsilon$-machines of all informationally closed spatial coarse-grainings are also hierarchically organised in \emph{another lattice}: a lattice of strongly-lumpable coarse-grainings of $\epsilon$-machines, whose finest element is the $\epsilon$-machine of the micro level and its coarsest is an $\epsilon$-machine with a single state.

\begin{figure*}[ht]
  \centering
  \if1\compiletikz
  \includetikz{tikz/}{megalattices}
  \else
    \includegraphics{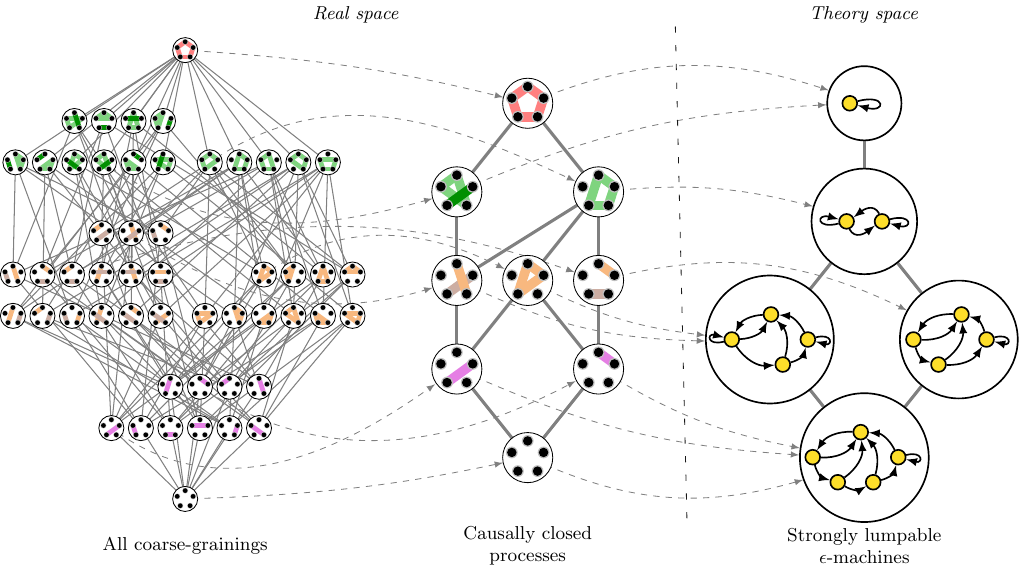}
  \fi 
  \caption{
  \textbf{The multiple hiearchies describing multi-level computations in a complex system.}
  \textit{Left}: Lattice of all possible coarse-grainings, here illustrated for the case of a process that can take five possible values. 
  \textit{Center}: Sub-lattice of only those coarse-grainings that are causally/informationally closed.
  \textit{Right}: Lattice of strongly-lumpable coarse-grainings of the $\epsilon$-machine of the microscopic level. 
  Only the last lattice provides a minimal blueprint that highlights the distinct computational processes, and distinguishes which computations take place at what level.
  }
  \label{fig:megalattice}
\end{figure*}

This sequence of three compatible hierarchies is important, as the mapping from one lattice to the next provide a more refined and simpler view of the multi-level structure of the system of interest (see Figure~\ref{fig:megalattice}). 
Indeed, the lattice of all coarse-grainings grows super-exponentially with the number of values a process can take, and focusing only on causally closed processes provides a first important simplification. 
Furthermore, the mapping from the lattice of all causally-closed processes is found to be sometimes many-to-one, as different coarse-grainings may have an equivalent $\epsilon$-machine. 
This `degeneration' of the mapping gives rise to the notion of \emph{computational equivalence} (Definition~\ref{def:comp_equivalence}), which correspond to coarse-grainings that map into the same $\epsilon$-machines and hence are effectively carrying out the same computations, even if they look different.

Let us further explore the benefits of focusing on the hierarchy of strongly lumpable coarse-grainings of the $\epsilon$-machine instead of the hierarchy of causally closed coarse-grainings. Causal/information closure can take place in processes that are not causally-driven; for example, if $X_t=(Y_t,W_t)$ with $W_t$ being just noise (i.e.,\ an i.i.d.\ process), then $g(X_t)=W_t$ satisfies causal/informational closure while having no real causal structure. 
Our approach identifies such coarse-graining as being computationally trivial, being mapped to the top of the lattice of $\epsilon$-machines and hence being computationally equivalent to the constant coarse-graining (i.e. the mapping $g(\past{x}{t})=1$ for all trajectories $\past{x}{t}$, whose $\epsilon$-machine has a single causal state). 
More generally, the lattice of causally closed coarse-grainings fails to recognise levels that are computationally equivalent. For example, if $X_t=(U_t,V_t)$ with $U_t$ and $V_t$ being two copies of the same process (i.e.\ $U_t=\phi(V_t)$ with $\phi$ a bijection), then both $g_1(X_t)=U_t$ and $g_2(X_t)=V_t$ may be regarded as two distinct emergent levels --- while they should be considered to be copies of the same process.

The above examples illustrate how an analysis based solely on causal/informational closure can potentially overestimate the effective number of different emergent macro-scales that a system of interest may have.   
Indeed, it is the hierarchy of strongly lumpable coarse-grainings of the $\epsilon$-machines, and not the other ones, what determines what computations are taking place where by revealing which levels are running which software. 
For this reason, we propose \emph{to regard the hierarchy of strongly lumpable coarse-grainings of the $\epsilon$-machine provide as the natural blueprint of the functional architecture of a complex system}.

Let us illustrate the power of the hierarchy of $\epsilon$-machines as a guiding blueprint in a simple analysis. For this, let's consider a microscopic process $\bm X$, a causally closed macroscopic process $\bm Z$, and the trivial constant coarse-graining $\bm 1$ (which assigns the number 1 to all possible states of the micro process). These three processes form a totally ordered set in `real space' with $\bm X$ below, $\bm 1$ above, and $\bm Z$ between (indeed, $\bm 1$ is always a --- rather trivial --- coarse-graining of $\bm Z$ too). How does the resulting hiearchy of $\epsilon$-machines of these processes look like? Despite of the specific properties of $\bm X$ and $\bm Z$, an analysis based on the $\epsilon$-machines reveals that the computations done by $\bm Z$ can be described as falling into one of four possible scenarios (see Figure~\ref{fig:minimal_example}):
\begin{itemize}
    \item[a)] \textit{Non-trivial computation different from microscale}: the $\epsilon$-machine of $Z_t$ is non-trivial (i.e.\ different from the $\epsilon$-machine of $\bm 1$), and also different from the one of $X_t$.
    \item[b)] \textit{Non-trivial computation equivalent to microscale}: the $\epsilon$-machine of $Z_t$ is non-trivial, but is equal to the one of $X_t$.
    \item[c)] \textit{Trivial computation different from microscale}: $\epsilon$-machine of $Z_t$ is trivial and different from the one of $X_t$.
    \item[d)] \textit{Trivial computation equivalent to microscale}: $X_t$ is i.i.d., so there are no actual computations taking place in it or any of its coarse-grainings.
    \end{itemize}
Please note that while the lattice of coarse-grainings in `real space' is always the same, the lattice of the corresponding $\epsilon$-machines clearly illuminates the character of the computations being done.
\begin{figure}[t]
  \centering
  \if1\compiletikz
  \includetikz{tikz/}{MinimalExample}
  \else
    \includegraphics{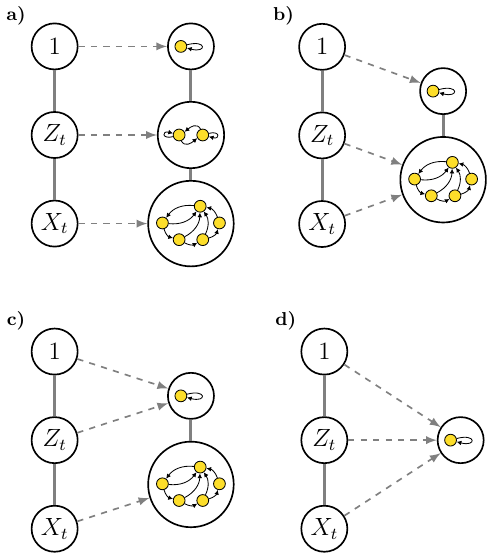}
  \fi 
  \caption{\textbf{Possible computational architectures of an emergent macroscopic level}. Our theory shows that the computations carried out by a causally closed process $\bm Z$ with respect to a microscopic process $\bm X$ and the trivial coarse-graining $\bm 1$ can be categorised within four groups, illustrated here. The computations are the same as the ones at the microscale if the $\epsilon$-machine of $\bm X$ and $\bm Z$ are equivalent (as in \textbf{b} and \textbf{d}), and are trivial if the $\epsilon$-machines of $\bm Z$ and $\bm 1$ are equivalent (as in \textbf{c} and \textbf{d}). At the left of each subplot is the lattice of coarse-grainings in real space, which is the same for the four cases; at the right is the lattice of corresponding $\epsilon$-machines in theory space, which better illustrates the effective computational structure of the system.}
\label{fig:minimal_example}
\end{figure}

In summary, our theory reveals that it is the hierarchy of strongly lumpable causal states, and not the one of causally/informationally closed spatial coarse-grainings, what provides the most illuminating blueprint of the software-like processes running over a give microscopic layer at which different computations take place. 
By grouping together all computationally equivalent processes, the lattice of the resulting $\epsilon$-machines is generally simpler than the other lattices, and provides a more accurate representation of the computations happening at different levels. These ideas are illustrated in a series of case studies presented in the next section.

\begin{figure*}
  \centering
  \if1\compiletikz
  \includetikz{tikz/}{eca}
  \else
  \includegraphics{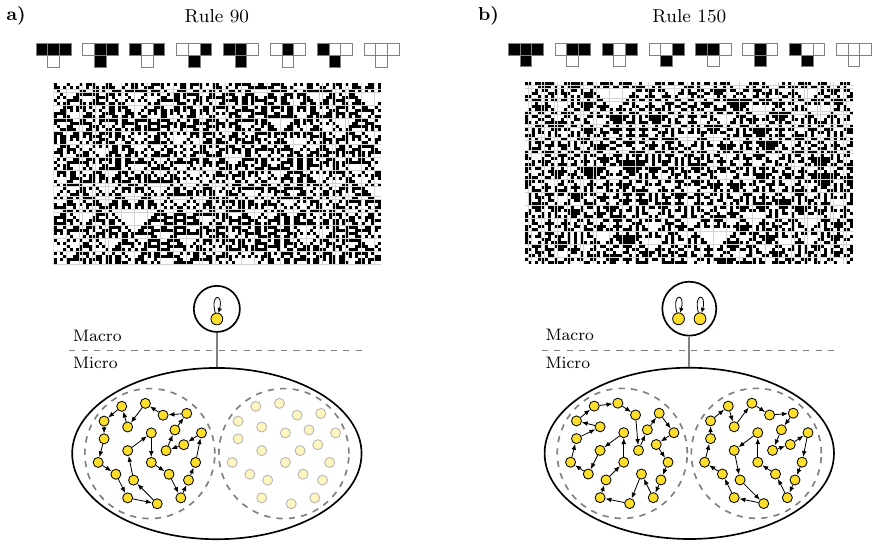}
  \fi 
  \caption{
  \textbf{Conserved quantities in elementary cellular automata}. Illustration of the computations associated to different types of conserved quantities. 
  \textbf{a)} Rule 60 forces configurations to have even parity. Hence, the parity is a conserved quantity which is computationally trivial, akin to case (c) in Figure~\ref{fig:minimal_example}. 
  \textbf{b)} In contrast, rule 150 keeps the parity of the initial condition. Hence, while the parity is also a conserved quantity for these dynamics, the computations associated with it are non-trivial, akin to case (a) in Figure~\ref{fig:minimal_example}. 
  }
    \label{fig:ECAs}
\end{figure*}

\section{Case studies}
\label{sec:examples}

This section presents a number of examples that help to illustrate, exemplify, and develop further intuitions about our theory. The investigated systems include various paradigmatic models of complexity including cellular automata, diffusion processes, Ising models, random walks over networks, and recurrent neural networks.

\subsection{Invariant quantities}

Let us start by considering a system whose microscopic evolution is
described by the time series $\bm X$, and has a scalar coarse-graining $Z_t = f(X_t)$ such
that $Z_t=Z_{t_0}$ for all $t$ --- i.e.,\ $\bm Z$ is constant for all values of $t$, being an invariant property of
$\bm X$. Such invariants may, or may not, depend on initial condition. A direct calculation shows that invariants are informationally closed, as
\begin{align*}
    I(\past{X}{t} ; \finfut{Z}{t+1}{L} | \past{Z}{t}) 
    &= H(\finfut{Z}{t+1}{L}|\past{Z}{t}) - H(\finfut{Z}{t+1}{L} | \past{X}{t} ) \\
    &= H(Z_{t_0} | Z_{t_0}) - H(Z_{t_0} | \past{X}{t})\\
    &=0~.
\end{align*}

Invariants may be computationally trivial (i.e.\ computationally equivalent to a constant coarse-graining, as in classes (c) or (d) in Fig.~\ref{fig:minimal_example}) or not (i.e.\ classes (a) or (b)), depending on their entropy. Specifically, a Markov process has invariants that are computationally
non-trivial when the state space of the Markov chain can be divided in
regions such that the process can move between states within individual
regions but not between states of different regions (i.e., if the Markov chain is reducible). If that is the
case, the identity of each isolated group is a non-trivial invariant,
and the corresponding coarse-graining looks like a Markov chain that only has 
self-transitions. In contrast, a computationally trivial
invariant is a way of describing the part of the phase space that
is not accessible. For example, if $Z_t=f(X_t)=c$ is a constant coarse-graining, that means that all micro-states $X'_t$ such that $f(X'_t)\neq c$ cannot occur.

Let us illustrate these ideas in an concrete example based on cellular automata (see Figure~\ref{fig:ECAs}). Consider an elementary cellular automaton whose state of its $n$ binary cells at time $t$ is denoted by $X_t = (X_t^{(1)}, \dots, X_t^{(n)})$~\cite{wuensche1992global}. Let's focus on the so-called `rule 150', whose temporal evolution is equivalent to a triple \texttt{xor} logic gate:
\begin{equation}
    X^{(k)}_{t+1} \equiv X_t^{(k-1)} + X_t^{(k)} + X_t^{(k+1)} \Mod{2}
\end{equation}
for $2\leq k \leq n-1$. For the sake of symmetry, let's also impose circular conditions, so that 
$X^{(1)}_{t+1} \equiv X_t^{(n)} + X_t^{(1)} + X_t^{(2)} \Mod{2}$ and
$X^{(n)}_{t+1} \equiv X_t^{(n-1)} + X_t^{(n)} + X_t^{(1)} \Mod{2}$. 
It can be shown (e.g.\ via direct numerical evaluation) that this system is stationary under the uniform distribution on the $2^n$ possible states. This implies that $I(X_t;X_{t-1}) = H(X_{t+1}) = n$, where the first equality is a consequence of the fact that the automaton's evolution is deterministic.

Consider now a coarse-graining that returns the parity of a given state, i.e.\ $Z_t \equiv \sum_{k=1}^n X_t^{(k)} \Mod{2}$. A direct calculation shows that this is a conserved quantity of the dynamics of rule 150, as
\begin{align}
    Z_{t+1} &\equiv \sum_{k=1}^n X_{t+1}^{(k)} \Mod{2} \\
    &\equiv \sum_{k=1}^n 3 X_t^{(k)} \Mod{2} \\
    &\equiv \sum_{k=1}^n X_t^{(k)} \Mod{2}\\
    &\equiv Z_t~.
\end{align}
Moreover, it can be seen that $H(Z_t)=1$, as the parity of the state
can be 0 or 1 with equal probability if $X$ is initialised at the
uniform distribution. Therefore, while the microscopic system computes
$n$ bits, $Z_t$ only computes one --- in this case, in the form of information storage~\cite{lizier2012local}. This shows that this invariant is not trivial and also not equal to the microstate, hence corresponding to class (a) in Fig.~\ref{fig:minimal_example}.

For a contrast, let us consider another elementary cellular automata: number 60, whose rule of evolution is equivalent to a single \texttt{xor}
\begin{equation}
    X_{t+1}^{(k)} \equiv X_t^{(k)} + X_t^{(k+1)} \Mod{2}.
\end{equation}
It can be shown that, for this case, the parity $Z_t \equiv \sum_{k=1}^n X_t^{(k)} \Mod{2}$ is always even, i.e.\ that $Z_t=0$. This makes $\bm Z$ an invariant whose associated computation, however, is trivial, representing class (c) in Fig.~\ref{fig:minimal_example}. This also reveals, as argued above, that certain microscopic states are not reached by the dynamics --- namely, the states with odd parity (see Figure~\ref{fig:ECAs}).

\subsection{Ehrenfest diffusion model}
\label{sec:ehrenfest}

Let us now study a computationally closed quantity that is not
invariant. For this, let us consider the Ehrenfest model~\cite[~\S{}7.3]{Kemeny_FMC}, which is a popular diffusion model in statistical mechanics. This model describes the behaviour of a gas in a container made of two interconnected chambers (see Fig.~\ref{fig:Ehrnfest}a). The state of the $n$ molecules of the gas at time $t$ is described by a $n$-dimensional binary vector $X_t = (X_t^{(1)},\dots,X_t^{(n)})$, where $X_t^{(k)} \in \{0,1\}$ determines in which of the two chambers the $k$-th molecule is located. The dynamics of the gas are then established as follows: 
\begin{itemize}
    \item At each timepoint, one of the $n$ molecules is chosen randomly with equal probability.
    \item With probability $q$, move the chosen particle to the other chamber. Otherwise, leave the molecule in its current chamber.
\end{itemize}
The system has two parameters: the total number of particles $n$, and
$q\in[0,1]$ that regulates how easy is for molecules to move between
the two chambers. Hence, if there are $z\in\{0,\dots,n\}$
particles in the first chamber, there is a probability of
$z/n$ of choosing one of them. Therefore, one can find that there are
three possible actions: (i) move a molecule from the first to the
second chamber with probability $qz/n$, (ii) move a particle from the
second to the first chamber with probability $q(n-z)/n$, or (iii) leave the particles
the way they are with probability $1-q$. These dynamics on
$\bm X$ are equivalent to a random walk (i.e., a Markov chain) on the edges of an
$n$-dimensional cube, as each step involves at modifying the state
along not more than one dimension.

\begin{figure}[!t]
  \centering
  \if1\compiletikz
  \includetikz{tikz/}{ehrenfest}
  \else
    \includegraphics{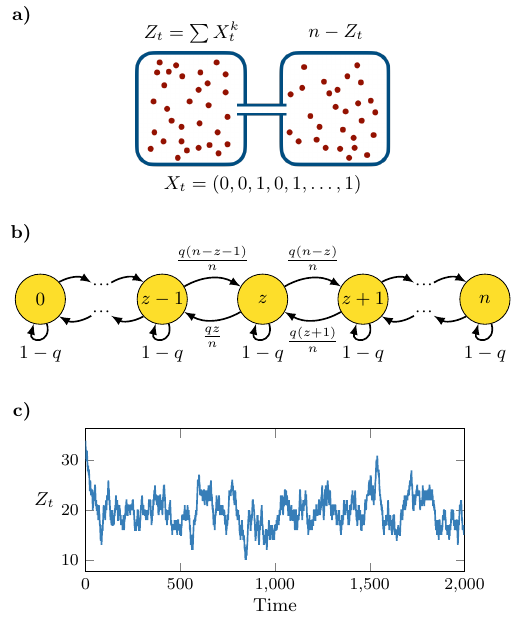}
  \fi 
  \caption{\textbf{Ehrenfest diffusion model.} \textbf{a)} The model considers particles contained in two connected chambers. The microscopic description of the system ($X_t$) is a binary vector that specifies in which container is each particle, while the macroscopic description ($Z_t$) is the number of particles in the left chamber.
  \textbf{b)} Illustration of the finite state machine description of the $\epsilon$-machine corresponding to the macroscopic variable. 
  \textbf{c)} One realisation of the dynamics of the macroscopic process of a system of $n=40$ particles, which naturally oscillates around $n/2$.}
  \label{fig:Ehrnfest}
\end{figure}

Let us now consider the following coarse-graining:
\begin{equation}
    Z_t = \sum_{k=1}^n X_t^{(k)}~,
\end{equation}
which tracks the number of molecules in each partition. 
At each timepoint, $Z_t$ can either increase by 1, decrease by 1, or
stay the same; however, for $n\gg 1$ there is a trend for $Z_t$ to move towards $n/2$, which makes the model a useful testbed to study thermalisation processes~\cite{bingham1991fluctuation}.

It can be shown that $\bm Z$ is a
(time-homogeneous) Markov chain~\cite{Kemeny_FMC}, which implies that only the number of particles --- and not their identities --- in each chamber matters for the evolution of $Z_t$. The finite state machine equivalent of $\bm Z$
is given in Figure~\ref{fig:Ehrnfest}b. Moreover, it can be shown~\cite[p.~170]{Kemeny_FMC} that $\bm Z$ is a Markov chain irrespective of the initial distribution of $\bm X$ --- i.e.\ that it is strongly lumpable (see Definition~\ref{def:lumpability}). Thanks to Proposition~\ref{prop:Markov_Lumpability}, this implies that $\bm Z$ is causally closed in the sense of Definition~\ref{def:causal_closure}, and furthermore is also computationally closed thanks to Theorem~\ref{teo:compu_closure}.

The implications of these results can be contemplated from various angles. 
From a interventional perspective, this implies that if one wants to
control the evolution of the number of particles in each chamber, one does not need to worry about the exact
placement of individual particles $X_t$ beyond what is specified by
$Z_t$ --- that is, setting $Z_t$ gives all the control that is possible about future values. 
From the point of view of simulations, this implies that one can
investigate the dynamics of $\bm Z$ (as shown in Fig.~\ref{fig:Ehrnfest}b) by direct simulation of their Markov
dynamics while disregarding $\bm X$. 
This is guaranteed by the fact that, thanks to causal closure, including the full state $\bm X$ into the simulation adds nothing relevant to the evolution of $\bm Z$. 
Finally, from an theoretical perspective, these results imply that this system has a non-trivial macroscopic level that runs and processes information irrespective of the microscopic details in a software-like manner. Following Proposition~\ref{pro:info_decomposition}, the total information processed by the system at each time-step (given by $I(\bm X_t; \bm X_{t+1})$) can be decomposed into the portion that is processed at the macro level (given by $I(\bm Z_t; \bm Z_{t+1})$) and the one that is not (given by $I(\bm X_t; \bm X_{t+1})-I(\bm Z_t; \bm Z_{t+1})$). 
Overall, these findings enable a deeper view on the dynamics of the Ehrenfest model, revealing specific ways in which it can be efficiently predicted and controlled.

\subsection{Ising model with Glauber dynamics}
\label{sec:ising}

Let us now study another open system, this time including interactions between its microscopic components. Consider a collection of $n$ magnetic spins, whose state at time $t$ is described by the vector
$X_t = (X_t^{(1)},\dots,X_t^{(n)})$, where $X_t^{(k)} = 1$ if the
$k$-th spin is pointing up and $X_t^{(k)} = -1$ if it is pointing
down. The energy of the system on state $X_t$ is determined by a
Hamiltonian $\mathcal{E}(X_t)$. Furthermore, let's consider standard
Glauber dynamics~\cite{glauber1963time,levin2017markov} (which is
closely related with the so-called kinetic Ising
model~\cite{aguilera2021unifying})  observing the following transition
probability:
\begin{equation}\label{eq:prob_glauber}
    p_{X_{t+1}|X_{t}}( x' | x)=
\begin{cases}
\frac{1+ \tanh{ \beta \Delta\mathcal{E}}}{2n} & \text{if } x' \in \mathcal{B}(x),\\
1 - \sum_{x'\in\mathcal{B}(x)} \frac{1+\tanh{\beta\Delta{E}}}{2n} &\text{if }x'=x,\\
0 &\text{otherwise,}
\end{cases}
\nonumber
\end{equation}
where $\mathcal{B}(x)$ are the set of configurations that differ from $x$ in exactly one spin, and $\Delta\mathcal{E} = \mathcal{E}(x)-\mathcal{E}(x')$~\footnote{Please note that one could sub-sample the temporal evolution of this process (so to account for updates of $k$ spins instead of only one), and the arguments that follow still hold.}. These statistics can be conceived as arising from the following procedure:
\begin{itemize}
\item While being at state $x$ at timepoint $t$, choose one of the $n$ spins
  randomly with equal probability. Denote by $x'\in\mathcal{B}(x)$ the
  state that is equal to $x$ except with one selected spin being
  flipped.
\item Calculate the energy difference between $x$ and $x'$ 
  as $\Delta\mathcal{E} = \mathcal{E}(x)-\mathcal{E}(x')$.
\item Adopt $x'$ as state in timepoint $t+1$ with probability
  $p = (1+ \tanh{ \beta \Delta\mathcal{E}})/2$, otherwise stay in $x$.
\end{itemize}
It can be shown~\cite{glauber1963time} that those dynamics satisfies detailed balance, and gives rise to a unique stationary distribution that corresponds to the standard Boltzmann-Gibbs distribution given by 
\begin{equation}
    \tilde{p}(x) = \frac{e^{-\beta \mathcal{E}(x)}}{\mathcal{Z}}~,
\end{equation}
with $\mathcal{Z}=\sum_x e^{-\beta \mathcal{E}(x)}$ being the partition function.

Let us now consider the coarse-graining $Z_t = \mathcal{E}(X_t)$, which establishes the dynamics of the energy of the system. 
For simplicity of the analysis, let's assume that the system is such that its Hamiltonian is invariant under spin permutations~\footnote{This is the case, for example, when the systems observes a uniform and fully-connected graph of interactions $J_{i,k}$.}. 
Using this symmetry, it is possible to show that
\begin{equation}
p_{Z_{t+1}|X_t}(z|x) = \psi\big(z,\mathcal{E}(x)\big)~,
\end{equation}
where $\psi(z,e)$ is a function of the binary vector $z$ and the scalar $e$. This implies that the $\upsilon$-machine of $\bm Z$ is equal to its $\epsilon$-machine --- i.e.\ that $\bm Z$ is causally closed. 
Thanks to Theorem~\ref{teo:compu_closure}, this in turn implies that $\bm Z$ is also computationally closed. It can be shown $\bm Z$ is generally not computationally equivalent to $\bm X$ as the $\epsilon$-machine of $\bm X$ is generally richer than the one of $\bm Z$ (as the specific configuration $x$ often makes a difference for $p_{X_{t+1}|X_t}(x'|x)$), hence belonging to class (a) in Figure~\ref{fig:minimal_example}.

\begin{figure}[t!]
  \centering
  \if1\compiletikz
  \includetikz{tikz/}{ising}
  \else
    \includegraphics{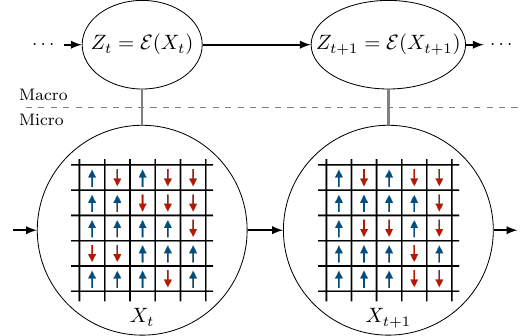}
  \fi 
  \caption{\textbf{Energy dynamics of an Ising model are causally closed.} When considering the Ising model under Glauber dynamics, it can be shown that its energy is a macroscopic variable whose dynamics are causally --- and hence also computationally --- closed.
  }
  \label{fig:Ising}
\end{figure}

The fact that the Hamiltonian of an Ising model is computationally
closed under Glauber dynamics implies that the dynamics of the energy
does not depend on microscopic details, but only on the current energy
level. This has similar implications for simulations and interventions
as in the case of the Ehrenfest model: a simulation of energy dynamics
in absence of its microscopic instantiation can be exact, and macroscopic
interventions to drive its future evolution can exhert as much control as it is possible.

\subsection{Random walks over modular networks}

Let us now study a system that has several nested levels of causally
closed dynamics. For this, consider a random walk over a weighted
network with a modular structure, made by $M$ strongly
connected communities of nodes that are weakly connected between each other. For
simplicity, let's assume that the links of this network are
characterised by two parameters: the weight of the connections within
individual modules, $r_\text{w}$, and the weight of the connection
between different modules, $r_\text{b}$, with $r_\text{w} \gg r_\text{b}$.
Additionally, let's consider a random walker which moves from node to node on this
network over discrete time-steps; it decides to move to another node with 
probability $q$, and chooses its destination with probability 
proportional to the strength of the link connecting them.

Let us calculate the statistics of such random walk. For this, 
let's denote as $c_x$ the label of the community to which node $x$ belongs and $\mathcal{N}$ the set of nodes of the network. 
Consider $X_t\in \mathcal{N}$ to be the state of the random walker at time
$t$, whose transition probabilities are
\begin{equation}
    p_{X_{t+1}|X_{t}}( x' | x) = 
    \begin{cases}
        1-q \quad &\text{if }x = x', \\
        q r_\text{w} /\phi(c_x) \quad &\text{if $c_x= c_{x'}$,} \\
        q r_\text{b} /\phi(c_x) \quad &\text{if $c_x \neq c_{x'}$.}
    \end{cases}
\end{equation}
Above, $\phi(c_x)= r_\text{w} |c_x| + r_\text{b} (n-|c_x|-1)$ is a normalisation term to guanrantee that the probabilities sum up to one, which depends only on the size of the current community, denoted by $|c_x|$. With this conditional probability, the resulting random walk will tend to roam around a given module, and will jump between modules only occasionally due to the significantly greater connectivity within
than between modules.

Let us now consider the following coarse-graining: $Z_t = c_{X_t} = f(X_t)$, where $f:\mathcal{N}\to \{1,\dots M\}$ is a function that returns the index of the module to which the given node --- its argument --- belongs. One can show that 
\begin{equation}
    p_{Z_{t+1}|X_{t}}( z | x) = 
    \begin{cases}
         |z| \cdot r_\text{w} /K \quad &\text{if $z= f(x)$,} \\
         |z| \cdot r_\text{b} /K \quad &\text{if $z \neq f(x)$.} \\
    \end{cases}
\end{equation}
where $z\in\{1,\dots,M\}$ is a community index, $|z|$ its size, and $K$ is a normalising
constant. The fact that $p_{Z_{t+1}|X_{t}}( z | x)$ only depends on
$f(x)$ but not on $x$ itself shows that $\bm Z$ is causally closed. 
Furthermore, thanks to Theorem~\ref{teo:compu_closure}, $\bm Z$ is also computationally closed. Additionally, if the probability of staying in the same node ($1-q$) is different from the one moving to another node in the same community, then $X$ and $Z$ are not computationally equivalent. Indeed, in that case $p_{X_{t+1}|X_t}(x'|x)$ changes for different nodes $x$ within the same community, making the $\epsilon$-machine of $\bm X$ have more states than the $\epsilon$-machine of $\bm Z$ --- and hence belonging in class (a) in Figure~\ref{fig:minimal_example}.

Under some circumstances, this system exhibits multiple nested levels that are causally closed. To see this, let's consider the coarse-graining $W_t = h(Z_t)$ which identifies communities of different cardinalities but does not disambiguate between communities of the same size. For example, if the network considers two communities of 10 elements and three of 20 elements, $W_t$ only discriminates if the random walker is located in a community with 10 elements or in one with 20. Following an argument analogous to the one made with respect to $\bm Z$, it can be shown that $\bm W$ is also causally closed: by considering the conditional probability $p_{W_{t+1}|X_t}(w|x)$, one can show that it depends only in $h\big(f(x)\big)$, which shows that the $\upsilon$-machine of $\bm W$ is equal to its $\epsilon$-machine and hence it is causally close. Hence, thanks to Theorem~\ref{teo:compu_closure}, $\bm W$ is also computationally closed.

\begin{figure}[t!]
  \centering
  \if1\compiletikz
  \includetikz{tikz/}{network}
  \else
    \includegraphics{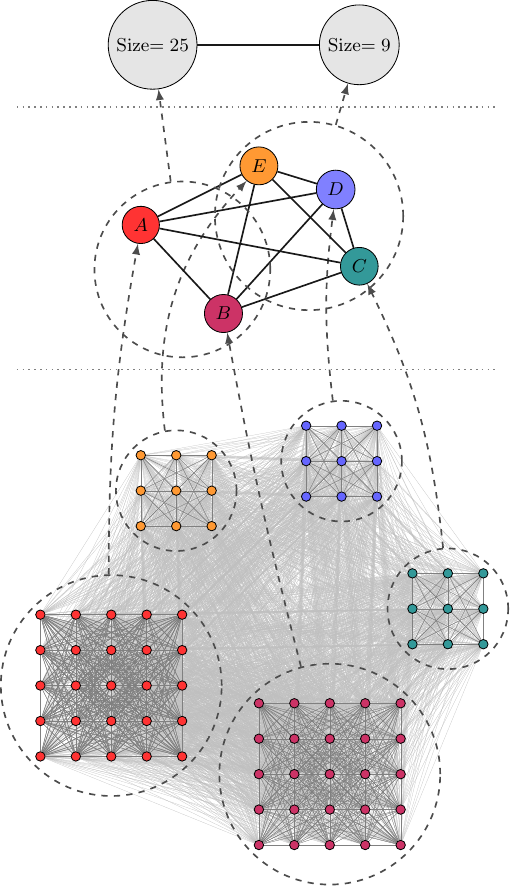}
  \fi 
  \caption{\textbf{Causally closed coarse-grainings of a random walk over a network.} 
  A random walk on a modular network can be coarse-grained such that the dynamics over the module's labels is causally closed. Furthermore, by considering equivalence classes of modules given by their size provides a further causally closed macroscopic process.}
  \label{fig:network}
\end{figure}

This analysis shows that the computations involved in this random walk can be construed as taking place in (at least) three nested levels: first the cardinality of the future community is determined, then the specific identity of the community is chosen, and finally the specific node within that community is decided. Crucially, Proposition~\ref{pro:info_decomposition} states that the information processed by the computations taking place at high levels is done in isolation of what happens in the lower ones. This has similar consequences as for previous examples: macroscopic dynamics can be predicted and simulated in isolation of microscopic details, and macroscopic interventions guarantee optimal macroscopic controllability.

\subsection{Agent-based simulation}
\label{sec:agent_based_modelling}

Further examples for systems with computational closure can be found
in the literature of agent-based models, which provides useful tools to investigate collective human and animal behaviour~\cite{crooks2011introduction}.
Agent-based models are usually Markov chains defined over large state-spaces, which establish the possible configurations for a collection of interacting agents. 
In most cases the aim of a simulation is not to identify the exact behaviour of individual agents, but rather to characterise their aggregate statistics --- e.g., the number of agents in a given state. These aggregate statistics can be seen as 
coarse-grainings $\bm Z$ of the underlying agent-based model $\bm X$.

The literature has studied settings under which agent-based models are
lumpable --- i.e., conditions under which the sequence of aggregate
statistics is itself a Markov chain. Examples for this line of work
are studies on opinion dynamics such as the voter
models~\cite{Banisch_ABM,Lamarche-Perrin_ABM} or agent movement on
graphs~\cite{Geiger_MarkovFlee}. Lumpability is a useful feature in
agent-based modelling, as it implies that one can avoid running
the whole model $\bm X$ and instead run a reduced Markov chain that gives the evolution on the coarse-graining $\bm Z$. Indeed, as shown in Proposition~\ref{prop:Markov_Lumpability}, in this case the coarse-graining $\bm Z$ is computationally closed.

To illustrate this principle, let us follow Ref.~\cite{Geiger_MarkovFlee} and consider an agent-based model that can be used to investigate agent mobility, e.g.\ to simulate the movement of refugees in conflict locations. 
In this model, $n$ agents randomly move on a graph with $L$ nodes, and $X_t=(X_{t}^1,\dots,X_{t}^n)$ denotes the microscopic state of the system where $X^i_{t}\in\{1,\dots,L\}$ denotes the location of the $i$-th agent at timepoint $t$. At each iteration, an agent decides to leave their current vertex with some probability depending on the attributes of the current vertex. If the agent decides to leave, then it  moves to a neighbouring vertex with a probability depending on the attributes of the respective location and the weight of the connecting edge. This procedure is repeated for all agents and for a predefined number of iterations. 
Simulating the full model has a computational cost that scales linearly with the number of agents. However, the main aim of refugee movement simulation is to identify not individual but collective behaviours --- for example to determine the dynamics of the number
of agents at particular locations (e.g., humanitarian camps). Hence,
the goal is not to study the dynamics of so-called `world configuration' 
$X_t$, but rather to infer properties of a coarse-graining
$Z_t$ defined as the vector of the number of agents at
specific locations. Mathematically, $Z_t=(Z^1_{t},\dots,Z^L_{t})$, where  $Z^\ell_{t}=\sum_{i=1}^n | \{i:X^i_{t}=\ell \}|$ with $|A|$ denoting the cardinality of the set $A$.

Importantly, it has been shown that --- under certain conditions ---
the original agent-based model $\bm X$ is lumpable w.r.t.\ this
coarse-graining, i.e., that the conditional probability of the agent population vector state $Z_t$ depends only on the agent population $Z_{t-1}$ of the previous iteration, and not on the world state $X_{t-1}$~\cite{Geiger_MarkovFlee}. Thanks for Theorem~\ref{teo:lumpability}, it follows that the coarse-graining $\bm Z$ is informationally
closed, and due to Theorem~\ref{teo:compu_closure} it is also computationally closed. 

As a consequence, rather than simulating the full
agent-based model, it is sufficient to simulate the re-distribution of
populations over vertices, i.e., to simulate $\bm Z$. Importantly, the
computational complexity of simulating $\bm Z$ does not depend
strongly on the number of agents $n$, but rather on the number of
locations $L$, which is typically much smaller than the number of
agents. This example illustrates an important aspect of computational closure: computationally closed coarse-grainings may allow exact simulations --- and hence forecasting and predictions --- that may be unfeasible to run otherwise.

\subsection{Memory retrieval and attractor dynamics}
\label{sec:hopfield}

To conclude, let us study a recurrent neural network architecture that models how associative memory works in the human brain~\footnote{For a comprehensive review on the state of the art of the neuroscience of memory, see Ref.~\cite{kandel2014molecular}.}. 
For this, we will focus on the well-known Hopfield network model~\cite{amari1972learning,little1974existence,hopfield1982neural}, which provides a simple but foundational instantiation to the long-standing principle that memories are not stored inside individual neurons, but rather in their synapses~\cite{ramon1894fine,hebb1949organisation}.

Following Ref.~\cite{gerstner2014neuronal}, let us consider a Hopfield network of $n$ neurons, which is a system with Markovian dynamics where neurons are modeled as binary units that can be in states \emph{on} ($X_t^j=1$) or \emph{off} ($X_t^j=-1$). Neurons interact with each other via synapses, whereas the weight of the synapsis from the $i$-th towards the $j$-th neuron is denoted by $w_{i,j}$. The input potential of the $j$-th neuron at time $t$ is then given by
\begin{equation}
    h_t^j = \sum_i w_{i,j} X_t^j~.
\end{equation}
This potential determines the update of the $j$-th neuron as follows:
\begin{equation}\label{eq:prob_loc}
    \mathbb{P}\big\{X_{t+1}^j=1|X_t\big\}
    = g\big(h_t^j\big)~,
\end{equation}
where $X_{t} = (X_t^1,\dots,X_t^n)$ and 
$g$ is a monotonous non-linear function of $h_t^j$. A
common choice is 
\begin{equation}
g(h) = (1+\tanh{\beta h})/2,
\end{equation} 
with the inverse temperature parameter $\beta$ controlling the stochasticity of the
system. In particular, $\beta\to\infty$ leads to deterministic dynamics
given by $g(h)=\big(1+\text{sign}(h)\big)/2$. 
Furthermore, for simplicity it is assumed that the dynamics of the whole system are simply a combination of the individual dynamics, i.e.
\begin{equation}\label{eq:prob_glob}
     \mathbb{P}\big\{X_{t+1} = x_{t+1} |X_t\big\} = \prod_j  \mathbb{P}\big\{X_{t+1}^j=x_{t+1}^j|X_t\big\}.
\end{equation}

How can a Hopfield network store and retrieve memory patterns?
Let us denote the patterns to be stored as $p^1,\dots,p^m$, with $p^\mu=(p_1^\mu,\dots,p_n^\mu)$. Then, one defines the synaptic weights as follows:
\begin{equation}
    w_{i,j} = \frac{1}{n}\sum_{\mu=1}^m p_i^\mu p_j^\mu~.
\end{equation}
Using those weights, the resulting dynamics of the Hopfield network can be shown to have the
mentioned patterns as attractors. This can be seen by noting that the
energy function 
\begin{equation}
    \mathcal{E}(X_t) = -\sum_i\sum_j w_{i,j} X_t^i X_t^j
\end{equation}
is a Lyapunov function of the deterministic dynamics
($\beta\to\infty$), and the $m$ patterns $p^j$ constitute its local minima (see Ref.~\cite{gerstner2014neuronal}). 
Please note that the number of patterns a network can retain depends on
the stochasticity of the system and the desired level of reliability,
and is usually much smaller than the number of neurons~\cite{amit1985storing,amit1987statistical}.

Let us now consider a macroscopic description of the system that is based on the memory patterns instead of neurons. Following Ref.~\cite[Sec.~17.2.3]{gerstner2014neuronal}, we consider the projection of the current neural activity on the $j$-th memory pattern as given by the dot product between them:
\begin{equation}
    Z_t^\mu = \frac{1}{n} \sum_i p_i^\mu X_t^i~.
\end{equation}
Note that $-1\leq Z_t^\mu \leq 1$, with 
\begin{itemize}
    \item $Z_t^\mu=1$ if $X_t$ and $p^\mu$ are equal, 
    \item $Z_t^\mu=-1$ if $X_t$ and $p^\mu$ are opposite, and
    \item $Z_t^\mu\approx 0$ if $X_t$ and $p^\mu$ are
      uncorrelated (for large $n$, which we assume here).
\end{itemize}
Finally, we define $Z_t = (Z_t^1,\dots,Z_t^m)$ as the vector of all $m$ projections at time $t$ (see Figure~\ref{fig:hopfield}).

\begin{figure}[t!]
  \centering
  \if1\compiletikz
  \includetikz{tikz/}{hopfield}
  \else
    \includegraphics[width=\columnwidth]{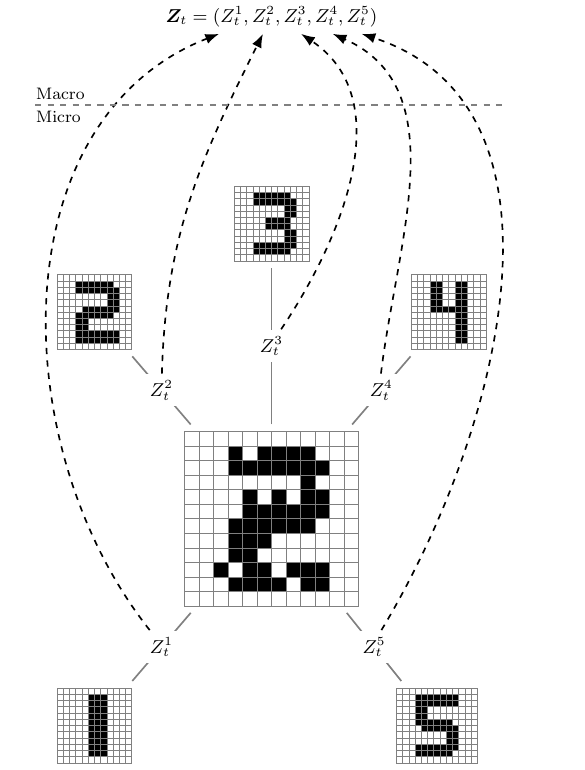}
  \fi 
  \caption{\textbf{Hopfield network compute memory retrieval on a causally closed macroscopic level}. The state of a Hopfield network is determined by the activity of each of the involved neurons, here represented as a square grid. Nonetheless, the similarity between the present pattern and the patterns that the network stores (denoted by $Z_t^\mu$, with $\mu \in \{1,2,3,4,5\}$ in the figure), which determines to which of the stored patterns is more similar to the current configuration. Our results show that $Z_t = (Z_t^1,\dots,Z_t^5)$ is a causally closed coarse-graining of the neural system, which critically determines the memory retrieval process.}
  \label{fig:hopfield}
\end{figure}

Let us show that $\bm Z$ is computationally closed. 
For this, first we show that the neuron potential can be
re-written in terms of these projections as follows:
\begin{equation}
     h_t^j = \sum_i w_{i,j} X_t^j = \frac{1}{n}\sum_i\sum_\mu p_j^\mu p_i^\mu X_t^i = \sum_\mu p_j^\mu Z_t^\mu.
 \end{equation}
This, combined with Eqs.~\eqref{eq:prob_loc} and \eqref{eq:prob_glob}, imply that the transition probabilities $\mathbb{P}\big\{X_{t+1} = x_{t+1} | X_t\big\}$ depend only on $Z_t$, which in turns implies that $H(Z_{t+1}|X_t) = H(Z_{t+1}|Z_t)$ guaranteeing that $\bm Z$ is informationally closed. This, in turn, implies --- due to Theorem~\ref{teo:compu_closure} --- that $\bm Z$ is computationally closed with respect to $\bm X$.

These results imply that Hopfield networks implement processes of memory retrieval --- more specifically, the calculation of the similarity between the present and the stored patterns --- at a macroscopic level, such that its processing of information takes place independently of the microscopic level of individual neurons.

\section{Discussion}
\label{sec:discussion}

Understanding the nature of emergent macroscopic phenomena is a fundamental theoretical challenge, and also central for advancing in important open questions in various branches of science. 
In this paper we introduce a theory that investigates emergent phenomena from a computational perspective, illuminating their inner workings via core principles of theoretical computer science. 
Specifically, the theory introduced in this paper explains what it means for a natural system to display software-like macroscopic processes in a way that is theoretically sound and practically useful. In the sequel, we highlight various implications of the theory, while also discussing related work and outlining future research directions.

\subsection{A computational blueprint of the functional architecture of complex systems}

By taking inspiration on how software works, our approach articulates a view on emergence in terms of macroscopic processes that are self-contained in causal, informational, and computational terms. 
Our mathematical formalism revealed a single condition (namely, the equivalence between $\epsilon$-machines and $\upsilon$-machines) under which macroscopic processes resulting from spatial coarse-grainings guaratee all three properties. 
Processes satisfying these properties possess a number of interesting features. 
First, due to causal closure, all interventions that drive such processes are accounted by events of the processes themselves --- and, in that sense, it can be said that such a process `causes itself.' 
Second, due to information closure, microscopic events don't have any additional predictive power over such process that couldn't be found in the process itself. This implies that optimal prediction of such macroscopic processes can be attained while disregarding all details about their microscopic instantiation. 
Finally, due to computational closure, such processes run a particular subset of the computations carried out by the whole system in an autonomous manner. 
As a consequence of this, such processes can be thoroughly described without relying on the full microscopic theory, but using solely a simpler, coarse-grained, self-contained version of it.

In addition of characterising the properties of emergent macroscopic processes individually, our theory also accounts for the properties of them as a group. Indeed, our formalism revealed that the collection of all emergent processes are hierarchically organised into a lattice. Furthermore, our results show that this lattice can become substantially simpler when considered not in `real space' (i.e.\ in terms of coarse-grainings of states) but in `theory space' (i.e.\ as nested computational machines), which only accounts for processes that are computationally distinct. 
Our results show how this hierarchy of nested computational machines delineates a blueprint for the functional architecture of the system, while also characterising the computations running at each level.

Overall, the theoretical framework introduced in this paper establishes a rigorous foundation to study emergent processes akin of human-designed software on natural systems. 
Specifically, our theory provides concrete criteria to answer under what conditions natural systems can be said to be `running software', and what are the consequences of this.

\subsection{The benefits of emergent macroscopic processes}

The examples investigated in Section~\ref{sec:examples} provide some hints of why complex natural systems may exhibit macroscopic levels that are software-like.

First and foremost, causally closed levels can be efficiently controlled from just macroscopic interventions, without needing to intervene on microscopic conditions. For example, if one needs to control the dynamics of the number of particles in each chamber in the Ehrenfest model (see Section~\ref{sec:ehrenfest}), or the total energy in an Ising model (see Section~\ref{sec:ising}), one can do this to the highest possible precision by setting an initial condition that only constrains the macroscopic property in question but otherwise leaves the remaining details unspecified/random. 
These results have an important practical implication: these macroscopic thermodynamic quantities can be optimally controlled by the actions of macroscopic organisms, whose scope of actions is restricted to set values on macroscopic variables. 
This property may explain the fact that macroscopic organisms can be highly causally-effective on their environments without the need of having to intervene on microscopic variables.

Additionally, the analysis done over Hopfield networks (Section~\ref{sec:hopfield}) opens the door to interesting analyses of biological process in general, and neural systems in particular. Our results show how the neural dynamics associated to a process of memory retrieval can take place at a macro level, being driven by macroscopic neural patterns while being independent of the implementation of these on the firing of individual neurons. As such, this analysis establishes an objective procedure to evaluate when a neural process is implementing a computation at a higher level, having interesting parallels with the classic distinction of levels of analysis introduced by Marr~\cite{marr2010vision}. 
Also, these results provide a concrete operationalisation to the notion of multiple realisability~\cite{bechtel1999multiple,polger2016multiple}, articulating how equivalent software-like process may take place over different microscopic systems (see Section~\ref{sec:softwareness}). However, this approach does not satisfies --- at least in principle --- the idea of substrate independence~\cite{bostrom2003we}, as the causal relationships of the microscopic process determine which macroscopic processes can take place over it, and hence some processes may only be compatible with specific types of substrates. 
Building on these insights, future work may investigate the applicability of such approaches to identify algorithms implemented on different biological processes --- neural and beyond.

Finally, causally closed processes have interesting properties not only for the system itself, but also for the scientist that study them. As remarked in the study of agent-based modelling (Section~\ref{sec:agent_based_modelling}), computationally closed levels allow the simulation of aspects of these processes in an efficient yet precise manner. 
Indeed, the evolution of causally closed levels can be thoroughly simulated without accounting for the microscopic details that underlie it, which can enable efficacious exact simulation procedures of processes that may be unfeasible to run otherwise.

\subsection{Related literature and future work}

The theory introduced here extends the framework of \emph{dynamical independence} proposed in Ref.~\cite{barnett2021dynamical}, from which it takes direct inspiration. 
Dynamical independence proposes to identify emergent macroscopic process via information closure, inheriting its strengths and weaknesses (for example, as discussed in Section~\ref{sec:lattice_of_emergence}, 
noise processes may satisfy dynamical independence while having no underlying causal structure). 
One way to circumvent these limitations is by requiring information closure to be `non-trivial', i.e.\ for emergent levels to actually carry some information processing~\cite{bertschinger2006information}. 
In this work we take a different approach by interpreting this limitations as symptoms of a deeper issue: that information-theoretic metrics alone are limited in the degree they can assess the mechanisms driving the data-generating process. 
This greatly limits the capability of these frameworks not only to determine if a macro-variable is emergent or not, but to provide a more detailed assessment of how this may be taking place. 
Importantly, it has been shown that statistical patterns can be qualitatively different from their data-generating mechanisms~\cite{rosas2022disentangling}. 
By embracing principles of computational mechanics and the related notion of \emph{intrinsic computation}~\cite{feldman2008organization,crutchfield2010introduction}, our proposed theory takes a first step in the direction of characterising the mechanisms driving emergent processes from a computational perspective.

The framework presented here have similar motivations to recent work reported in Ref.~\cite[Sec.~VIII]{rupe2023principles}. This approach characterises the degree of dependency between microscopic and macroscopic levels combining principles of computational mechanics with insights and formal tools from fluid dynamics and dynamical systems theory, being highly complementary with the approach and tools developed here and well-suited for analysing systems with continuous variables. Our framework also has similar motivations and aims (although relying on a radically different approach) with work reported in Ref.~\cite{costa2023maximally}, which introduces scalable methods to model hidden dynamics driving time series data and hence being promising in terms of its practical applicability. Future work may try to combine the complementary strengths of these approaches.

More broadly, there is an extensive literature considering efficient representations of microscopic processes, which selects macroscopic processes via a variety of criteria. Classic approaches include `lumpability' techniques (see Section~\ref{sec:lumpability} and Refs.~\cite{auger2008aggregation,simon1961aggregation,white2000lumpable}) and other aggregation approaches (e.g.\ Ref.~\cite{munsky2006finite,garcia2010minimal}). 
An interesting angle on this problem is taken by Ref.~\cite{wolpert2014optimal}, which considers coarse-graining processes that balance predictability at the macro level with the degree of computational complexity involved in the coarse-graining mapping relating microscopic and macroscopic levels. 
This idea is related to a definition of emergence in terms of computational mechanics properties presented in Ref.~\cite[Sec.~11.2.1]{shalizi2001causal}, based on the ratio between the information processed by coarse-grainings and the informational cost of their $\epsilon$-machine (i.e.\ the entropy of the causal states). 
Including the computational complexity of the coarse-graining, and potentially linking it with thermodynamic considerations, are important avenues for future work.

The framework presented here provides a natural complement to the work presented in Ref.~\cite{horsman2014does}, which describes natural computation as the dual of prediction --- i.e., while prediction is a logical procedure that tell us the result of the evolution of a system without running it, computation corresponds to the act of letting a system run so that its evolution provides an answer a query.
While our theory can also be interpreted in terms of relationships between `physical' and `abstract' processes and their predictability, our framework is oriented to highlight the role of micro-to-macro relationships, being closer in spirit to the recent literature on information-theoretic methods to quantify emergence~\cite{seth2010measuring,hoel2013quantifying,klein2020emergence,rosas2020reconciling,varley2021emergence,barnett2021dynamical} and other theoretical work focused on the relevance of coarse-grainings~\cite{flack2017coarse}. 
Thus, further developing the notion of `computation' implied in the present framework is a relevant avenue of future work, either by working along the lines of Ref.~\cite{horsman2014does}, or by further developing the framework with richer constructs from the theoretical computer science literature --- for instance, by considering other machines from Choamsky's computational hierarchy~\cite{hopcroft2001introduction} instead of focusing solely in deterministic automata.

It is important to highlight that while the present work follows a characterisation of emergence similar to the one investigated by Ref.~\cite{barnett2021dynamical}, other characterisations of emergence exist (see e.g. Refs.~\cite{seth2010measuring,hoel2013quantifying,rosas2020reconciling}). An interesting direction for future work is to investigate the feasibility of operationalising other views on emergence in terms of computational mechanics. 
Also, in this work we conceive causality following the computational mechanics literature, following the principle of `differences that make a difference.' 
However, the quantitative characterisation of causation is a challenging subject, with a wide literature and no agreed-upon approach for its quantification (see e.g.\ \cite{janzing2013quantifying,kocaoglu2017entropic,albantakis2019caused,comolatti2022causal}). Note that while some of the proposed metrics establish causation at the level of individual events, the framework introduced here identified causation of variables in terms of other variables. Extensions from variables to individual events could be attempted via point-wise information-theoretic metrics (see e.g.~\cite{finn2018pointwise,prokopenko2014transfer}), being an interesting avenue for future work.

Some of our results related to causal and informational closure are related with work reported in Ref.~\cite{Pfante_LevelID}, where the authors investigated how informational closure relates to other properties, including causal closure (which they call `commutativity') and Markovianity. Their results imply that 
informational closure is equivalent to causal closure for spatial coarse-grainings, a result we generalise to non-Markovian processes and spatio-temporal coarse graining functions in Theorem~\ref{thm:equivalence}~\footnote{Please note that the definition of informational closure used in Ref.~\cite{Pfante_LevelID} only requires $I(Z_{t+1};X_t|Z_t)=0$ due to Markovianity of $\bm X$, and is hence identical to the characterisation of strong lumpability in~\cite[Th.~9]{GeigerTemmel_kLump}.}.

Finally, it is important to remark that it is not straightforward to apply the present theory to empirical data of large systems. The main limitation is the practical estimation of potentially large $\epsilon$-machines. There is promising work in this direction~\cite{tino2001predicting,shalizi2014blind,marzen2014circumventing}), particularly exploiting the relationship between $\epsilon$- machines and the information bottleneck~\cite{tishby2000information}, and recent extensions of state-space modelling~\cite{gu2023mamba}. 
We leave it to future work to develop suitably efficient and robust estimations procedures.

\vspace{-0.3cm}
\subsection{Final remarks}
\vspace{-0.1cm}
Given the central position that the notion of `computation' has in our modern scientific worldview, it is natural to wonder what a computational perspective can offer to our fundamental understanding of natural processes studied by physics, chemistry, and biology. 
While computer science provides a rigorous and fundamental theory based on the construct of Turing machines~\cite{turing1936computable},  the application of these ideas to concrete scientific scenarios (e.g.\ in physics or neuroscience) is not always straightforward~\footnote{An illustrative example is the old but still ongoing debate in neuroscience of whether the brain ought to be conceived as a computer~\cite{fodor1975language,pylyshyn1986computation,rescorla2017ockham,milkowski2013explaining,richards2022brain} or not~\cite{van1995might,van1998dynamical,smith2005cognition}: while under some definitions the brain is clearly a computer and under others it is not, it is not entirely evident how any of these outcomes advances our actual understanding of how the brain works.}. 
We hope that the theory introduced in this paper may serve as a first step towards a thorough characterisation of emergent processes in a way that advances theoretical discussions and also facilitate breakthroughs in empirical scientific questions.

\section*{Acknowledgements}

We thank Patricio Orio for helpful discussions about cellular automata, and Lionel Barnett, Sebastian Dohnany, and Clemens von Stengel for useful comments about the manuscript. 
F.R. was supported by the Fellowship Programme of the
Institute of Cultural and Creative Industries of the University of Kent, and the DIEP visitors programme at the
University of Amsterdam. The work of B.G. was funded by the European Union’s Horizon Europe research and innovation programme within the KNOWSKITE-X project, under grant agreement No.\ 101091534.
M.G. was supported in part by the Swiss National Science Foundation under Grant 200364.

\appendix*

\section{THE CALCULI OF MULTI-LEVEL COMPUTATIONS}
\label{sec:theory}

Here we present the fundamental mathematical results that are at the basis of our framework, which was informally described in Section~\ref{sec:our_proposal}.

\subsection{Scenario and basic assumptions}

Let us consider a dynamical process of interest whose evolution can be described by a time series --- i.e., a stochastic process $X_t$ sampled at discrete times $t\in\mathbb{Z}$. 
To simplify the results presented in this paper, we adopt two assumptions. First, it is assumed that $X_t$ takes values on a discrete alphabet $\mathcal{X}$. 
Secondly, the statistics of the system of interest (which may be Markovian or non-Markovian) are assumed to be stationary --- i.e.\ to have the same statistics for different timepoints. Extensions for non-stationary processes and/or general alphabets are possible, but will be left for future work.

As for notation, random variables are in general denoted by capital letters (e.g.\ $X$, $Y$) and their realisations by lower case letters (e.g.\ $x$, $y$). 
Time series are denoted by bold letters without subscript $\bm X = \{X_t\}_{t\in\mathbb{Z}}$, finite trajectories by $\finfut{X}{t}{L}:=(X_{t},\dots,X_{t+L})$, the infinite future as $\fut{X}{t} = (X_t,X_{t+1},\dots)$, and the infinite past as $\past{X}{t} = (\dots,X_{t-1},X_t)$. For distributions of random variables, we use the random variables as subscripts, i.e., $p_X(x)=\mathbb{P}(X=x)$. We drop the subscript if it is clear from the context.
The Shannon entropy of a random variable $X$ is denoted by $H(X)$, and the mutual information between $X$ and $Y$ by $I(X;Y)$. 
Finally, we call random variables $X$ and $Y$ equivalent if and only if there is a bijection $f$ such that $X=f(Y)$ \textit{almost surely}, i.e.\ with probability 1.

\subsection{Coarse-grainings and $\epsilon$-machines}

A coarse-grained description of a variable $X$ is another variable $Z$
such that a (deterministic) coarse-graining mapping $g$ exists such that $Z=g(X)$. For example, $X$ could be a Gaussian random variable and $Z$ could be 1 if $X\geq0$ and 0 otherwise. 
The range of values that a coarse-grained representation can adopt is equal to or smaller than the range of the original system. Crucially, if $Z=g(X)$ then observing a change in $Z$ always implies that there was a change in $X$, but if $X$ is changed this does not imply that $Z$ will --- an asymmetry known as `supervenience'~\cite{supervenience2021}.

Let us extend the idea of coarse-graining from variables to time series. 
A coarse-graining of a time series $\bm X$ is another time series $\bm Z$ such that there exists a coarse-graining mapping $g$ such that $Z_t = g(\past{X}{t})$ for all $t$, so that $\bm Z$ summarises spatio-temporal trajectories of $\bm X$. 
A coarse-grained time series $\bm Z$ will be described informally as
being a \emph{macroscopic process} related to a microscopic process
$\bm X$. We denote by $\mathcal{G}_{\mathcal{X}}:=[\mathcal{X}^*\to
\mathbb{R}]$ the set of all possible coarse-graining functions taking the past $\past{X}{t}$ as input (with $\mathcal{X}^*$ denoting the set of infinite strings corresponding to all possible past trajectories), and by $\mathcal{F}_{\mathcal{X}}:=[\mathcal{X}\to \mathbb{R}]$ the subset of all purely spatial (i.e.\  non-temporal) coarse-grainings of the form $Z_t = f(X_t)$.

When considering two coarse-grainings $\bm Z^{(1)}$ and $\bm Z^{(2)}$, we use the notation $\bm Z^{(1)} \succeq \bm Z^{(2)}$ to denote when $\bm Z^{(1)}$ is by itself also a coarse-graining of $\bm Z^{(2)}$. Two coarse-grainings are equivalent if they satisfy both $\bm Z^{(1)} \succeq \bm Z^{(2)}$ and $\bm Z^{(2)} \succeq \bm Z^{(1)}$, which corresponds to cases where there exists a bijection between them. Two coarse-graining functions $g_1,g_2\in\mathcal{G}_{\mathcal{X}}$ are said to be equivalent if they lead to equivalent coarse-grainings.

\begin{lemma}\label{sec:lemma_lattice_coarsegrainings}
For a given process $\bm X$, the collection
$\mathcal{G}_{\mathcal{X}}$ with $\succeq$ forms a lattice, with the infimum given by the identity mapping and supremum given by a constant mapping. The set $\mathcal{F}_{\mathcal{X}}$ constitutes a sublattice of $\mathcal{G}_\mathcal{X}$.
\end{lemma}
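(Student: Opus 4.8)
The plan is to verify the lattice axioms directly for the poset $(\mathcal{G}_{\mathcal{X}}, \succeq)$, where recall $g_1 \succeq g_2$ means that the coarse-graining $\bm{Z}^{(1)}$ induced by $g_1$ is itself a coarse-graining of $\bm{Z}^{(2)}$ — equivalently, $g_2(\past{x}{t})$ is a function of $g_1(\past{x}{t})$, i.e.\ $g_1$ separates at least as many past trajectories as $g_2$. The cleanest route is to identify each coarse-graining function (up to the equivalence defined in the excerpt) with the partition of the set $\mathcal{X}^*$ of past trajectories it induces, since two functions are equivalent iff they induce the same partition. Under this identification, $g_1 \succeq g_2$ becomes exactly the refinement order on partitions of $\mathcal{X}^*$ ($g_1$'s partition refines $g_2$'s). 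First I would make this correspondence precise and note it is an order isomorphism onto (a subset of) the partition lattice of $\mathcal{X}^*$.

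Next I would exhibit the meet and join. For partitions, the join (coarsest common refinement that is still a refinement of neither — wait, in the refinement order the finer partition is "smaller") the infimum of $\{g_1, g_2\}$ is the common refinement, whose blocks are the nonempty intersections of a block of $g_1$ with a block of $g_2$; concretely this is represented by the function $\past{x}{t} \mapsto \bigl(g_1(\past{x}{t}), g_2(\past{x}{t})\bigr)$ (composed with any injection of the product range into $\mathbb{R}$, which exists since the range is countable). The supremum is the finest common coarsening, obtained by taking the transitive closure of the relation "$\past{x}{t}$ and $\past{x}{t}'$ lie in the same block of $g_1$ or the same block of $g_2$" and collapsing each resulting equivalence class to a point. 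One then checks these are genuinely upper/lower bounds and are extremal among such, which is the standard partition-lattice argument. The global infimum is the identity map (the discrete partition, every block a singleton) and the global supremum is a constant map (the trivial partition), as claimed.

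For the sublattice claim I would restrict attention to $\mathcal{F}_{\mathcal{X}}$, the purely spatial coarse-grainings $Z_t = f(X_t)$, which under the above dictionary correspond to partitions of $\mathcal{X}^*$ of the special "product" form: those induced by a partition of the single-symbol alphabet $\mathcal{X}$ applied coordinatewise to the most recent symbol $x_t$ only. The key observation is that the meet and join computed above, when applied to two such functions $f_1(X_t)$ and $f_2(X_t)$, again depend only on $X_t$: the meet is $X_t \mapsto (f_1(X_t), f_2(X_t))$, and the join is the coarsening of $\mathcal{X}$ obtained by the transitive closure of "$f_1$-equal or $f_2$-equal," applied to $X_t$. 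Hence $\mathcal{F}_{\mathcal{X}}$ is closed under both operations, and since the order it inherits is the restriction of $\succeq$, it is a sublattice.

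The main obstacle — really the only subtle point — is handling the quotient by the equivalence relation carefully and dealing with infinite alphabets/infinite strings: one must confirm that $\mathcal{X}^*$ being countable guarantees that the product range and the quotient ranges can always be re-embedded into $\mathbb{R}$, so that the meet and join genuinely lie in $\mathcal{G}_{\mathcal{X}} = [\mathcal{X}^* \to \mathbb{R}]$ rather than in some larger function space, and that suprema/infima of arbitrary (not just finite) subfamilies exist if one wants a complete lattice — though for the stated lemma only finite meets and joins plus a top and bottom are required. I would flag that the whole argument is essentially the observation that $\mathcal{G}_{\mathcal{X}}/{\sim}$ is order-isomorphic to the lattice of partitions of $\mathcal{X}^*$, with $\mathcal{F}_{\mathcal{X}}/{\sim}$ corresponding to the sublattice of partitions pulled back from partitions of $\mathcal{X}$, and everything else is bookkeeping.
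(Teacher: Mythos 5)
Your proposal is correct, and it is in fact more complete than the argument the paper gives. The paper's proof only checks that $\succeq$ is transitive, exhibits the identity and constant mappings as the global infimum and supremum, and then justifies the sublattice claim for $\mathcal{F}_{\mathcal{X}}$ solely by observing that these two extremal elements are spatial coarse-grainings. Strictly speaking this establishes a bounded poset rather than a lattice (and containing top and bottom does not by itself make a subset a sublattice), so the binary meets and joins are left implicit. You supply exactly the missing pieces: passing to the quotient by the equivalence relation identifies $\mathcal{G}_{\mathcal{X}}/{\sim}$ with a partition lattice on past trajectories, the meet of $g_1,g_2$ is the common refinement realised by $\past{x}{t}\mapsto\bigl(g_1(\past{x}{t}),g_2(\past{x}{t})\bigr)$, the join is the finest common coarsening via the transitive-closure construction, and — crucially for the sublattice claim — both operations applied to purely spatial coarse-grainings $f_1(X_t),f_2(X_t)$ again depend only on $X_t$, so $\mathcal{F}_{\mathcal{X}}$ is closed under them. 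One small correction: $\mathcal{X}^*$ is the set of \emph{infinite} past trajectories, so it is uncountable whenever $|\mathcal{X}|\geq 2$; countability is not the right justification for re-embedding the product and quotient ranges into $\mathbb{R}$. What you actually need is only that these ranges have cardinality at most that of the continuum, which holds since they are images of, respectively, a subset of $\mathbb{R}^2$ and a quotient of a set of cardinality at most continuum; so your conclusion stands with the justification adjusted.
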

\begin{proof}
Let us consider two coarse-grainings of $\bm X$ denoted by $\bm Z$ and $\bm Z'$. 
To show that $\succeq$ induces a partial ordering, one needs to prove
that if $\bm X \succeq \bm Y$ and $\bm Y \succeq \bm Z$, then $\bm X
\succeq \bm Z$, which is a natural consequence of the composition of
coarse-graining functions. The proof of the infimum and supremum
follows from the fact that $\bm X\succeq\bm Z\succeq 0$ for all
coarse-grainings $\bm Z$, where the supremmum $\bm X$ is obtained via
the identity function and $0$ is a trivial coarse-graining obtained
from a constant function mapped to $0$. Finally, the fact that both $\bm X$ and $0$ also belong to $\mathcal{F}_{\mathcal{X}}$ shows that it is a sublattice.
\end{proof}

We now introduce the idea of $\epsilon$-machines of a macroscopic processes. For this, we build on the classic definition of $\epsilon$-machine~\cite{shalizi2001causal}.

\begin{definition}[$\epsilon$-machine]
\label{def:e-machine}
The causal states of a process $\bm X$ are the equivalence classes corresponding to the following equivalence relationship (see Figure~\ref{fig:causal_states}):
\begin{equation}
\past{x}{t} \equiv_{\epsilon} \past{x}{t}' \quad \text{iff} \quad p(\finfut{x}{t+1}{L} | \past{x}{t}) = p(\finfut{x}{t+1}{L} | \past{x}{t}')~\;  \forall \finfut{x}{t+1}{L} , L\in\mathbb{N}.
\nonumber
\end{equation}
The $\epsilon$-machine of a process $\bm X$ is given by the pair $(\epsilon,T_{e,e'}^x)$, where $\epsilon(\past{X}{t}) = E_t$ is the mapping that assigns their causal state to individual trajectories, and $T_{e,e'}^x$ is the resulting transition probabilities of the form
\begin{equation}
    T_{e,e'}^x = \mathbb{P}\{ E_t=e', X_t=x | E_{t-1}=e\}~.\nonumber
\end{equation}
\end{definition}

It can be shown that the causal states are the coarsest coarse-graining of past trajectories of the process that retains full predictive power over its future evolution --- i.e.\  if the time series of causal states is denoted by $\bm E$, then $I(\past{X}{t};\finfut{X}{t+1}{L}) = I(E_t;\finfut{X}{t+1}{L})$ for all future horizons $L\in\mathbb{N}$~\cite{crutchfield1989inferring,grassberger1986toward}. Moreover, it can be shown that $\bm E$ always exhibits Markovian dynamics, irrespective of the complexity of the memory structure of $\bm X$. The $\epsilon$-machine is thought to describe the computations that give rise to the observed data representing them via an automaton~\cite{hopcroft2001introduction}, as it is the most parsimonious state machine whose transitions generate the actual data that is observed (see Section~\ref{sec:two_faces_emachines}).

By noting that the dynamics in $\bm X$ naturally induce dynamics on a coarse-graining $\bm Z$, one can directly apply Definition~\ref{def:e-machine} to $\bm Z$. 
Intuitively, the $\epsilon$-machine of a coarse-graining $\bm Z$ is an optimal representation of past trajectories at that level of resolution (i.e.\ trajectories of $\past{z}{t}$ instead of $\past{x}{t}$), which are grouped according to their forecasting abilities into causal states $\bm E'$. 
Note that the causal states of $\bm Z$ are also Markovian and informationally optimal at a macroscopic level, i.e.\ they satisfy $I(\past{Z}{t}; \finfut{Z}{t+1}{L}) = I(E'_t;\finfut{Z}{t+1}{L})$ for all $L\in\mathbb{N}$.

Let us think how the different machines relate to each other. 
If $\bm X$ represents a fully observed (physical) process governed by a given (physical) law, then the $\epsilon$-machine of $\bm X$ represents the computations taking place at this most resolved level --- representing `the laws' governing the system. 
Similarly, the $\epsilon$-machine of $\bm Z$ is the best possible reconstruction of the computations that take place at scale of resolution. Critically, such reconstruction may be losing relevant information about the underlying laws, i.e.\ about the $\epsilon$-machine of $\bm X$. 
It is natural to wonder how the collection of $\epsilon$-machines of all coarse-grainings of a given process $\bm X$ are related. Unfortunately, in general the members of such collection do not --- to the best of our knowledge --- relate to each other in a simple manner. For example, the $\epsilon$-machine of a coarse-graining of $\bm X$ may not be a coarse-graining of the $\epsilon$-machine of $\bm X$, as shown in the next example.

\begin{example}[Incompatibility between $\epsilon$-machines at different levels]
\label{ex:incompatibility_emachines}
 Consider a Markov chain $\bm X$ with three states $a$, $b$, and $c$. Let us assume that the vectors of outgoing transition probabilities of $a$ and $b$ are positive and identical. Then, the $\epsilon$-machine $\bm E$ has only two causal states, namely, $\{a,b\}$ and $c$.

 Now assume that there is no transition from $c$ to $b$, and that the transitions from $b$ to $a$ and from $c$ to $a$ have different probabilities. Suppose the coarse-graining $Z_t=f(X_t)$ is such that $Z_t=A$ if $X_t=a$ and $Z_t=B$ otherwise. It can be shown that $\bm Z$ is not Markov. Indeed, the longer $\bm Z$ remains in state $B$, the higher the probability that $X_t=c$ (as there are no transitions from $c$ to $b$). Thus, since the transition probabilities into $a$ are different for $b$ and $c$, the $\epsilon$-machine $\bm E'$ of $\bm Z$ has causal states $\{A, (BA), (BBA), (BBBA), \dots\}$. Since evidently $\bm E'$ has more than two causal states, it cannot be a coarse-graining of the $\epsilon$-machine $\bm E$ of $\bm X$.
\end{example}

\subsection{$\upsilon$-machines}

We now introduce a new class of machines associated with coarse-grainings, which we call $\upsilon$-machines --- which are illustrated in Figure~\ref{fig:e_vs_u}.

\begin{definition}[$\upsilon$-machine]
\label{def:u-machine}
The causal states of the process $\bm X$ with respect to its coarse-graining
$\bm Z$ are the collection of equivalence classes given by
\begin{equation}
\past{x}{t} \equiv_{\upsilon} \past{x}{t}' \quad \text{iff} \quad p(\fut{z}{t+1}^L | \past{x}{t}) = p(\fut{z}{t+1}^L | \past{x}{t}')~\;  \forall \fut{z}{t+1}, L\in\mathbb{N} .
\nonumber
\end{equation}
The $\upsilon$-machine of the coarse-graining $\bm Z$ is then given by the pair $(\upsilon,T_{e,e'}^x)$, where $\upsilon(\past{X}{t}) = U_t$ is the mapping that assigns the causal state with respect to $\bm Z$ to individual trajectories $\past{x}{t}$, and $T_{u,u'}^z$ is the resulting transition probabilities of the form
\begin{equation}
    T_{u,u'}^z = \mathbb{P}\{ U_t=u', Z_t=z | U_{t-1}=u\}~.\nonumber
\end{equation}
\end{definition}
Unlike the $\epsilon$-machine of $\bm X$ that predicts the whole (detailed) future $\fut{X}{t+1}$, the $\upsilon$-machine of a coarse-graining $\bm Z$ is made of the collection of causal states of the microscopic process $\bm X$ that
are needed to optimally predict only its
coarse-graining $\fut{Z}{t+1}$. 
Building on this intuition, we now show
that the causal states of the $\upsilon$-machine correspond to the coarsest coarse-graining of $\bm X$
that is maximally predictive of $\bm Z$ (which provides an alternative
characterisation of them).

\begin{proposition}\label{prop:minimality_upsilon}
    Consider $\bm Z$ a coarse-graining of $\bm X$, and $\bm U$ the time series of causal states of the $\upsilon$-machine. Then $I(U_t;\fut{Z}{t+1}^L) = I(\past{X}{t};\fut{Z}{t+1}^L)$ for all $L\in\mathbb{N}$. Moreover, if $\bm D$ is another coarse-graining of $\bm X$ such that  
    $I(D_t; \fut{Z}{t+1}^L) = I(\past{X}{t}; \fut{Z}{t+1}^L)$ for all $L\in\mathbb{N}$, then the $\upsilon$-machine of $\bm Z$ is a coarse-graining of $\bm D$.
\end{proposition}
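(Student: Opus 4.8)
The plan is to mirror the classical information-bottleneck characterisation of $\epsilon$-machine causal states, but with the macroscopic future $\fut{Z}{t+1}^L$ in the role of the target. For the first claim (sufficiency), I would prove the two bounds separately. The inequality $I(U_t;\fut{Z}{t+1}^L)\le I(\past{X}{t};\fut{Z}{t+1}^L)$ is immediate from the data-processing inequality, since $U_t=\upsilon(\past{X}{t})$ is a deterministic function of the past. For the reverse, the key point is that the relation $\equiv_{\upsilon}$ of Definition~\ref{def:u-machine} is built exactly so that $p(\fut{z}{t+1}^L \mid \past{x}{t})$ depends on $\past{x}{t}$ only through its class $U_t$; hence $\fut{Z}{t+1}^L \pperp \past{X}{t}\mid U_t$, so $I(\past{X}{t};\fut{Z}{t+1}^L\mid U_t)=0$ for every $L$. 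I would then expand $I(\past{X}{t},U_t;\fut{Z}{t+1}^L)$ with the chain rule in two ways --- as $I(U_t;\fut{Z}{t+1}^L)+I(\past{X}{t};\fut{Z}{t+1}^L\mid U_t)$ and as $I(\past{X}{t};\fut{Z}{t+1}^L)+I(U_t;\fut{Z}{t+1}^L\mid\past{X}{t})$ --- where the first conditional term vanishes by the independence just noted and the second because $U_t$ is a function of $\past{X}{t}$. Equating the expansions gives the desired equality.

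For the minimality claim, suppose $\bm D$ is a coarse-graining of $\bm X$ with $I(D_t;\fut{Z}{t+1}^L)=I(\past{X}{t};\fut{Z}{t+1}^L)$ for all $L$. The same two-way chain-rule decomposition of $I(\past{X}{t},D_t;\fut{Z}{t+1}^L)$ --- now using that $D_t$ is a function of $\past{X}{t}$ --- forces $I(\past{X}{t};\fut{Z}{t+1}^L\mid D_t)=0$, i.e.\ $\fut{Z}{t+1}^L\pperp\past{X}{t}\mid D_t$, for each $L$. Letting $L\to\infty$ (the finite cylinders generate the $\sigma$-algebra of the whole future) upgrades this to $\fut{Z}{t+1}\pperp\past{X}{t}\mid D_t$. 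Consequently, for almost every past $p(\fut{z}{t+1}\mid\past{x}{t})=p(\fut{z}{t+1}\mid D_t)$, so any two pasts on which $\bm D$ agrees carry the same predictive distribution over $\fut{Z}{t+1}$ and are $\equiv_{\upsilon}$-equivalent. Thus $\bm D$ refines the $\upsilon$-partition: there is a map $h$ with $U_t=h(D_t)$ (time-independent by stationarity), which is precisely the statement that the $\upsilon$-machine of $\bm Z$ is a coarse-graining of $\bm D$.

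The mutual-information manipulations are routine; the step I expect to require the most care is the passage in the minimality argument from conditional independence of every finite truncation $\fut{Z}{t+1}^L$ to conditional independence of the entire future $\fut{Z}{t+1}$, together with the ``almost sure'' bookkeeping it entails --- conditional distributions are defined only up to null sets, so the equality of predictive distributions, and hence the refinement map $h$, are well-defined only almost surely, consistent with the paper's convention of comparing coarse-grainings up to almost-sure equality.
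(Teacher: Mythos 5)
Your proposal is correct, and for the minimality claim it takes a genuinely different route from the paper. For the first assertion the paper rewrites $I(\past{X}{t};\finfut{Z}{t+1}{L})$ as a weighted sum of Kullback--Leibler divergences $D\big(p(\finfut{z}{t+1}{L}\mid\past{x}{t})\,\big\|\,p(\finfut{z}{t+1}{L})\big)$ and regroups the terms by $\upsilon$-class; your two-way chain-rule expansion of $I(\past{X}{t},U_t;\finfut{Z}{t+1}{L})$ reaches the same equality and is essentially the same observation in mutual-information rather than divergence language. For the minimality claim the paper argues by strict convexity of the KL divergence: if $D_t$ merges two pasts with distinct conditionals, the mixed divergence is strictly smaller than the corresponding convex combination, so the mutual information must strictly drop --- a contrapositive that also yields the slightly stronger qualitative fact that any such merge loses information. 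You instead run the chain rule in reverse to extract $I(\past{X}{t};\finfut{Z}{t+1}{L}\mid D_t)=0$, read off the conditional independence $\finfut{Z}{t+1}{L}\pperp\past{X}{t}\mid D_t$, and conclude directly that almost every pair of pasts sharing a $D$-value is $\equiv_\upsilon$-equivalent, so $U_t=h(D_t)$. This avoids the convexity argument entirely and is arguably cleaner; note also that your $L\to\infty$ step is not actually needed, since $\equiv_\upsilon$ is defined by agreement of the conditionals for every finite $L$, which you already have. Your closing remark about the almost-sure bookkeeping is apt and is implicitly present (though unstated) in the paper's version as well.
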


\begin{proof}
The first assertion follows directly from the definition of the $\upsilon$-machine.
Namely, observe that
\begin{equation}
    I(\past{X}{t};\fut{Z}{t+1}^L) = \sum_{\past{x}{t}} p(\past{x}{t}) D\Big( p(\finfut{z}{t+1}{L}|\past{x}{t}) \big|\big| p(\finfut{z}{t+1}{L}) \Big).
\end{equation}
Noting that $p(\finfut{z}{t+1}{L}|\past{x}{t})$ only depends on which equivalence class $u_t$ the sequence $\past{x}{t}$ belongs to, and by the construction of $u_t,$ we can equivalently write
\begin{equation}
    I(\past{X}{t};\fut{Z}{t+1}^L) = \sum_{u_t} p(u_t) D\Big( p(\finfut{z}{t+1}{L}|u_t) \big|\big| p(\finfut{z}{t+1}{L}) \Big),
\end{equation}
and the latter is precisely equal to $I(U_t;\fut{Z}{t+1}^L).$

For the second assertion,
let us consider $\bm D$ a coarse-graining of $\bm X$, and express the mutual information between $\bm D$ and $\bm Z$ in terms of Kullback-Leibler divergences as follows:
\begin{equation}
    I(D_t; \finfut{Z}{t+1}{L}) = \sum_{d_t} p(d_t) D\Big( p(\finfut{z}{t+1}{L}|d_t) \big|\big| p(\finfut{z}{t+1}{L}) \Big).
\end{equation}
Our strategy will be to investigate under which conditions this expression is smaller than $I(\past{X}{t} ; \finfut{Z}{t+1}{L})$.

Let's first consider what happens if $D_t$ only lumps together trajectories of $\past{x}{t}$ that have the same conditional probabilities $p(\finfut{z}{t+1}{L}|\past{x}{t})$. Then, $I(D_t; \finfut{Z}{t+1}{L}) = I(\past{X}{t} ; \finfut{Z}{t+1}{L})$, as each Kullback-Leibler term in the summation stays the same, and the probabilities $p(d_t)$ just group together probabilities of the corresponding trajectories. Said differently, lumping together terms with similar conditionals results on a rearranging of the summation terms and their weights, but nothing more.

Let's now consider what happens if $D_t$ lumps at least two trajectories with different conditional probabilities $p(\finfut{z}{t+1}{L}|\past{x}{t})$, and show that in this scenario the mutual information will necessarily drop. For this, let's assume that $D_t=d_0$ is lumping together trajectories $\past{x}{t}$ and $\past{x}{t}'$ that have different conditional probabilities. Then, first notice that
\begin{align}
p(d_t)p(\finfut{z}{t+1}{L}|d_t) 
&= p(d_t,\finfut{z}{t+1}{L}) \nonumber\\
&= p(\past{x}{t},\finfut{z}{t+1}{L}) + p(\past{x}{t}',\finfut{z}{t+1}{L}) \nonumber\\
&= p(\past{x}{t})p(\finfut{z}{t+1}{L}|\past{x}{t}) + p(\past{x}{t}')p(\finfut{z}{t+1}{L}|\past{x}{t}') \nonumber
\end{align}
and then 
\begin{equation}
    p(\finfut{z}{t+1}{L}|d_t) = r\cdot p(\finfut{z}{t+1}{L}|\past{x}{t}) + s\cdot p(\finfut{z}{t+1}{L}|\past{x}{t}'),
\end{equation}
where $r+s=1$ is an affine combination. Using then the fact that the KL is strictly convex, i.e.\ that if $q_1$ and $q_2$ are different then
\begin{equation}
D(r\cdot q_1 + s\cdot q_2||p) < r\cdot D(q_1||p) + s\cdot D(q_2||p) ,
\end{equation}
one can show that the mutual information with such coarse-graining cannot be maximal, i.e.\ that some information must be lost.

This shows that any maximally informative coarse-graining can only lump trajectories that have the same conditionals. And the coarsest one is the one that lumps them all, which is by definition the $\upsilon$-machine. Therefore, any coarse-graining that is maximally informative will be compatible with the $\upsilon$-machine, but may be less coarse.
\end{proof}

It is natural to wonder how the $\upsilon$-machine and $\epsilon$-machine of a coarse-graining are related to each other. 
Our next result shows that the $\upsilon$-machine of $\bm Z$ is always more
information-powerful than the $\epsilon$-machine of $\bm Z$. Intuitively, this is a consequence of the fact that the former can rely on information that is in $\bm X$ but may not be accessible in $\bm Z$. 
\begin{lemma}\label{lemma:u_over_e}
Consider $\bm Z$ to be a coarse-graining of $\bm X$, and denote by $\bm U$ and  $\bm E'$ the time series of the causal states of its $\upsilon$-machine and $\epsilon$-machine, respectively. 
Then, the following relationship is satisfied for all future horizons $L\in\mathbb{N}$:
  \begin{align}
      I(U_t; \finfut{Z}{t+1}{L}) 
      \geq
      I(E'_t; \finfut{Z}{t+1}{L}).\nonumber
  \end{align}
\end{lemma}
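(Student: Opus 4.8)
The plan is to exploit the observation that the causal states $E'_t$ of the $\epsilon$-machine of $\bm Z$ are themselves a coarse-graining of $\past{X}{t}$, and then to invoke the optimality of the $\upsilon$-machine established in Proposition~\ref{prop:minimality_upsilon} together with the data-processing inequality.

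First I would note that, since $\bm Z$ is a coarse-graining of $\bm X$, each $Z_s$ with $s\leq t$ is a deterministic function of $\past{X}{s}$, and $\past{X}{s}$ is a truncation of $\past{X}{t}$; hence the whole past $\past{Z}{t}$ is a deterministic function of $\past{X}{t}$. This is exactly the supervenience asymmetry discussed after the definition of coarse-grainings. Composing with the map $\epsilon'$ that assigns to $\past{Z}{t}$ its causal state, it follows that $E'_t=\epsilon'(\past{Z}{t})$ is a deterministic function of $\past{X}{t}$ — that is, $\bm E'$ is itself a (spatio-temporal) coarse-graining of $\bm X$.

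Second, because $E'_t$ is a function of $\past{X}{t}$, the Markov chain $E'_t - \past{X}{t} - \finfut{Z}{t+1}{L}$ holds for every $L\in\mathbb{N}$ (conditioned on $\past{X}{t}$ the variable $E'_t$ is constant, hence independent of the future of $\bm Z$). The data-processing inequality then gives $I(E'_t;\finfut{Z}{t+1}{L})\leq I(\past{X}{t};\finfut{Z}{t+1}{L})$. Finally, the first assertion of Proposition~\ref{prop:minimality_upsilon} states that $I(U_t;\finfut{Z}{t+1}{L}) = I(\past{X}{t};\finfut{Z}{t+1}{L})$, so chaining the two relations yields $I(E'_t;\finfut{Z}{t+1}{L})\leq I(U_t;\finfut{Z}{t+1}{L})$, which is the claim.

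The argument is short once the right viewpoint is fixed, and I expect the only point requiring care to be the first step: rigorously establishing that $E'_t$ is a genuine (measurable) function of $\past{X}{t}$, i.e.\ that the past of the macroscopic process supervenes on the past of the microscopic one. After that, the result is a one-line combination of the data-processing inequality with Proposition~\ref{prop:minimality_upsilon}; no deviation-from-stationarity subtleties or explicit divergence computations are needed.
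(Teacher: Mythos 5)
Your proof is correct and follows essentially the same route as the paper's: both arguments reduce the claim to the data-processing inequality combined with the first assertion of Proposition~\ref{prop:minimality_upsilon} (that $I(U_t;\finfut{Z}{t+1}{L})=I(\past{X}{t};\finfut{Z}{t+1}{L})$). The only cosmetic difference is that you apply the data-processing inequality directly to $E'_t$ as a function of $\past{X}{t}$, whereas the paper first invokes $I(E'_t;\finfut{Z}{t+1}{L})=I(\past{Z}{t};\finfut{Z}{t+1}{L})$ and then applies it to $\past{Z}{t}$; your variant is, if anything, marginally leaner since it does not need the predictive optimality of the $\epsilon$-machine's causal states.
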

\begin{proof}
By the definition of $\epsilon$-machines, we must have $I(E'_t; \finfut{Z}{t+1}{L})=I(\past{Z}{t}; \finfut{Z}{t+1}{L})$, while by the definition of $\upsilon$-machines, we have $I(U_t; \finfut{Z}{t+1}{L}) = I(\past{X}{t}; \finfut{Z}{t+1}{L})$. The lemma is then proven by a simple application of the data processing inequality, which states that since $\past{Z}{t}$ is a function (a coarse-graining) of $\past{X}{t},$ we must have $I(\past{Z}{t}; \finfut{Z}{t+1}{L}) \leq I(\past{X}{t}; \finfut{Z}{t+1}{L}).$
\end{proof}

This lemma could perhaps give the impression that the causal states of the $\epsilon$-machine of $\bm Z$ may simply be a coarse-graining of the causal states of its $\upsilon$-machine. 
Unfortunately, this is not the case ---  rather, these machines can be related in highly non-trivial ways, as illustrated by our next example.

\begin{example}[Incompatibility between $\upsilon$-machine and $\epsilon$-machine of a macroscopic process]
\label{ex:UdoesnotmaptoE}
 We reconsider the Markov chain $\bm X$ and its coarse-graining from Counterexample~\ref{ex:incompatibility_emachines}. Recall that the time series of causal states of $\bm X$, $\bm E$, alternate between two causal states states, $\{a,b\}$ and $\{c\}$. As we will show later in Proposition~\ref{lem:UisFofE}, since the coarse-graining mapping $f$ that defines $\bm Z$ is spatial, the causal states of the $\upsilon$-machine $\bm U$ of $\bm Z$ is a coarse-graining of $\bm E$; hence, it can have at most two causal states. Recall further that the time series of causal states of the $\epsilon$-machine of $\bm Z$, $\bm E'$, has causal states states $\{A, (BA), (BBA), (BBBA), \dots\}$. Since evidently $\bm E'$ has more than two causal states (i.e., more states than $\bm U$ has), there cannot be a mapping $f\in\mathcal{F}_\mathcal{E}$ such that $E'_t=f(U_t)$.
\end{example}

\subsection{Information and causal closure}

The $\upsilon$-machine of a coarse-graining $\bm Z$ captures the hidden states whose computations drive its evolution as seen from $\bm X$, while the $\epsilon$-machine of $\bm Z$ is the best effort to capture such states using only information from that scale of resolution (see Figure~\ref{fig:e_vs_u}). 
Hence, if the $\epsilon$-machine of a macroscopic process is as powerful as its $\upsilon$-machine (concretely, if there exists a bijection between the causal states of both, $E'_t$ and $U_t$), then one can say that all the relevant causal relationships take place at that macroscopic scale. This principle sets the basis for defining macroscopic processes that are \emph{causally closed}.

\begin{definition}[Causal closure]
\label{def:causal_closure}
A coarse-graining $\bm Z$ is said to be \emph{causally closed} if the coarse-grainings induced by its $\epsilon$-machine and $\upsilon$-machine are equivalent.
\end{definition}

Let us illustrate how coarse-grainings can be causally closed in a minimal example.

\begin{ex}
\label{ex:minimal_example}
Consider a stationary Markov chain $\bm X$ that can take one of three values $\mathcal{X} = \{a,b,c\}$. This system is fully described by a $3\times 3$ transition probability matrix that gives the probabilities $p(x'|x):= \mathbb{P}\{X_{t+1}=x'|X_t=x\}$ for $x,x'\in\mathcal{X}$, which has 6 degrees of freedom (as there are three conditional probabilities that sum up to 1). 
Let us consider a generic coarse-graining of such system given by $Z_t=g(X_t)$, with the mapping $g$ defined by 
\begin{equation*}
g(x)=
\begin{cases}
    0 & \text{if } x\in\{a,b\},\\
    1 & \text{if } x=c.
\end{cases}
\end{equation*}

To investigate under what conditions is $Z_t$ causally closed, let's build the $\upsilon$-machine for predicting $Z_{t+1}$ in terms of $X_t$. For this, first note that 
\begin{align*}
    \mathbb{P}\{Z_{t+1}=0 | X_t = x\} =& \,p(a|x) + p(b|x), \\
    \mathbb{P}\{Z_{t+1}=1 | X_t = x\} =& \,p(c|x).
\end{align*}
From here one can realise that the $\epsilon$-machine and $\upsilon$-machine of $\bm Z$ are equivalent if the $\upsilon$-machine is unable to distinguish the states that are merged by the coarse-graining, i.e.\ we need $\mathbb{P}\{Z_{t+1}=r | X_t = a\} = \mathbb{P}\{Z_{t+1}=r | X_t = b\}$ for $r\in\{0,1\}$. Furthermore, by noticing that $\mathbb{P}\{Z_{t+1}=1 | X_t = v\} = 1 - \mathbb{P}\{Z_{t+1}=0 | X_t = v\}$, this leads to a single condition: $p(c|a) = p(c|b)$.
\end{ex}

Let us add one further definition of a similar but seemingly weaker condition.

\begin{definition}[Information closure]
\label{def:infoclosure}
    A coarse-graining $\bm Z$ of $\bm X$ is said to be \textit{informationally closed} if it satisfies $I(\past{X}{t};\finfut{Z}{t+1}{L}|\past{Z}{t})=0$ for all $L\in\mathbb{N}$.
\end{definition}

An informationally closed coarse-graining is such that knowing the
corresponding micro-state does not provide additional information
about its future evolution over what can be obtained from the past of
the coarse-graining itself. More technically, this condition corresponds to the following Markov chain: $\past{X}{t}-\past{Z}{t}-\fut{Z}{t+1}$, where knowing $\past{Z}{t}$ makes  $\past{X}{t}$ and $\fut{Z}{t+1}$ conditionally independent. This condition makes $\past{Z}{t}$ to be a sufficient statistic of $\past{X}{t}$ for predicting $\fut{Z}{t+1}$. 
Information closure of coarse-grainings have been studied by Refs.~\cite{chang2020information,barnett2021dynamical}.

While one could intuitively think that informational closure would be a weaker notion than computational closure, our next result shows that both are, in fact, equivalent.

\begin{theorem}\label{thm:equivalence}
Let $\bm Z$ be a coarse-graining of $\bm X$. Then, the following conditions are equivalent:
\begin{itemize}
    \item[(i)] $\bm Z$ is causally closed.
    \item[(ii)] $\bm Z$ is informationally closed.
    \item[(iii)] The causal states of its $\upsilon$-machine and $\epsilon$-machine of $\bm Z$, denoted by $\bm U$ and $\bm E'$, satisfy $I(U_t;\finfut{Z}{t+1}{L}) = I(E'_t;\finfut{Z}{t+1}{L})$ for all $L\in\mathbb{N}$.
\end{itemize}

\end{theorem}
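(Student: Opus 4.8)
The plan is to establish the cycle of implications (i) $\Rightarrow$ (iii) $\Rightarrow$ (ii) $\Rightarrow$ (i), whose main ingredients are Proposition~\ref{prop:minimality_upsilon}, the defining optimality of $\epsilon$-machines, and the elementary fact that $\past{Z}{t}$ is a deterministic function of $\past{X}{t}$ (since $Z_s=g(\past{X}{s})$ for every $s\le t$).

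\textbf{(i) $\Rightarrow$ (iii).} If $\bm Z$ is causally closed then, by Definition~\ref{def:causal_closure}, there is a bijection relating $U_t$ and $E'_t$ almost surely. Mutual information is invariant under bijective relabelings of an argument, so $I(U_t;\finfut{Z}{t+1}{L})=I(E'_t;\finfut{Z}{t+1}{L})$ for every $L$, which is (iii).

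\textbf{(iii) $\Rightarrow$ (ii).} By Proposition~\ref{prop:minimality_upsilon}, $I(U_t;\finfut{Z}{t+1}{L})=I(\past{X}{t};\finfut{Z}{t+1}{L})$, and by the optimality of the $\epsilon$-machine of $\bm Z$, $I(E'_t;\finfut{Z}{t+1}{L})=I(\past{Z}{t};\finfut{Z}{t+1}{L})$. Hence (iii) is equivalent to $I(\past{X}{t};\finfut{Z}{t+1}{L})=I(\past{Z}{t};\finfut{Z}{t+1}{L})$ for all $L$. Since $\past{Z}{t}$ is a function of $\past{X}{t}$, the chain rule gives $I(\past{X}{t};\finfut{Z}{t+1}{L})=I(\past{Z}{t};\finfut{Z}{t+1}{L})+I(\past{X}{t};\finfut{Z}{t+1}{L}|\past{Z}{t})$, so the conditional term vanishes for all $L$ --- which is precisely information closure.

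\textbf{(ii) $\Rightarrow$ (i).} This is the step requiring the most care. Information closure is the statement that $\past{X}{t}-\past{Z}{t}-\finfut{Z}{t+1}{L}$ holds for all $L$, i.e.\ $p(\finfut{z}{t+1}{L}|\past{x}{t})=p(\finfut{z}{t+1}{L}|\past{z}{t})$ whenever $\past{z}{t}=g(\past{x}{t})$, for ($p$-almost) every $\past{x}{t}$ and all $L$. Consequently the family of predictive distributions $\{p(\finfut{z}{t+1}{L}|\past{x}{t})\}_{L}$ that defines the $\upsilon$-equivalence class of $\past{x}{t}$ depends on $\past{x}{t}$ only through $\past{z}{t}$, and in fact only through $\epsilon'(\past{z}{t})=E'_t$; hence $U_t$ is a function of $E'_t$. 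Conversely, $\epsilon'$ groups macro-pasts precisely by the same family $\{p(\finfut{z}{t+1}{L}|\past{z}{t})\}_{L}$, so $E'_t$ is a function of $U_t$. The $\upsilon$- and $\epsilon$-machines of $\bm Z$ therefore induce the same partition of their respective pasts, which yields the bijection between $U_t$ and $E'_t$ demanded by Definition~\ref{def:causal_closure}, i.e.\ causal closure. I expect this implication to be the main obstacle: one must check that $\past{Z}{t}$ is genuinely a function of $\past{X}{t}$ (so that $E'_t$ is a legitimate coarse-graining of $\bm X$), that the conditional-independence statement is used ``for all $L$ simultaneously'', and that ``equivalence of induced coarse-grainings'' in Definition~\ref{def:causal_closure} is exactly the two-sided functional dependence exhibited above. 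A more computational variant of this step would instead apply the minimality half of Proposition~\ref{prop:minimality_upsilon} to conclude that $\bm U$ is a coarse-graining of $\bm E'$ and then argue the reverse inclusion separately, but the direct partition argument is cleaner.
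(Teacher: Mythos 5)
Your proof is correct, and the (ii)$\Leftrightarrow$(iii) portion is exactly the paper's computation (chain rule for mutual information, $\past{Z}{t}$ a deterministic function of $\past{X}{t}$, and the defining optimality of the $\epsilon$- and $\upsilon$-machines), merely run in one direction. Where you genuinely diverge is in closing the cycle. The paper proves (iii)$\Rightarrow$(i) information-theoretically: from $I(E'_t;\finfut{Z}{t+1}{L})=I(U_t;\finfut{Z}{t+1}{L})$ it deduces that $E'_t$ is maximally predictive of $\bm Z$, invokes the minimality half of Proposition~\ref{prop:minimality_upsilon} to get $U_t=f(E'_t)$, and then uses the coarsest-sufficient-statistic property of the $\epsilon$-machine of $\bm Z$ to force $f$ to be a bijection. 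You instead prove (ii)$\Rightarrow$(i) directly at the level of conditional distributions: under information closure the family $\{p(\finfut{z}{t+1}{L}|\past{x}{t})\}_L$ that defines the $\upsilon$-partition of micro-pasts coincides with the pulled-back family $\{p(\finfut{z}{t+1}{L}|\past{z}{t})\}_L$ that defines the $\epsilon'$-partition, so the two partitions are literally the same and the bijection of Definition~\ref{def:causal_closure} is immediate. Your route is more elementary and makes the origin of the bijection transparent (it bypasses Proposition~\ref{prop:minimality_upsilon} in that step), at the cost of having to track the almost-sure qualifiers on conditional distributions yourself; the paper's route leans on two already-established minimality statements and is correspondingly terser about why $f$ is invertible. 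You correctly flag the paper's argument as the ``more computational variant'' in your closing remark, so nothing is missing.
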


\begin{proof}
Let us first prove the equivalence between (ii) and (iii). This follows directly from the following calculation:
\begin{align}
    I(\past{X}{t};\finfut{Z}{t+1}{L}|\past{Z}{t}) 
    &= I(\past{X}{t},\past{Z}{t};\finfut{Z}{t+1}{L}) - I(\past{Z}{t};\finfut{Z}{t+1}{L}) \\
    &= I(\past{X}{t};\finfut{Z}{t+1}{L}) - I(\past{Z}{t};\finfut{Z}{t+1}{L}) \\
    &= I(U_t;\finfut{Z}{t+1}{L}) - I(E'_t;\finfut{Z}{t+1}{L})~.
\end{align}
Above, the first equality uses the chain rule for mutual information (see e.g.~\cite[Thm.2.5.2]{cover2012elements}) which states that $I(W,V;U)=I(V;U)+I(W;U|V)$ for any random variables $(U,V,W)$, the second the fact that $\bm Z$ is a coarse-graining (i.e.\ a deterministic function) of $\bm X$, and the third the properties of $\epsilon$- and $\upsilon$-machines.

Now, let's prove the equivalence between (i) and (iii). If $\bm Z$ is
causally closed then there exists a bijection between $E'_t$ and
$U_t$, and hence $I(U_t;\finfut{Z}{t+1}{L}) =
I(E'_t;\finfut{Z}{t+1}{L})$ follows directly. To prove the converse,
note that $\bm E'$ is a coarse-graining of $\bm X$, since the
causal states of a coarse-graining is a coarse-graining of the
original process. Now note that
$I(U_t;\finfut{Z}{t+1}{L}) = I(E'_t;\finfut{Z}{t+1}{L})$ implies that
$E'_t$ is maximally predictive of $\bm Z$. Due to
Proposition~\ref{prop:minimality_upsilon}, this implies that $U_t =
f(E'_t)$. This, however, implies that $\bm U$ is also a
coarse-graining of $\bm Z$. This, combined with the minimality of
$E'_t$ with respect to $\bm Z$ (i.e.\ that it is the coarses coarse-graining of $\past{Z}{t}$ that still attains optimal prediction), implies that the mapping $f$ is a bijection. 
\end{proof}

Note that Lemma~\ref{lemma:u_over_e} shows that $\upsilon$-machines are generally more powerful than the corresponding $\epsilon$-machines; the present result states that informational and causal closure are characterised by the fact of them being equally powerful.

\subsection{The lattice of spatial $\upsilon$-machines}

While the results presented in previous sections introduce useful relationships between $\upsilon$-machines and $\epsilon$-machines of coarse-grainings, the relationship between them and the $\epsilon$-machine of the process itself remains unclear. Unfortunately, no general simple relationship exists between the $\epsilon$-machine of $\bm X$ and $\upsilon$-machines of coarse-grainings for general (spatio-temporal) coarse-grainings, as shown in the following example --- which illustrates how spatio-temporal coarse-grainings can induce substantial changes into the causal organisation of the process.

\begin{example}[Incompatibility between $\upsilon$-machine and $\epsilon$-machine of the micro]
Let $\bm X$ be i.i.d., from which follows that the $\epsilon$-machine $\bm E$ is trivial. Let's focus on the following coarse-graining $Z_t=g(\past{X}{t})=X_{t-1}$. The $\upsilon$-machine can, by definition, access the entire past $\past{X}{t}$ to construct a causal state for $Z_{t+1}$. But since $Z_{t+1}=X_t$, we can set $U_t=X_t$ and obtain $H(Z_{t+1}|U_t)=0$. 
Note that such $U_t$ is minimal, since any non-trivial coarse-graing of $X_t$ will loose entropy and hence ability to predict itself. 
However, since $\bm E$ is trivial, there is no function $f\in\mathcal{F}_\mathcal{E}$ such that $U_t=f(E_t)$.
\end{example}

While the collection of all $\epsilon$-machines of coarse-grainings $\bm Z$ often don't have a clear structure linking them, our next result shows that the $\upsilon$-machines arising from spatial coarse-grainings arrange naturally into a lattice.

\begin{proposition}\label{lem:UisFofE}
  Let $f\in\mathcal{F}_\mathcal{X}$ be a spatial coarse-graining
  function and let $Z_t=f(X_t)$. Then, the causal states $U_t$ of the $\upsilon$-machine 
  of $\bm Z$, are a coarse-graining of the causal states $E_t$ of the $\epsilon$-machine of $\bm X$. Moreover, the collection of $\upsilon$-machines of all
  coarse-grainings of $\bm X$ form a lattice,
  which is homomorphic to the
  lattice of coarse-grainings of $\bm X$.
 \end{proposition}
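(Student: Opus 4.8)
\emph{Part 1: the $\upsilon$-machine is a coarse-graining of the $\epsilon$-machine.} The plan is to show that the equivalence relation $\equiv_\epsilon$ refines $\equiv_\upsilon$, which becomes immediate once spatiality of $f$ is used. Since $Z_{t+j}=f(X_{t+j})$ for every $j$, any coarse future pattern decomposes as $p(\finfut{z}{t+1}{L}\mid\past{x}{t})=\sum p(\finfut{x}{t+1}{L}\mid\past{x}{t})$, where the sum runs over all micro futures whose coordinatewise image under $f$ equals $\finfut{z}{t+1}{L}$. Hence, if $\past{x}{t}\equiv_\epsilon\past{x}{t}'$, the two past trajectories induce the same conditional law on every $\finfut{x}{t+1}{L}$, so the sums above agree and $\past{x}{t}\equiv_\upsilon\past{x}{t}'$. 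Thus every $\upsilon$-class is a union of $\epsilon$-classes, i.e.\ $U_t=\pi(E_t)$ for a well-defined map $\pi$. Alternatively I would extend the predictive-sufficiency Markov chain $\past{X}{t}-E_t-\finfut{X}{t+1}{L}$ along the deterministic map $\finfut{X}{t+1}{L}\mapsto\finfut{Z}{t+1}{L}$ to get $\past{X}{t}-E_t-\finfut{Z}{t+1}{L}$, conclude that $E_t$ is maximally predictive of the future of $\bm Z$, and then invoke the minimality clause of Proposition~\ref{prop:minimality_upsilon}.

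\emph{Part 2: the map $f\mapsto\upsilon_f$ and its monotonicity.} I would introduce the assignment $\Phi\colon\mathcal{F}_\mathcal{X}\to\mathcal{G}_\mathcal{X}$ sending a spatial coarse-graining $f$ to the coarse-graining $\bm U(f)$ given by the causal states of the $\upsilon$-machine of $f(\bm X)$. By Part 1, $\bm U(f)=\pi_f(\bm E)$ is a \emph{spatial} coarse-graining of the $\epsilon$-machine process $\bm E$ of $\bm X$, so the image $\mathcal{L}:=\Phi(\mathcal{F}_\mathcal{X})$ lies inside $\mathcal{F}_\mathcal{E}$, itself a lattice by Lemma~\ref{sec:lemma_lattice_coarsegrainings}. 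Next I would check that $\Phi$ is order-preserving: if $f_1\succeq f_2$, then $Z^{(1)}_t=h(Z^{(2)}_t)$ for some $h$, so the future of $\bm Z^{(1)}$ is a coordinatewise deterministic image of the future of $\bm Z^{(2)}$; since $\bm U(f_2)$ is maximally predictive of the latter it is \emph{a fortiori} maximally predictive of the former, and the minimality clause of Proposition~\ref{prop:minimality_upsilon} yields $\bm U(f_1)\succeq\bm U(f_2)$. Noting that $\Phi(\mathrm{id})=\bm E$ and that $\Phi$ of a constant map is the one-state machine, $\mathcal{L}$ has a least and a greatest element, and $\Phi$ is a monotone surjection onto $\mathcal{L}$.

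\emph{Part 3: the lattice/homomorphism claim, and the main obstacle.} What remains — and what I expect to be the hard part — is to show that for all $f_1,f_2$ the element $\Phi(f_1\vee f_2)$ is the supremum in $(\mathcal{L},\succeq)$ of $\{\Phi(f_1),\Phi(f_2)\}$, and dually that $\Phi(f_1\wedge f_2)$ is their infimum; this simultaneously proves that $\mathcal{L}$ is a lattice and that $\Phi$ is a surjective lattice homomorphism from $\mathcal{F}_\mathcal{X}$ onto it, whence $\mathcal{L}$ is a homomorphic image of $\mathcal{F}_\mathcal{X}$. Monotonicity already supplies one half of each statement (e.g.\ $\Phi(f_1\wedge f_2)$ is a lower bound of $\Phi(f_1)$ and $\Phi(f_2)$). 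For the converses one must, in the meet case with $(f_1\wedge f_2)(x)=(f_1(x),f_2(x))$, prove that any coarse-graining of $\bm X$ which is maximally predictive of the future of $\bm Z^{(1)}$ and also of the future of $\bm Z^{(2)}$ is maximally predictive of the joint future of $(\bm Z^{(1)},\bm Z^{(2)})$, after which Proposition~\ref{prop:minimality_upsilon} closes the gap. This is genuinely delicate: a purely information-theoretic chain-rule argument does not suffice, since marginal predictive sufficiency does not entail joint predictive sufficiency in general. The extra leverage I would exploit is precisely the structure delivered by Part 1 — that the two component futures are produced by the \emph{same} emitted sequence passing through the unifilar $\epsilon$-machine of $\bm X$ — by working on the state set $S$ of $\bm E$: characterise $\bm U(f)$ as the coarsest partition of $S$ that is predictively sufficient for the output process $f(\bm X)$, show that such coarsest predictively-sufficient partitions are stable under the partition join and meet on $S$, and verify that these operations coincide with $\Phi(f_1\vee f_2)$ and $\Phi(f_1\wedge f_2)$. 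The lattice structure on $\mathcal{L}$ and its homomorphic relation to $\mathcal{F}_\mathcal{X}$ then follow.
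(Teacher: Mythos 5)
Your Part~1 is correct and is essentially the paper's own argument: the identity $p(\finfut{z}{t+1}{L}\mid\past{x}{t})=\sum_{\finfut{x}{t+1}{L}\in f^{-1}(\finfut{z}{t+1}{L})}p(\finfut{x}{t+1}{L}\mid\past{x}{t})$, valid because $f$ acts coordinatewise on futures, immediately shows that $\equiv_\epsilon$ refines $\equiv_\upsilon$, hence $U_t$ is a coarse-graining of $E_t$. Your alternative route via the Markov chain $\past{X}{t}-E_t-\finfut{Z}{t+1}{L}$ and the minimality clause of Proposition~\ref{prop:minimality_upsilon} is also sound. Your Part~2 (monotonicity of $f\mapsto \bm U(f)$, existence of top and bottom elements) likewise matches what the paper does.

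Where you diverge is Part~3, and here you are being more demanding than the paper. You correctly observe that an order-preserving surjection with extremal elements is not automatically a lattice homomorphism onto a lattice, and that the substantive content is meet/join preservation --- e.g.\ that $\Phi(f_1\wedge f_2)$, the $\upsilon$-machine of the joint coarse-graining $x\mapsto(f_1(x),f_2(x))$, is the infimum of $\Phi(f_1)$ and $\Phi(f_2)$. Your worry is legitimate: two causal states of $\bm X$ can induce identical marginal laws on the $f_1$-future and on the $f_2$-future yet different joint laws, so $\Phi(f_1\wedge f_2)$ can be strictly finer than the partition-meet of $\Phi(f_1)$ and $\Phi(f_2)$ in $\mathcal{F}_\mathcal{E}$, and marginal predictive sufficiency does not give joint sufficiency. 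You leave the resolution as a sketch (stability of coarsest predictively-sufficient partitions under partition join and meet), so your proof of the second claim is incomplete. You should know, however, that the paper's own proof does not close this gap either: it infers ``lattice'' from inherited order plus top and bottom, and ``homomorphism'' from the map being many-to-one, neither of which suffices. So for the first claim your proof coincides with the paper's; for the second, you have correctly isolated the step that actually needs proof, but neither you nor the published argument supplies it.
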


\begin{proof}
Let us first show that there exists a spatial coarse-graining $f'\in\mathcal{F}_\mathcal{X}$ such that $U_t=f'(E_t)$. Recall that $E_t$ corresponds to the equivalence class of trajectories $\past{X}{t}$ that have the same conditional distribution $p(\finfut{x}{t+1}{L}|e_t)$, and that $U_t$ corresponds to the equivalence class of trajectories $\past{X}{t}$ that have the same conditional distribution $p(\finfut{z}{t+1}{L}|u_t)$. Now, by considering a given future trajectory $\finfut{z}{t+1}{L}$, one sees  directly that
\begin{equation}
    p(\finfut{z}{t+1}{L}|\past{x}{t}) = \sum_{\finfut{x}{t+1}{L} \in f^{-1}(\fut{z}{t+1})} p\big(\finfut{x}{t+1}{L}|\past{x}{t}\big).
\end{equation}
Therefore, if $\past{x}{t}$ and $\past{x}{t}'$ are two trajectories
such that $p(\finfut{x}{t+1}{L}|\past{x}{t}) =
p(\finfut{x}{t+1}{L}|\past{x}{t}')$ for all $\finfut{x}{t+1}{L}$ and
$L\in\mathbb{N}$, then also $p(\finfut{z}{t+1}{L}|\past{x}{t}) = p(\finfut{z}{t+1}{L}|\past{x}{t}')$. This implies that the coarse-graining that leads to $E_t$ contains all distinctions made by the coarse-graining that leads to $U_t$, and perhaps some more --- the ones needed to predict $\finfut{X}{t+1}{L}$ rather than just $\finfut{Z}{t+1}{L}$. This shows that $U_t$ necessarily is a coarse-graining of $E_t$.

From the previous argument, it is clear that the collection of all
$\upsilon$-machines of spatial coarse-grainings inherit the
partial-ordering relationship of their corresponding coarse-grainings.
Furthermore, it is direct to see that the $\upsilon$-machine of the
coarse-graining function $f(x)=x$ is the finest of all, and the one of
$f(x)=0$ is the coarsest. This proves that the partial ordering over
$\upsilon$-machines generates a lattice, and the fact that the mapping
from coarse-grainings $\bm Z$ to $\upsilon$-machines may not
be one-to-one makes it an homomorphism.
\end{proof}

This result formally shows that, at least for spatial coarse-grainings, the $\upsilon$-machine corresponds to the part of the underlying $\epsilon$-machine that plays a role in running the macroscopic process $\bm Z$.

\subsection{Computational closure and the structure of emerging scales}

Let us now consider the notion of closure from a computational
perspective. For this, let us focus on a deterministic automaton (not necessarily related to an $\epsilon$-machine) with
states in the set $\mathcal{E}$ and inputs in the alphabet $\mathcal{X}$, so that a tuple $(e_t,x_{t+1})$ with $e_t\in\mathcal{E}$ and $x_{t+1}\in\mathcal{X}$ uniquely determine the next state $e_{t+1}\in\mathcal{E}$~\cite{hopcroft2001introduction}. Then, for a given initial state $e_0$, an input sequence of symbols $x_1,\dots,x_T$ generate a corresponding sequence of states $e_0,e_1,\dots,e_T$. Let us consider what happens if one has sequences not of symbols in $\mathcal{X}$, but from a coarse-grained alphabet $\mathcal{Z}=\{f(x): x\in\mathcal{X}\}$ established by a spatial coarse-graining mapping $z=f(x)$, $f\in\mathcal{F}_{\mathcal{X}}$. In general, the symbols in $\mathcal{Z}$ may not be sufficient to uniquely specify the transitions between states in $\mathcal{E}$, so that the states $\mathcal{E}$ do not constitute a deterministic automaton under inputs in $\mathcal{Z}$. If it does, we say that the coarse-graining is \emph{computationally closed}. 
We formalise these ideas in the next definition.

\begin{definition}[Computational closure]
\label{def:compu_closure}
For a given deterministic automaton with states in $\mathcal{E}$ and inputs in $\mathcal{X}$, the coarse-grainings of inputs $f$ and of states $f^*$ are said to be \emph{computationally closed} if the resulting transitions between coarse-grained states $\mathcal{E}'=\{E'=f^*(E):E\in\mathcal{E}\}$ is also a deterministic automaton under inputs in $\mathcal{Z}=\{z=f(x):x\in\mathcal{X}\}$.
\end{definition}

Put simply, computational closure refers to a pair of coarse-grainings $(f,f^*)$ such that the coarse-graining of the state is `compatible' with the coarse-graining of the input --- in the sense that they give rise to a new discrete automaton. 

The notion of computational closure can be naturally applied to macroscopic coarse-grainings, by considering how their $\epsilon$-machines --- which can be seen as a deterministic automata (see Section~\ref{sec:two_faces_emachines}) --- relates to the $\epsilon$-machine of the underlying microscopic process. Following Definition~\ref{def:compu_closure}, one can say that 
a macroscopic process $Z_t=g(\past{X}{t})$ is computationally closed if the time series $\bm E'$ of causal states of its $\epsilon$-machine is computationally closed with respect to the time series of causal states $\bm E$ of the $\epsilon$-machine of the microscopic process $\bm X$.
In other words, $Z_t=g(\past{X}{t})$ is
computationally closed if and only if there is a spatial coarse-graining $f^*$ such that $E'_t=f^*(E_t)$. 

While computational closure seems to be a stronger condition than causal closure, our next example (adapted from{~\cite[Example~6.1]{Pfante_LevelID}}) shows this is not the case.

\begin{example}[Computational closure does not imply information closure, adapted from{~\cite[Example~6.1]{Pfante_LevelID}}]
\label{ex:compcl_no_infocl}
Consider the following time-homogeneous Markov chain $\bm X$ with transition probability matrix
    \begin{equation}
    P=\left[\begin{array}{cc|cc}
        0.5 & 0.5 & 0& 0 \\
        0 & 0 & 0.5 & 0.5\\
        \hline
        0.5 & 0.5 & 0 & 0 \\
        0 & 0 & 0.5 & 0.5 \\
    \end{array}\right],
\end{equation}
where, following standard convention, the entry in row $i$ and column $j$ denotes $\mathbb{P}\{X_{t+1}=j|X_t=i\}$, i.e., transition probability matrices are row stochastic.
The lines indicate that the coarse-graining function $f$ merges states 1 and 2 to $A$ and states 3 and 4 to $B$. If the starting distribution $p_{X_0}$ coincides with the uniform distribution on $\{1,2,3,4\}$, then the resulting coarse-graining $\bm Z$ is i.i.d. Hence, its $\epsilon$-machine is a trivial one, and hence is a coarse-graining of the $\epsilon$-machine of $\bm X$. 
However, this coarse-graining is not causally closed. In effect, by knowing $X_t$ one can perfectly predict $Z_{t+1}.$ Namely, writing out explicitly, one can find for this particular Markov chain that
\begin{align}
    p(Z_{t+1}=A | X_t = x_t ) & = \left\{ \begin{array}{ll} 1 &\mbox{if } x_t=1 \mbox{ or } x_t=4,\\ 0 & \mbox{otherwise,} \end{array} \right.\label{ex:compcl_no_infocl:eqn1}
\end{align}
and likewise for coarse state B.
Hence the $\upsilon$-machine of $\bm Z$ is non-trivial. Similarly, $\bm Z$ is not informationally closed since 
\begin{align}
    I(\past{X}{t}; Z_{t+1} | \past{Z}{t}) &= H(Z_{t+1}| \past{Z}{t}) -  H(Z_{t+1}|\past{X}{t},\past{Z}{t}) \nonumber \\
    & = H(Z_{t+1})  = 1.
\end{align}
To establish this, we observe that since $\bm Z$ is i.i.d., we have $H(Z_{t+1}| \past{Z}{t})=H(Z_{t+1})$; due to~\Cref{ex:compcl_no_infocl:eqn1}, we have $H(Z_{t+1}|\past{X}{t},\past{Z}{t})=0;$
and the last equality follows since the marginal distribution of $Z_t$ is uniform in our example.
\end{example}

With all in place, we have all the pieces to put forward our next result (illustrated by Figure~\ref{fig:megalattice}).

\begin{theorem}\label{teo:compu_closure}
Spatial coarse-grainings that are causally closed are also
computationally closed. Moreover, the set of all computationally
closed coarse-grainings form a sub-lattice, which maps (possibly not
one-to-one) into a lattice of coarse-grainings of $\epsilon$-machines
via an homomorphism. Furthermore, when restricted to this collection, the following diagram commutes:

\begin{center}
  \if1\compiletikz
  \includetikz{tikz/}{CommutingDiagram}
  \else
    \includegraphics{CommutingDiagram.pdf}
  \fi 
\end{center}

\noindent
where $f,f^*$ are coarse-graining mappings, and $\epsilon,\epsilon'$ are operators that goes from stochastic processes to their
$\epsilon$-machine. 
\end{theorem}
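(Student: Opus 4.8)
The plan is to treat the assertions in the order stated, with Theorem~\ref{thm:equivalence} and Proposition~\ref{lem:UisFofE} doing the heavy lifting and the lattice bookkeeping coming last; throughout I work with spatial coarse-grainings $\bm Z=f(\bm X)$, $f\in\mathcal F_{\mathcal X}$, as the statement requires. For the first claim, if $\bm Z$ is causally closed then by Definition~\ref{def:causal_closure} there is a bijection $b$ with $E'_t=b(U_t)$, and by Proposition~\ref{lem:UisFofE} (available because $f$ is spatial) there is a spatial mapping $f'$ with $U_t=f'(E_t)$; composing, $E'_t=(b\circ f')(E_t)=:f^*(E_t)$, so the causal states of $\bm Z$ are a spatial coarse-graining of those of $\bm X$. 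Since the $\epsilon$-machine of $\bm Z$ is unifilar, this is exactly the characterisation of computational closure recalled just above Definition~\ref{def:compu_closure}: the pair $(f,f^*)$ carries the deterministic automaton $(\bm E,\bm X)$ onto the deterministic automaton $(\bm E',\bm Z)$.

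For the commuting square --- which must hold for every computationally closed $\bm Z$, not only the causally closed ones --- note that $E'_t=f^*(E_t)$ together with $Z_t=f(X_t)$ makes the state/output pair $(E'_t,Z_t)$ a deterministic image of $(E_t,X_t)$. Because $\bm E'$ is Markov and unifilar, the kernel $T^z_{e',e''}$ is then the push-forward of $T^x_{e,\tilde e}$ under $(f^*,f)$, with the stationary law of $\bm E$ pushing forward to that of $\bm E'$. Hence $\epsilon'\circ f$ and $f^*\circ\epsilon$ coincide on both the state maps and the induced transition probabilities, which is precisely the commutativity claimed: computing the $\epsilon$-machine of $f(\bm X)$ and coarse-graining the $\epsilon$-machine of $\bm X$ by $f^*$ produce the same automaton.

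For the sub-lattice and the homomorphism, I would record the intrinsic reformulation first: $\bm Z$ is computationally closed iff its causal-state map, read as a function of the past of $\bm X$, factors through $\epsilon$ --- equivalently, iff the $\epsilon$-machine of $\bm Z$ lies in the sublattice $\mathcal L_{\bm E}\subseteq\mathcal G_{\mathcal X}$ of coarse-grainings of $\bm E$. Closure under meets is then constructive: for computationally closed $\bm Z_1,\bm Z_2$ with $E'_i=f_i^*(E)$, the map $e\mapsto(f_1^*(e),f_2^*(e))$ is a coarse-graining of $\bm E$ and, by unifilarity of each $\bm E'_i$, a unifilar hidden-Markov presentation of $\bm Z_1\wedge\bm Z_2=(\bm Z_1,\bm Z_2)$; its states are already separated by the $\bm Z_1$- and $\bm Z_2$-marginals of the future, so no lumping is possible and it is the $\epsilon$-machine of $\bm Z_1\wedge\bm Z_2$. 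Hence $\bm Z_1\wedge\bm Z_2$ is computationally closed and $\Phi(\bm Z_1\wedge\bm Z_2)=\Phi(\bm Z_1)\wedge\Phi(\bm Z_2)$, where $\Phi$ sends a computationally closed level to its $\epsilon$-machine; $\Phi$ fails to be injective precisely when two levels are computationally equivalent (Definition~\ref{def:comp_equivalence}). Since $\bm X$ (bottom) and the constant coarse-graining (top) are trivially computationally closed, this meet-closed family with a top element is a complete lattice in its own right, and $\Phi$ is the claimed homomorphism into $\mathcal L_{\bm E}$.

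The main obstacle is the join half: showing that the ambient join $\bm Z_1\vee\bm Z_2$ is again computationally closed (so that it can serve as the join inside the family) and that $\Phi$ is order- and join-preserving. Unlike meets, a coarsening of a computationally closed level need not have an $\epsilon$-machine that coarse-grains the original one (Counterexample~\ref{ex:incompatibility_emachines}), and the $\epsilon$- and $\upsilon$-machines of a coarsening can be related only in the convoluted manner of Counterexample~\ref{ex:UdoesnotmaptoE}; so one cannot argue locally from $\bm Z_1$ or $\bm Z_2$ and must instead control how the mixed-state/unifilarisation construction producing the $\epsilon$-machine of $\bm Z_1\vee\bm Z_2$ from its $\upsilon$-machine interacts with the coarse-graining lattice, checking that the result remains a coarse-graining of $\bm E$. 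Settling this step --- the one place where the ``real-space'' and ``$\epsilon$-machine-space'' lattices genuinely threaten to come apart --- is where the content of the proof sits; everything else is bookkeeping on top of Theorem~\ref{thm:equivalence} and Proposition~\ref{lem:UisFofE}.
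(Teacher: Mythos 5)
Your treatment of the first claim is the paper's own argument: Proposition~\ref{lem:UisFofE} gives $U_t=f'(E_t)$ for spatial $f$, causal closure gives the bijection $E'_t\leftrightarrow U_t$, and composing yields $E'_t=f^*(E_t)$, i.e.\ computational closure. Your account of the commuting square (push-forward of the transition kernel under $(f^*,f)$) is a more explicit rendering of what the paper argues via order-preservation of the map $f\mapsto\upsilon$-machine together with the identification $\upsilon=\epsilon'$ on causally closed levels; both are fine at the paper's level of rigour.

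The genuine gap is in the sub-lattice claim, and you have correctly diagnosed where it sits but not closed it. You prove meet-closure (common refinement) for computationally closed coarse-grainings and then explicitly leave the join half --- that the finest common coarsening of two computationally closed levels is again computationally closed --- as an unresolved ``main obstacle.'' A proof that ends by naming its missing step is not a proof, so as written the second assertion of the theorem is not established. Note also that your meet argument is itself slightly optimistic: you need the pair $(f_1^*(E_t),f_2^*(E_t))$ to be maximally predictive of the \emph{joint} future of $(\bm Z_1,\bm Z_2)$, and sufficiency of the two marginal causal-state processes for the joint future is not automatic. The paper avoids the join problem entirely by taking a different route: it restricts to \emph{causally closed} spatial coarse-grainings, asserts that these form a sublattice of $\mathcal{F}_{\mathcal{X}}$, and then transports the lattice structure through the identification of $\epsilon$-machines with $\upsilon$-machines (whose lattice structure is already supplied by Proposition~\ref{lem:UisFofE}). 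If you follow that route, the join of two causally closed levels is handled at the level of $\upsilon$-machines, where Proposition~\ref{lem:UisFofE} guarantees everything is a coarse-graining of $\bm E$ and the homomorphism onto $\epsilon$-machine space is immediate. You should either adopt that restriction (matching what the paper actually proves) or supply the missing join argument for the larger computationally closed family; in its current form the proposal does neither.
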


\begin{proof}
    To prove the first part of the theorem, let's consider a causally closed macroscopic process $\bm Z$ which is generated by a spatial coarse-graining --- i.e. $Z_t=f(X_t)$. 
    Proposition~\ref{lem:UisFofE} guarantees that, because of coming from a spatial coarse-graining, the causal states of the $\upsilon$-machine of $\bm Z$ are a coarse-graining of the causal states of $\bm X$. 
    Additionally, due to the definition of causal closure, the $\epsilon$-machine of $\bm Z$ is equal to its $\upsilon$-machine. Combining these two facts one can conclude that the causal states of the $\epsilon$-machine of $\bm Z$ must be a coarse-graining of the causal states of the $\epsilon$-machine of $\bm X$, proving that $\bm Z$ is computaionally closed.

    Let us prove now that all computationally closed coarse-grainings for a lattice, and that their $\epsilon$-machines also do. 
    All causally closed spatial coarse-grainings are a
    sub-lattice of $\mathcal{F}_\mathcal{X}$. 
    Furthermore, as causal closure in this case implies computational closure, then the collection of all $\epsilon$-machines of those are also sub-lattice of the set of all possible spatial
    coarse-grainings of the $\epsilon$-machine of $\bm X$. In virtue
    of this, the operation of calculating the $\epsilon$-machine of a
    causally closed coarse-graining, denoted by $\epsilon'(\cdot)$, establishes a natural map between coarse-grainings of $\bm X$ and coarse-grainings of its $\epsilon$-machines.

  To conclude, let's prove the conmutation between coarse-graining $f$ and calculation of causal states $\epsilon(\cdot)$.  For this, let's note that $\epsilon(\cdot)$ preserves the partial ordering relationship of coarse-grainings of $\bm X$ --- indeed, if the coarse-graining of $\bm X$ $f$ is finer than $f'$, then the $\upsilon$-machine of $Z_t=f(X_t)$ is finer than the one of $Z'_t=f'(X_t)$. This fact, combined with the identity between $\upsilon$-machines and $\epsilon$-machines that holds due to causal closure, concludes the proof.
\end{proof}

We note that the requirement of focusing on spatial coarse-grainings is essential, as the implication from information closure to computational closure can be broken in the case of general spatio-temporal coarse-grainings --- as illustrated by our next example.

\begin{example}[Causal closure doesn't always imply computational closure]
    Let $X_t$ be a Markov chain such that its causal states $E_t$ satisfy $E_t=X_t$. 
    In this scenario, let's consider the coarse-graining $Z_t=g(\past{X}{t})=X_{t-1}$. A direct calculation shows that
    \begin{align}
    I(\finfut{Z}{t+1}{L};\past{X}{t}|\past{Z}{t}) 
    &= H(\finfut{Z}{t+1}{L}|\past{X}{t}) - H(\finfut{Z}{t+1}{L}|\past{Z}{t})\nonumber\\
    &= H(\finfut{X}{t}{L}|\past{X}{t}) - H(\finfut{X}{t}{L}|\past{X}{t-1})\nonumber\\
    &=0,
    \end{align}
    which implies that $Z_t$ is informationally closed, and hence --- due to Theorem~\ref{teo:compu_closure} --- also causally closed. 
    Note that $\bm Z$ is a Markov chain, and $E’_t=X_{t-1}$. But then $E’_t=E_{t-1}$, so that in general there no spatial coarse-graining mapping $E_t$ to $E'_t$.
\end{example}

Let us now turn towards characterising the collections of coarse-grainings that lead to the same $\epsilon$-machine, corresponding to the pre-images of the mapping $\phi'$.

\begin{definition}\label{def:comp_equivalence}
Two coarse-grainings $\bm Z$ and $\bm Z'$ are said to be computationally equivalent if their $\epsilon$-machines are equivalent.
\end{definition}

Let us illustrate how coarse-grainings can be computationally equivalent in a minimal example.

\begin{ex}
Let us revisit the setting of Example~\ref{ex:minimal_example}.
To investigate under what conditions is $Z_t$ causally closed, let's build the $\upsilon$-machine for predicting $Z_{t+1}$ in terms of $X_t$. For this, first note that 
\begin{align*}
    \mathbb{P}\{Z_{t+1}=0 | X_t = x\} =& \,p(a|x) + p(b|x), \\
    \mathbb{P}\{Z_{t+1}=1 | X_t = x\} =& \,p(c|x).
\end{align*}
From here one can realise that the $\epsilon$-machine and $\upsilon$-machine of $\bm Z$ are equivalent if the $\upsilon$-machine is unable to distinguish the states that are merged by the coarse-graining, i.e.\ we need $\mathbb{P}\{Z_{t+1}=r | X_t = a\} = \mathbb{P}\{Z_{t+1}=r | X_t = b\}$ for $r\in\{0,1\}$. Furthermore, by noticing that $\mathbb{P}\{Z_{t+1}=1 | X_t = v\} = 1 - \mathbb{P}\{Z_{t+1}=0 | X_t = v\}$, this leads to a single condition:
\begin{align*}
p(c|a) &= p(c|b).
\end{align*}
Furthermore, a direct calculation shows that 
\begin{align*}
    \mathbb{P}\{Z_{t+1}=r | Z_t=s\} 
    &= \frac{ \mathbb{P}\{Z_{t+1}=r, Z_t=s\} }{\mathbb{P}\{Z_t=s\} } \\
    &= \sum_{u\in g^{-1}(r)} \sum_{v\in g^{-1}(s)}
    \xi(v) p(u|v), 
\end{align*}
where $\xi(v)$ are non-negative weights with $\sum_{v\in g^{-1}(s)} g(v) = 1$ given by
\begin{equation}
    \xi(v) = \frac{ \mathbb{P}\{ X_t = v \} }
    { \sum_{v'\in g^{-1}(v)} \mathbb{P}\{ X_t = v' \} }.
\end{equation}
This implies that
\begin{align}
    \mathbb{P}\{ Z_{t+1}=1 | Z_t = 0\} &= p(c|a),\\
    \mathbb{P}\{ Z_{t+1}=1 | Z_t = 1\} &= p(c|c).
\end{align}
Clearly, if $p(c|a) \neq 1/2$ or $p(c|c) \neq 1/2$, then the $Z_t$ is not independently distributed and hence its $\epsilon$-machine is isomorphic to itself.

From this analysis, one can see that there are two types of scenarios in where $Z_t$ is causally closed. One is whether $Z_t$ carries non-trivial computations or not (i.e.\  if $Z_t$ is i.i.d.); the other is if $Z_t$ carries different computations than $X_t$ or not (equivalence happens when $p(x|a) = p(x|b)$ for $x\in\{a,b,c\}$, which leads to the causal states of $X_t$ to be equal than $Z_t$). It is direct to confirm that the four possible combinations of these two binary possibilities give rise to the four scenarios illustrated in Figure~\ref{fig:minimal_example}.
\end{ex}

Let us define $\Delta_L(\bm X,\bm Z) = I(\past{X}{t};\finfut{X}{t+1}{L}) - I(\past{Z}{t};\finfut{Z}{t+1}{L})$, which is the information that is being processed in $\bm X$ but not in $\bm Z$ within a time horizon $L$. If $\bm Z$ a coarse-graining of $\bm X$, then $\Delta_L(\bm X,\bm Z)\geq 0$ for all $L$ due to the data-processing inequality. 
Note, for example, that if $\bm Z$ is a spatial coarse-graining of $\bm X$ and the statistics of both are i.i.d., then they are computationally equivalent and also satisfy $\Delta_L(\bm X,\bm Z)=0$. Our next result generalises this relationship.

\begin{lemma}
    If a computationally closed spatial coarse-graining $Z_t=f(X_t)$ satisfies
    $\Delta(\bm X,\bm Z)= 0$, then $\bm Z$ and $\bm X$ are computationally equivalent.
\end{lemma}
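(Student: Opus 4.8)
The plan is to show that under the two hypotheses the coarse‑graining $f^*$ that witnesses computational closure must actually be a bijection at the level of causal states, which is precisely the statement that the $\epsilon$‑machines of $\bm X$ and $\bm Z$ coincide. Throughout, write $\bm E$ and $\bm E'$ for the causal states of the $\epsilon$‑machines of $\bm X$ and $\bm Z$, and recall that computational closure supplies a spatial map $f^*\in\mathcal{F}_\mathcal{E}$ with $E'_t=f^*(E_t)$. Note in particular that $E'_t=f^*(\epsilon(\past{X}{t}))$ is itself a coarse‑graining of $\past{X}{t}$, a fact the minimality argument below will use. I also read $\Delta(\bm X,\bm Z)=0$ as $\Delta_L(\bm X,\bm Z)=0$ for every $L\in\mathbb{N}$.

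First I would assemble the inequalities the hypotheses produce, for each fixed $L$. By the defining property of the $\epsilon$‑machine of $\bm X$, $I(E_t;\finfut{X}{t+1}{L})=I(\past{X}{t};\finfut{X}{t+1}{L})$; by that of the $\epsilon$‑machine of $\bm Z$, $I(E'_t;\finfut{Z}{t+1}{L})=I(\past{Z}{t};\finfut{Z}{t+1}{L})$; and $\Delta_L(\bm X,\bm Z)=0$ equates these two numbers. On the other side, since the coarse‑graining is spatial the block $\finfut{Z}{t+1}{L}$ is a deterministic function of $\finfut{X}{t+1}{L}$, so the data‑processing inequality gives $I(E'_t;\finfut{Z}{t+1}{L})\le I(E'_t;\finfut{X}{t+1}{L})$, and since $E'_t=f^*(E_t)$ the data‑processing inequality again gives $I(E'_t;\finfut{X}{t+1}{L})\le I(E_t;\finfut{X}{t+1}{L})$. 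Chaining these relations produces a sandwich of $\le$'s whose two endpoints are equal, so every link is an equality; in particular $I(E'_t;\finfut{X}{t+1}{L})=I(\past{X}{t};\finfut{X}{t+1}{L})$ for all $L$.

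The second step interprets this last equality. Since $E'_t$ is a function of $\past{X}{t}$ and is maximally predictive of the whole future $\fut{X}{t+1}$, it is a predictive‑sufficient statistic for $\bm X$, i.e.\ $I(\past{X}{t};\finfut{X}{t+1}{L}\mid E'_t)=0$ for every $L$. By the minimality/universality of causal states — the $\epsilon$‑machine of $\bm X$ is the coarsest predictive‑sufficient coarse‑graining, so $\epsilon$ factors through any such statistic — there is a map $g$ with $E_t=g(E'_t)$. Together with $E'_t=f^*(E_t)$ this makes $f^*$ and $g$ mutually inverse, hence $\bm E$ and $\bm E'$ are equivalent and the $\epsilon$‑machines of $\bm X$ and $\bm Z$ agree up to relabelling of states, which is the definition of computational equivalence.

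The hard part is the first step: one has to recognise that the hypothesis $\Delta_L=0$ is exactly what closes the data‑processing sandwich, upgrading the a priori weak statement "$E'_t$ predicts $\fut{Z}{t+1}$ as well as $\past{X}{t}$ does" into the much stronger "$E'_t$ predicts $\fut{X}{t+1}$ as well as $\past{X}{t}$ does"; everything afterwards is a routine invocation of the universal property of $\epsilon$‑machines. The only care point is to keep track that $E'_t$ really is a coarse‑graining of $\past{X}{t}$ (being the composition $f^*\circ\epsilon$), since the minimality argument for the $\epsilon$‑machine of $\bm X$ must be applied to it; the spatiality of the original coarse‑graining is used only to obtain the first data‑processing inequality and the factorisation $E'_t=f^*(E_t)$ from Theorem~\ref{teo:compu_closure}.
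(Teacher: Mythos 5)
Your proposal is correct and follows essentially the same route as the paper's proof: both rest on the data-processing chain $I(E'_t;\finfut{Z}{t+1}{L})\le I(E'_t;\finfut{X}{t+1}{L})\le I(E_t;\finfut{X}{t+1}{L})$ together with the minimality of causal states. The only difference is presentational — you run the argument directly (equality throughout the sandwich forces $E'_t$ to be maximally predictive of $\fut{X}{t+1}$, whence the bijection), whereas the paper argues by contradiction (a non-invertible $f^*$ would force a strict inequality and hence $\Delta_L>0$).
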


\begin{proof}
Note that $\Delta(\bm X,\bm Z) = 0$ if and only if $I(E_t;\fut{X}{t+1}) = I(E'_t;\fut{Z}{t+1})$, where $E_t$ and $E'_t$ are the causal states of $\bm X$ and $\bm Z$. 
Also, because of computational closure, there is a function $f^*$ such that $E'_t=f^*(E_t)$.

If $\bm X$ and $\bm Z$ are not computationally equivalent, then $f^*$
is not invertible. If that is the case, then the definition of causal
states implies that there exists a trajectory $\finfut{X}{t+1}{L}$
such that $I(\finfut{X}{t+1}{L}; E'_t) < I(\finfut{X}{t+1}{L}; E_t)$,
as $E_t$ is by definition the coarsest coarse-graining that attains
$I(\finfut{X}{t+1}{L}; E_t) = I(\finfut{X}{t+1}{L};\past{X}{t})$ and thus
$E'_t$ needs to be  coarser than $E_t$. The data processing inequality then shows that
\begin{equation}
    I(\finfut{Z}{t+1}{L}; E'_t) \leq 
    I(\finfut{X}{t+1}{L}; E'_t) < 
    I(\finfut{X}{t+1}{L}; E_t),
\end{equation}
proving the desired result.
\end{proof}

We conclude this section studying how sequences of coarse-grainings decompose the total information processed by the system.

\begin{proposition}
\label{pro:info_decomposition}
Consider $\mathcal{Z} = \{\bm Z^i\}_{i=1}^m$ a sequence of computationally closed coarse grainings with $\bm Z^1=\bm X$ and $\bm Z^m=\bm 1$ the constant coarse-graining. Then
\begin{equation}
    I(\past{X}{t} ; \fut{X}{t+1} ) = \sum_{i=1}^{m-1} \Delta(\bm Z^k, \bm Z^{k+1}).
\end{equation}
\end{proposition}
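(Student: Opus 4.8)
The plan is to obtain the identity by a pure telescoping argument at each finite horizon $L$, followed by the routine passage to $L\to\infty$. First I would make explicit the structure implicit in the hypothesis: a \emph{sequence} of coarse-grainings with $\bm Z^1 = \bm X$ and $\bm Z^m = \bm 1$ is to be read as a descending chain $\bm X = \bm Z^1 \succeq \bm Z^2 \succeq \cdots \succeq \bm Z^m = \bm 1$, so that each $\bm Z^{i+1}$ is a coarse-graining of $\bm Z^i$. This is exactly the condition under which $\Delta_L(\bm Z^i,\bm Z^{i+1}) = I(\past{Z^i}{t};\finfut{Z^i}{t+1}{L}) - I(\past{Z^{i+1}}{t};\finfut{Z^{i+1}}{t+1}{L})$ is defined, and under which it is non-negative by the data-processing inequality. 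It is worth noting that computational closure plays no role in the arithmetic of the identity itself — the telescoping goes through for any such chain — its role here is interpretive, namely to guarantee that each $\Delta(\bm Z^i,\bm Z^{i+1})$ is the information processed \emph{autonomously} at that level (cf.\ Theorem~\ref{teo:compu_closure} and the commuting diagram).

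Second, I would write out the sum at a fixed horizon $L\in\mathbb{N}$ and observe that it telescopes:
\begin{align}
\sum_{i=1}^{m-1} \Delta_L(\bm Z^i, \bm Z^{i+1})
&= \sum_{i=1}^{m-1}\Big[\, I(\past{Z^i}{t};\finfut{Z^i}{t+1}{L}) - I(\past{Z^{i+1}}{t};\finfut{Z^{i+1}}{t+1}{L})\,\Big] \nonumber \\
&= I(\past{Z^1}{t};\finfut{Z^1}{t+1}{L}) - I(\past{Z^m}{t};\finfut{Z^m}{t+1}{L}) \nonumber
\end{align}
Now I would identify the two surviving endpoints: $\bm Z^1 = \bm X$ gives the first term $I(\past{X}{t};\finfut{X}{t+1}{L})$, while $\bm Z^m = \bm 1$ is the constant process, so $\finfut{Z^m}{t+1}{L}$ is almost surely deterministic and $I(\past{Z^m}{t};\finfut{Z^m}{t+1}{L}) = 0$. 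Hence $\sum_{i=1}^{m-1}\Delta_L(\bm Z^i, \bm Z^{i+1}) = I(\past{X}{t};\finfut{X}{t+1}{L})$ for every finite $L$, which is already the full content of the proposition at each horizon.

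Finally, I would pass to $L\to\infty$. Each $I(\past{Z^i}{t};\finfut{Z^i}{t+1}{L})$ is non-decreasing in $L$ (enlarging the predicted block cannot decrease the information), so the limits defining $\Delta(\bm Z^i,\bm Z^{i+1})$ and $I(\past{X}{t};\fut{X}{t+1})$ exist in $[0,\infty]$, and taking limits on both sides of the finite-$L$ identity (a finite sum, handled termwise, with monotone convergence covering a possibly infinite value) yields the stated equality. The only point requiring a word of care is this last interchange of limit and finite sum, together with the tacit assumption that the excess-entropy-type quantities are finite — both innocuous under the paper's standing finite-alphabet, stationarity hypotheses; if one prefers to stay entirely finite, one simply states the result at each $L$, where it is literally the telescoping identity above. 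So the ``main obstacle'' is not a mathematical difficulty at all but rather (i) pinning down that the hypothesis means a chain, so that every $\Delta$ term is well-posed, and (ii) tidily handling the $L\to\infty$ limit.
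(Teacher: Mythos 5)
Your proof is correct, but it takes a genuinely different route from the paper's. You prove the identity by pure telescoping of $\Delta_L(\bm Z^k,\bm Z^{k+1}) = I(\past{Z^k}{t};\finfut{Z^k}{t+1}{L}) - I(\past{Z^{k+1}}{t};\finfut{Z^{k+1}}{t+1}{L})$, so that only the endpoints $\bm Z^1=\bm X$ and $\bm Z^m=\bm 1$ survive; as you correctly observe, no closure assumption enters the arithmetic at all. The paper instead starts from $I(\past{X}{t};\fut{X}{t+1}) = I(\past{X}{t};\fut{Z}{t+1}^1,\dots,\fut{Z}{t+1}^m)$, applies the chain rule for mutual information to decompose this into terms $I(\past{X}{t};\fut{Z}{t+1}^k\mid\fut{Z}{t+1}^{k+1})$, and only at the last step invokes \emph{informational} closure of each level to convert $I(\past{X}{t};\fut{Z}{t+1}^k)$ into $I(\past{Z}{t}^k;\fut{Z}{t+1}^k)$ and hence recognise each summand as $\Delta(\bm Z^k,\bm Z^{k+1})$. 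What the paper's longer route buys is the intermediate identification of each increment with a conditional mutual information between adjacent levels of the hierarchy, which supports the interpretation of $\Delta(\bm Z^k,\bm Z^{k+1})$ as information processed at level $k$ but not at level $k+1$; what your route buys is economy and strictly weaker hypotheses. Indeed, your observation exposes a wrinkle in the paper: the proposition assumes \emph{computational} closure while its proof explicitly uses \emph{informational} closure (which, per Counterexample~\ref{ex:compcl_no_infocl}, is not implied by the former), whereas your argument shows the stated identity needs neither. Your handling of the $L\to\infty$ passage via monotonicity is also more careful than the paper's, which suppresses the horizon entirely; the only residual caveat, which you flag, is that differences of limits require the excess-entropy-type quantities to be finite.
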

\begin{proof}
A direct calculation shows that
\begin{align}
    I(\past{X}{t} ; \fut{X}{t+1} ) 
    &= I(\past{X}{t} ; \fut{Z}{t+1}^1, \fut{Z}{t+1}^2, ,\dots,\fut{Z}{t+1}^m ) \nonumber\\
    &= \sum_{k=1}^{m-1} I(\past{X}{t} ; \fut{Z}{t+1}^k| \fut{Z}{t+1}^{k+2},\dots,\fut{Z}{t+1}^m )\nonumber\\
    &= \sum_{i=1}^{m-1} I(\past{X}{t} ; \fut{Z}{t+1}^k| \fut{Z}{t+1}^{k+1})~.
\end{align}
Then, the results follows from noting that
\begin{align}
    I(\past{X}{t} ; \fut{Z}{t+1}^k| \fut{Z}{t+1}^{k+1} )
    &= I(\past{X}{t} ; \fut{Z}{t+1}^{k+1}) - I(\past{X}{t} ; \fut{Z}{t+1}^k ) \nonumber\\
    &= I(\past{Z}{t}^{k+1} ; \fut{Z}{t+1}^{k+1}) - I(\past{Z}{t}^k ; \fut{Z}{t+1}^k ),\nonumber
\end{align}
where the last equality uses the fact that each coarse-gaining is informationally closed.
\end{proof}

This result shows that a computationally closed coarse-graining $\bm Z$ that is not computationally equivalent to $\bm X$ decomposes the information processed by $\bm X$ into what is processed in the macro-scale (i.e.\ by $\bm Z$, quantified by $I(\past{Z}{t}; \fut{Z}{t+1}$) and what goes below (in $\bm X$ but not $\bm Z$, quantified by $\Delta(\bm X,\bm Z)$).

\subsection{Closure and lumpability}
\label{sec:lumpability}

\begin{figure*}
\includegraphics[width=\textwidth]{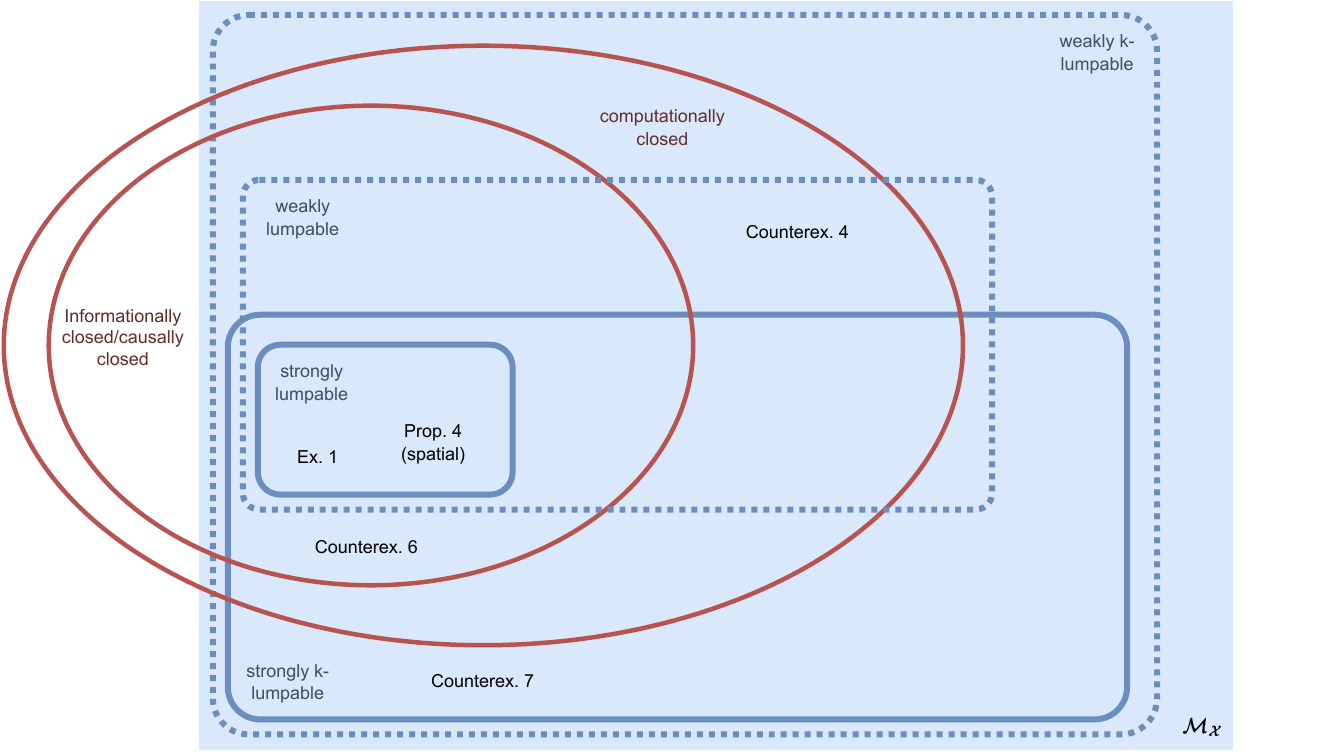}
\caption{Diagram illustrating the relationships between closure and lumpability of Markov chains. Informational/causal closure imply computational closure (Theorem~\ref{teo:compu_closure}). Within the space of Markov $\bm X$, strong lumpability of $\bm X$ implies information closure (Proposition~\ref{prop:Markov_Lumpability}). The same does not hold for weak lumpability and computational closure: If $\bm X$ is weakly lumpable, then the same does not need to hold for $\bm E$ due to the minimality property of $\varepsilon$-machines. The diagram refers to (counter)examples in the text. Indeed, Example~\ref{ex:minimal_example} is strongly lumpable, while Counterexample~\ref{ex:compcl_no_infocl} is weakly lumpable.}
\label{fig:overview}
\end{figure*}

Here we explore the relationship between informational, causal, and computational closure and `lumpability' --- see Figure~\ref{fig:overview} for an overview of the results we will develop in this subsection.
Loosely speaking, lumpability refers to Markov chains that have
coarse-grainings that also exhibit Markov dynamics.  Markedly, one
should note that lumpability is the exception and not the rule~\cite[Th.~31]{GurvitsLedoux_MarkovPropertyLinearAlgebraApproach}, as most coarse-graininings break the Markov property of the base system. 
\begin{ex}
Let's consider $\bm X$ where $X_t$ can take one of four possible values, denoted as $a,b,c$, and $d$. Consider the random initial condition where $X_0$ can take any value with equal probability, and the following Markov dynamics: if $X_t\in\{a,b\}$ then $X_{t+1}$ is equal to $c$ or $d$ with equal probability, and if $X_t\in\{c,d\}$ then $X_{t+1}$ is equal to $a$ or $b$ with equal probability. 
Now, consider a coarse-graining $\bm Z$ determined by the following
coarse-graining function: $g(a)=g(b)=A$ and $g(c)=g(d)=B$.
It can be seen than $\bm Z$ has deterministic dynamics, as $p(B|A) = p(A|B) = 1$. Furthermore, one can see that there are two causal states: one of all sequences that finish with $A$, and another with all sequences that finish in $B$. 
Interestingly, one can check that this $\epsilon$-machine is
isomorphic to the one of $\bm X$, and capture the same amount of
future information (1~bit). So, $\bm Z$ is essentially a more compact
representation than $X$ of the same computational process.
\end{ex}

Building on these ideas, let us start by introducing the formal definition of lumpability~\cite[\S6.3 \& \S6.4]{Kemeny_FMC} --- while acknowledging that other flavours of lumpability exist~\footnote{There are also other notions of lumpability, such as exact lumpability~\cite{buchholz_1994}, and higher-order lumpability~\cite{GurvitsLedoux_MarkovPropertyLinearAlgebraApproach} (where the coarse-graining $\bm Z$ is a Markov chain of higher order).}.

\begin{definition}[Strong and weak lumpability]\label{def:lumpability}
    A time-homogeneous Markov chain $(W_0,W_1,\dots)$ is \emph{strongly lumpable} w.r.t.\ a spatial coarse-graining function $f$ if for every initial distribution $p_{W_0}$ the resulting coarse-graining $Z_t=f(W_t)$ is a Markov chain, and its transition probabilities do not depend on the choice of $p_{W_0}$. A time-homogeneous Markov chain is \emph{weakly lumpable} if $\bm Z$ is Markov for at least one initial distribution $p_{W_0}$.
\end{definition}

Evidently, if a Markov chain is strongly lumpable, then it is also weakly lumpable. A useful information-theoretic characterisation of lumpability is given as follows. 
Thanks to the data-processing inequality, the following relationships always hold if $\bm X$ is Markovian:
\begin{align}
H(Z_t|Z_{t-1}) 
\geq H(Z_t|\past{Z}{t-1}) 
\geq H(Z_t|\past{X}{t-1}) 
= H(Z_t| X_{t-1}) ~.\nonumber
\end{align}
As shown by the next lemma, strong lumpability is equivalent to the collapse of the two inequalities into an equality, while weak lumpability is just the collapse of the first one.

\begin{lemma}\label{lemma:strong_lump_char}
    If $\bm X$ is a stationary, irreducible and aperiodic Markov chain and if $f$ is a coarse-graining function defined on the state space $\mathcal{X}$ of $\bm X$, then strong lumpability is equivalent to
    \begin{equation}\label{eq:lumpy1}
        H(Z_{t+1}|Z_{t})=H(Z_{t+1}|X_t) \quad \text{for all $t$}
    \end{equation}
    where $Z_t=f(X_t)$.
    In contrast, weak lumpability is equivalent to
    \begin{equation}\label{eq:lumpy2}
        H(Z_{t+1}|Z_t)=H(Z_{t+1}|\past{Z}{t}) \quad \text{for all $t$.}
    \end{equation}
\end{lemma}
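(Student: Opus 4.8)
The plan is to handle the two equivalences separately, in each case reducing the entropy equality to a structural characterisation of lumpability that is classical for ergodic chains~\cite{Kemeny_FMC}. Let $P$ denote the transition matrix of $\bm X$; since $\bm X$ is stationary, none of the conditional entropies below depends on $t$, so ``for all $t$'' may be read as ``for one $t$''.

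\textbf{Strong lumpability.} Because $Z_t=f(X_t)$, conditioning on $X_t$ is finer than conditioning on $Z_t$, so $H(Z_{t+1}\mid X_t)\le H(Z_{t+1}\mid Z_t)$ with gap $I(X_t;Z_{t+1}\mid Z_t)=\sum_{z}p(z)\sum_{x\in f^{-1}(z)}\tfrac{p(x)}{p(z)}\,D\big(p(\cdot\mid x)\,\|\,p(\cdot\mid z)\big)$. Irreducibility makes every $p(x)$ strictly positive, so by strict positivity of the Kullback--Leibler divergence between distinct laws, \eqref{eq:lumpy1} holds iff $p(Z_{t+1}=j\mid X_t=x)=\sum_{y\in f^{-1}(j)}p(y\mid x)$ depends on $x$ only through $f(x)$. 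This is precisely the Kemeny--Snell criterion for strong lumpability~\cite[\S6.3]{Kemeny_FMC}: if these row sums are constant on each class $f^{-1}(i)$, conditioning on $X_t$ shows $f(\bm X)$ is Markov with exactly those transition probabilities for \emph{every} initial law; conversely, evaluating the one-step lumped transition at point-mass initial laws recovers the constancy condition. Hence \eqref{eq:lumpy1} is equivalent to strong lumpability.

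\textbf{Weak lumpability.} The first observation is that, under the stationary law, \eqref{eq:lumpy2} for all $t$ is equivalent to $\bm Z$ being a time-homogeneous Markov chain: one always has $H(Z_{t+1}\mid\past{Z}{t})\le H(Z_{t+1}\mid Z_t)$, and equality for all $t$ says $Z_{t+1}\perp(\dots,Z_{t-1})\mid Z_t$, i.e. the one-sided Markov property, with time-homogeneity inherited from stationarity. So \eqref{eq:lumpy2} immediately gives weak lumpability with $p_{W_0}=\pi$. For the converse, suppose $f(\bm X)$ is Markov for \emph{some} initial law $\mu$; by the ergodic-chain theory~\cite[\S6.4]{Kemeny_FMC} this lumped chain is time-homogeneous, say with kernel $\hat P$. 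The set $S$ of initial laws $\nu$ on $\mathcal X$ for which $f(\bm X\mid X_0\sim\nu)$ is $\hat P$-Markov is then nonempty, convex, closed (it is cut out by linear identities among finite-dimensional marginals, which depend linearly on $\nu$), and invariant under $\nu\mapsto\nu P$ (deleting the first coordinate of a $\hat P$-Markov chain again yields a $\hat P$-Markov chain). Aperiodicity gives $\mu P^n\to\pi$, so closedness forces $\pi\in S$; thus $f(\bm X)$ started from $\pi$ is Markov, which is \eqref{eq:lumpy2}.

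\textbf{Main obstacle.} The ``strong'' side and the easy half of the ``weak'' side are routine bookkeeping with conditioning inequalities and strict convexity. The genuinely delicate point is the converse in the weak case: weak lumpability only posits a Markov lumping for \emph{one}, possibly non-stationary, initial law, whereas \eqref{eq:lumpy2} is a statement about the \emph{stationary} process, so one must transport Markovness from an arbitrary initial law to $\pi$. This is exactly where irreducibility and aperiodicity enter (through $\mu P^n\to\pi$ together with closedness and $P$-invariance of $S$, or equivalently by invoking the classical characterisation of weak lumpability of ergodic chains~\cite[\S6.4]{Kemeny_FMC}), and it is the step I would expect to require the most care.
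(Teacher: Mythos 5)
Your proof is correct. Note that the paper does not actually prove this lemma --- it simply cites \cite[Th.~9 \& Prop.~6]{GeigerTemmel_kLump} --- so what you have done is reconstruct those results from first principles, and the reconstruction holds up. On the strong side, writing the gap as $I(X_t;Z_{t+1}\mid Z_t)$, a stationarity-weighted sum of Kullback--Leibler divergences, and using that irreducibility makes the invariant marginal strictly positive so that vanishing of every divergence term forces $p\big(Z_{t+1}\in f^{-1}(j)\mid X_t=x\big)$ to be constant on each class, is exactly the information-theoretic reformulation of the Kemeny--Snell row-sum criterion \cite[Th.~6.3.2]{Kemeny_FMC} that the cited reference establishes; it also explains the paper's own footnote on why irreducibility cannot be dropped. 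On the weak side, your identification of \eqref{eq:lumpy2} with Markovianity of the stationary lumped process is immediate, and the transport of Markovianity from an arbitrary admissible initial law $\mu$ to $\pi$ via closedness and shift-invariance of the admissible set together with $\mu P^n\to\pi$ is the classical ergodic argument of \cite[\S6.4]{Kemeny_FMC}; you correctly flag this as the only delicate step and as the place where aperiodicity enters. One small remark: you assume the lumped chain under $\mu$ is time-homogeneous with a single kernel $\hat P$ before defining $S$. This can be avoided by letting $S$ be the set of initial laws for which the lumped process merely satisfies the conditional-independence Markov property (still closed and shift-invariant, being cut out by continuous identities among finite-dimensional marginals); once $\pi\in S$, stationarity automatically forces time-homogeneity of the lumped kernel, which is all that \eqref{eq:lumpy2} requires.
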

\begin{proof}
    See Ref.~\cite[Th.~9 \& Prop.~6]{GeigerTemmel_kLump}.
\end{proof}
Note that this results applies to the stationary setting, i.e., in the context of Definition~\ref{def:lumpability}, the initial distribution $p_{X_0}$ must coincide with an invariant distribution (which is unique in the irreducible and aperiodic case). 

This result can be understood as follows. 
When trying to estimate $Z_{t+1}$ from its past, one can consider this
a two-step process: first one estimates $X_t$ from $\past{Z}{t}$, and
then one estimates $Z_{t+1}$ 
from  $X_t$. Strong and weak
lumpability are connected to these two steps. Indeed, Counterexample~\ref{ex:compcl_no_infocl} shows that, by carefully
choosing the initial distribution $p_{X_0}$, one can ensure that no
matter the value of  $Z_t, Z_{t-1}, Z_{t-2}, \dots$, one always ends
up with the same conditional distribution $p(X_t|\past{Z}{t})$. Thus, knowing more of the past $\past{Z}{t}$ than $Z_t$ does not help to obtain a better estimate of $X_t$. Hence, even if the conditional distribution of $Z_{t+1}$ given $X_t$ differs for all $X_t$, knowing more than just $Z_t$ does not allows us to make a better estimate of $Z_{t+1}$. In contrast, in strong lumpability the distribution of
$Z_{t+1}$ is the same for all values of $X_t$ that are compatible with
$Z_t$. Therefore, while knowing more of the past $\past{Z}{t}$ than
$Z_t$ may give us a better estimate of the current $X_t$, this
additional knowledge cannot improve our estimate of $Z_{t+1}$.

Our first result shows that, if a microscopic process $\bm X$ has Markov dynamics and is strongly lumpable w.r.t.\ the coarse graining function $f$, then the resulting macroscopic process $Z_t=f(X_t)$ is informationally closed.

\begin{proposition}\label{prop:Markov_Lumpability}
    Let $\bm X$ be a stationary, irreducible, and aperiodic Markov chain, and suppose that it is strongly lumpable w.r.t.\ the coarse graining function $f\in\mathcal{F}_{\mathcal{X}}$. Then, the process $Z_t=f(X_t)$ is informationally closed. Further, if a coarse-graining $\bm Z$ is informationally closed and Markov, then $\bm X$ is strongly lumpable w.r.t.\ the coarse-graining function $f$.
\end{proposition}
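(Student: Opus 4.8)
The plan is to prove the two implications separately. The converse (informational closure plus Markovianity implies strong lumpability) will follow almost immediately from Lemma~\ref{lemma:strong_lump_char}, so the real work is in the forward direction, which I would establish by a direct conditional-independence computation.

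\emph{Forward direction.} Assume $\bm X$ is strongly lumpable w.r.t.\ $f$ and set $Z_t=f(X_t)$. First I would record two consequences of strong lumpability: (i) $\bm Z$ is a time-homogeneous Markov chain with a well-defined transition kernel $\bar P$ on $\mathcal Z$ that does not depend on the initial distribution (this is the definition together with the classical transition-probability characterisation; see \cite{Kemeny_FMC}); and (ii) $p(Z_{t+1}=z\mid X_t=x)$ depends on $x$ only through $f(x)$, and in fact equals $\bar P(f(x),z)$. A short induction on $L$ using the Markov property of $\bm X$ then upgrades (ii) to: $p(\finfut{z}{t+1}{L}\mid X_t=x)$ depends on $x$ only through $f(x)$ --- the inductive step peels off $Z_{t+1}$, uses (ii) at time $t+1$, and notes that $p(\finfut{z}{t+2}{L-1}\mid X_{t+1}=y)$ already depends only on $f(y)$, which is pinned down by the value of $Z_{t+1}$.

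Granting this claim, I would expand, via the chain rule for mutual information,
\begin{align*}
I(\past{X}{t};\finfut{Z}{t+1}{L}\mid\past{Z}{t})
&= I(X_t;\finfut{Z}{t+1}{L}\mid\past{Z}{t}) \\
&\qquad + I(\past{X}{t-1};\finfut{Z}{t+1}{L}\mid X_t,\past{Z}{t}).
\end{align*}
The second term vanishes because $\past{X}{t-1}$ and $\fut{X}{t+1}$ are conditionally independent given $X_t$ (Markov property of $\bm X$), hence $\past{X}{t-1}$ and $\finfut{Z}{t+1}{L}$ are conditionally independent given $X_t$, and $\past{Z}{t}$ is a deterministic function of $(\past{X}{t-1},X_t)$, so conditioning additionally on it cannot reintroduce dependence. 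The first term vanishes because the Markov property of $\bm X$ gives $p(\finfut{z}{t+1}{L}\mid x_t,\past{z}{t})=p(\finfut{z}{t+1}{L}\mid x_t)$, which by the claim above is a value depending only on $f(x_t)$; but $f(x_t)=z_t$ is already determined by $\past{z}{t}$, so this value is the same for every $x_t$ consistent with $\past{z}{t}$, and therefore equals $p(\finfut{z}{t+1}{L}\mid\past{z}{t})$ after averaging over $x_t$. Since this holds for every $L$, $\bm Z$ is informationally closed.

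\emph{Converse.} Suppose $Z_t=f(X_t)$ is informationally closed and Markov. Specialising Definition~\ref{def:infoclosure} to $L=1$ gives $I(\past{X}{t};Z_{t+1}\mid\past{Z}{t})=0$, i.e.\ $H(Z_{t+1}\mid\past{X}{t},\past{Z}{t})=H(Z_{t+1}\mid\past{Z}{t})$; since $\past{Z}{t}$ is a function of $\past{X}{t}$, the left side equals $H(Z_{t+1}\mid\past{X}{t})$, which by the Markov property of $\bm X$ equals $H(Z_{t+1}\mid X_t)$. On the other hand, Markovianity of $\bm Z$ gives $H(Z_{t+1}\mid\past{Z}{t})=H(Z_{t+1}\mid Z_t)$. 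Combining, $H(Z_{t+1}\mid X_t)=H(Z_{t+1}\mid Z_t)$, which by Lemma~\ref{lemma:strong_lump_char} (applicable since $\bm X$ is stationary, irreducible and aperiodic) is precisely strong lumpability of $\bm X$ w.r.t.\ $f$. The hard part here is essentially nil once Lemma~\ref{lemma:strong_lump_char} is invoked; the delicate part of the whole proposition is the forward direction --- making the induction establishing ``$p(\finfut{z}{t+1}{L}\mid X_t=x)$ depends only on $f(x)$'' fully rigorous, and verifying in both conditional-independence arguments that the extra conditioning variable $\past{Z}{t}$ really is a function of the variables already conditioned on, so that its presence is harmless.
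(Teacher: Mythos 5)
Your proof is correct. The converse direction is essentially identical to the paper's: both specialise information closure to $L=1$, use Markovianity of $\bm X$ and $\bm Z$ to reduce to $H(Z_{t+1}\mid X_t)=H(Z_{t+1}\mid Z_t)$, and invoke Lemma~\ref{lemma:strong_lump_char}. The forward direction, however, takes a genuinely different route. The paper stays entirely at the entropy level: it writes the conditional mutual information as a difference of conditional entropies, telescopes both via the chain rule and Markovianity of $\bm Z$, sandwiches each summand with the data-processing inequality, and kills each term using the entropy characterisation $H(Z_\ell\mid Z_{\ell-1})=H(Z_\ell\mid X_{\ell-1})$ of Lemma~\ref{lemma:strong_lump_char}. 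You instead work at the level of distributions: you invoke the classical Kemeny--Snell transition-probability characterisation of strong lumpability (that $p(Z_{t+1}=z\mid X_t=x)$ factors through $f(x)$), propagate it to finite horizons by induction using time-homogeneity and the Markov property, and then conclude directly that the relevant conditional laws coincide so the conditional mutual information vanishes. Your inductive step and the two conditional-independence verifications (that conditioning additionally on $\past{Z}{t}$, a function of variables already conditioned on, is harmless) are sound. What each approach buys: the paper's argument is shorter once Lemma~\ref{lemma:strong_lump_char} is available, but that lemma is only stated for stationary, irreducible, aperiodic chains; your forward direction leans only on the transition-probability characterisation, which holds for general finite time-homogeneous chains, so it is marginally more elementary and would survive weakening of the ergodicity hypotheses (which are then needed only for the converse). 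Your distributional claim is also strictly stronger than the vanishing of the mutual information, which makes the final step transparent.
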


\begin{proof}
    Using Lemma~\ref{lemma:strong_lump_char}, a direct derivation shows that
    \begin{align}
        &I(\finfut{Z}{t+1}{L};\past{X}{t}|\past{Z}{t})\notag\\
        &=H(\finfut{Z}{t+1}{L}|\past{Z}{t})-H(\finfut{Z}{t+1}{L}|\past{X}{t},\past{Z}{t})\\
        &\stackrel{(a)}{=}H(\finfut{Z}{t+1}{L}|Z_t)-H(\finfut{Z}{t+1}{L}|X_t) \label{eq:proof:eq1}\\
        &\stackrel{(b)}{=}\sum_{\ell=t+1}^{t+L} \Big( H(Z_\ell|Z_{\ell-1})-H(Z_\ell|\finfut{Z}{t+1}{\ell-1},X_t)\Big)\\
        &\stackrel{(c)}{\le} \sum_{\ell=t+1}^{t+L} \Big(H(Z_\ell|Z_{\ell-1})-H(Z_\ell|X_{\ell-1}) \Big)\\
        &\stackrel{(d)}{=}0\,.
    \end{align}
    Above, $(a)$ follows because $Z_t=f(X_t)$ and because both $\bm X$
    and $\bm Z$ are Markov; $(b)$ from the chain rule of entropy and
    Markovianity of $\bm Z$; 
    $(c)$ from the data processing inequality, stating that
    $X_{\ell-1}$ contains at least as much information about $Z_\ell$
    as $(Z_{t+1}^{\ell-1},X_t)$ does; and $(d)$ from stationarity and Eq.~\eqref{eq:lumpy1}. 
    
    For the second part of the proof, note that information closure and Markovianity of $\bm Z$ imply that~\eqref{eq:proof:eq1} holds for all $L$. Instantiating this for $L=1$ yields Eq.~\eqref{eq:lumpy1}, which completes the proof with~\cite[Th.~9]{GeigerTemmel_kLump}.
\end{proof}

Please note that the above lemma is only valid for stationary, irreducible, and aperiodic Markov chains. The reason for this is that the equivalence between strong lumpability and Eq.~\eqref{eq:lumpy1} is only guaranteed for this class of Markov chains~\footnote{In essence, strong lumpability requires an equality of certain probability distributions (cf.~\cite[Th.~6.3.2]{Kemeny_FMC}); this equality of distributions is implied by the equality $H(Z_t|Z_{t-1})=H(Z_t|X_{t-1})$ only if the conditioning event $X_{t-1}$ covers the whole state space $\mathcal{X}$ with positive probability. This is ensured for irreducible Markov chains with finite $\mathcal{X}$~\cite[Th.~4.1.4]{Kemeny_FMC}. Whether the requirement of aperiodicity can be removed, or whether the requirement of stationarity can be relaxed, is currently unclear.}. 
Also note that if the coarse-graining function $f$ does not produce a first-order, but a higher-order Markov chain, the connection between (higher-order) lumpability and informational closure can break down --- as illustrated by our next two examples.

\begin{example}[Lumpability is not implied by information closure and high-order Markovianity]
Let's consider $\bm X$ where $X_t$ can take one of four possible values, denoted as $a,b,c$, and $d$, and a coarse-graining $\bm Z$ determined by the following coarse-graining function: $g(a)=g(b)=A$ and $g(c)=g(d)=B$. Suppose that $X_0$ is uniformly distributed and that we have fully
deterministic dynamics where $a\rightarrow b\rightarrow c\rightarrow
d\rightarrow a$. Then, one can check that $\bm Z$ becomes a deterministic Markov process with memory 2 (indeed, it can be shown that $\bm X$ is strongly 2-lumpable), and its $\epsilon$-machine is given by four causal states: one for all sequences that finish with $AA$, another for $AB$, other for $BA$, and the last for $BB$. It can be shown that $\bm Z$ is informationally closed, although $\bm X$ with $f$ is not strongly (first-order) lumpable.
\end{example}

\begin{example}[High-order strong lumpability does not imply information closure]\label{ex:higherorder}
Let $\bm X$ be the Markov process from~\cite[Example
15]{GeigerTemmel_kLump}. Specifically, let the transition probability
matrix of $\bm X$ be given as 
\begin{equation}
    P=\left[\begin{array}{cc|cc}
        0.6 & 0.4 & 0& 0 \\
        0.3 & 0.2 & 0.1 & 0.4\\
        \hline
        0.2 & 0.05 & 0.375 & 0.375 \\
        0.2 & 0.05 & 0.375 & 0.375 \\
    \end{array}\right].
\end{equation}
The lines indicate that the coarse graining-function $g$ merges states 1 and 2 to $A$ and states 3 and 4 to $B$. It can be shown that the resulting coarse-graining $\bm Z$, $Z_t=g(X_t)$ is a second-order Markov chain, and that the coarse-graining is strongly 2-lumpable. A coarse graining is strongly 2-lumpable if and only if we have~\cite[Th.~9]{GeigerTemmel_kLump}
\begin{equation}\label{eq:lump_not_IC}
    H(Z_{t+1}|Z_t,X_{t-1}) = H(Z_{t+1}|Z_t,Z_{t-1}) \approx 0.733.
\end{equation}
Informational closure, for a Markov chain $\bm X$ and a second-order
Markov chain $\bm Z$, requires that $I(\finfut{Z}{t+1}{L};\past{X}{t}|\past{Z}{t})=0$ for every $L$. However,
\begin{multline}
    I(\finfut{Z}{t+1}{L};\past{X}{t}|\past{Z}{t}) \ge 
    I({Z}_{t+1};\past{X}{t}|\past{Z}{t})\\
    = H(Z_{t+1}|Z_t,Z_{t-1})-H(Z_{t+1}|X_t).
\end{multline}
A computation of the latter quantity yields $H(Z_{t+1}|X_t)\approx 0.552$, which with~\eqref{eq:lump_not_IC} leads to $I(\finfut{Z}{t+1}{L};\past{X}{t}|\past{Z}{t}) \ge 0.18$, indicating that informational closure is ruled out.
\end{example}

Additionally, it can be shown that Markovianity of the coarse-graining $\bm Z$ is in general not sufficient for information closure to hold~\cite{Pfante_LevelID}. Indeed, if the Markov chain $\bm X$ is only weakly (but not strongly) lumpable w.r.t.\ the coarse graining function $f$, then $\bm Z$ may be a Markov chain but may fail to be informationally closed, as illustrated in Counterexample~\ref{ex:compcl_no_infocl} above.

We now establish a general relationship between strong lumpability and information closure on the one hand, and weak lumpabiliy and computational closure in the other. 
We knew from Theorem~\ref{teo:compu_closure} that information closure is stronger than computational closure, and weak and strong lumpability provide a complementary angle to contemplate this relationship. The following result builds on Proposition~\ref{prop:Markov_Lumpability} and the fact that while many processes $\bm X$ may not be Markovian, the dynamics of their $\epsilon$-machines always are.

\begin{theorem}\label{teo:lumpability}
    Let $\bm E$ be the $\epsilon$-machine of a stationary process $\bm X$. Then, the following statements hold true:
    \begin{enumerate}
        \item If $\bm Z$ is a computationally closed coarse-graining of $\bm X$, then $\bm E$ is weakly lumpable.
        \item If $\bm E$ is weakly lumpable w.r.t.\ a coarse-graining function $f\in\mathcal{F}_\mathcal{E}$, then $\bm Z=f(\bm E)$ is a computationally closed coarse-graining of $\bm X$.
    \end{enumerate}
    Now let $\bm E$ additionally be irreducible and aperiodic. Then:
    \begin{enumerate}[resume]
        \item If $\bm Z$ is an informationally closed spatial coarse-graining of $\bm X$, then $\bm E$ is strongly lumpable.
        \item If $\bm E$ is strongly lumpable w.r.t.\ a coarse-graining function $f\in\mathcal{F}_\mathcal{E}$, then, $\bm Z=f(\bm E)$ is an informationally closed coarse-graining of $\bm X$.
    \end{enumerate}
\end{theorem}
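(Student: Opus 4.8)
The plan is to organise the four claims into two dual pairs --- (1)--(2) tie \emph{weak} lumpability of the causal-state chain $\bm E$ to computational closure, and (3)--(4) tie \emph{strong} lumpability of $\bm E$ to informational closure --- and to lean throughout on two structural facts about $\epsilon$-machines: (i) for any stationary $\bm X$ the causal-state process $\bm E$ is itself a stationary Markov chain, and (ii) $\epsilon$-machines are \emph{unifilar}, so $E'_{t+1}$ (resp.\ $E_{t+1}$) is a deterministic function of $(E'_t,Z_{t+1})$ (resp.\ $(E_t,X_{t+1})$). I will invoke the earlier results without re-proving them: informational closure $\Leftrightarrow$ causal closure (Theorem~\ref{thm:equivalence}); for spatial coarse-grainings, causal closure forces $E'_t=f^*(E_t)$ for a spatial $f^*$ and makes the $f/\epsilon$ diagram commute (Theorem~\ref{teo:compu_closure}); $U_t=f'(E_t)$ with $p(\finfut{Z}{t+1}{L}|\past{X}{t})=p(\finfut{Z}{t+1}{L}|U_t)$ for all $L$ when the coarse-graining is spatial (Proposition~\ref{lem:UisFofE} together with Definition~\ref{def:u-machine} and Proposition~\ref{prop:minimality_upsilon}); and Proposition~\ref{prop:Markov_Lumpability} and Lemma~\ref{lemma:strong_lump_char} on Markov chains. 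One elementary observation used repeatedly: for a \emph{Markov} chain $\bm Y$ the causal state $\epsilon(\past{Y}{t})$ is a function of $Y_t$ alone, since $p(\finfut{Y}{t+1}{L}|\past{Y}{t})$ depends on the past only through $Y_t$.

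For claim (1), if $\bm Z$ is computationally closed then $E'_t=f^*(E_t)$, and the lumped chain $f^*(\bm E)$ \emph{equals} $\bm E'$, which --- being the $\epsilon$-machine of $\bm Z$ --- is Markov; hence $\bm E$ lumps to a Markov chain when started from its stationary law, which is weak lumpability (Definition~\ref{def:lumpability}). For claim (2), if $\bm E$ is weakly lumpable w.r.t.\ $f\in\mathcal{F}_{\mathcal{E}}$, then in the stationary regime $Z_t:=f(E_t)=f(\epsilon(\past{X}{t}))$ defines a coarse-graining of $\bm X$ that is a Markov chain; by the elementary observation its causal state $E'_t$ is a function $\eta$ of $Z_t$, whence $E'_t=\eta(f(E_t))=:f^*(E_t)$ with $f^*$ spatial, which is exactly computational closure (Definition~\ref{def:compu_closure}).

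For claim (3), let $\bm Z$ be an informationally closed spatial coarse-graining: Theorem~\ref{thm:equivalence} makes it causally closed, Theorem~\ref{teo:compu_closure} then gives $E'_t=f^*(E_t)$ while causal closure gives a bijection $\bm E'\leftrightarrow\bm U$, and with $U_t=f'(E_t)$ the maps $f^*$ and $f'$ induce the same partition of the states of $\bm E$. The crux is to show $p(E'_{t+1}\mid E_t=e)$ depends on $e$ only through $f^*(e)$: by unifilarity $E'_{t+1}=\delta(E'_t,Z_{t+1})=\delta(f^*(e),Z_{t+1})$ on $\{E_t=e\}$, and the $\upsilon$-machine identity makes the law of $Z_{t+1}$ on $\{E_t=e\}$ a function of $f'(e)$ alone; combining, the conditional law of $E'_{t+1}$ given $E_t=e$ is a function of $f^*(e)=E'_t$, so $H(E'_{t+1}\mid E_t)=H(E'_{t+1}\mid E'_t)$, which by Lemma~\ref{lemma:strong_lump_char} --- valid since $\bm E$ is stationary, irreducible and aperiodic --- is strong lumpability of $\bm E$ w.r.t.\ $f^*$. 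For claim (4), applying Proposition~\ref{prop:Markov_Lumpability} to the Markov chain $\bm E$ shows $Z_t:=f(E_t)$ is informationally closed \emph{relative to $\bm E$}, i.e.\ $I(\past{E}{t};\finfut{Z}{t+1}{L}\mid\past{Z}{t})=0$ for all $L$; to upgrade to closure relative to $\bm X$ I will use that $\finfut{Z}{t+1}{L}$ is a function of $(E_t,\finfut{X}{t+1}{L})$ (unifilarity of $\bm E$) and that $p(\finfut{X}{t+1}{L}\mid\past{X}{t})=p(\finfut{X}{t+1}{L}\mid E_t)$ (the defining property of causal states), which together force $H(\finfut{Z}{t+1}{L}\mid\past{X}{t})=H(\finfut{Z}{t+1}{L}\mid E_t)$; since $E_t$ is contained in $\past{E}{t}$ and $\past{E}{t}$ is a function of $\past{X}{t}$, the quantity $H(\finfut{Z}{t+1}{L}\mid\past{E}{t})$ is squeezed to equal $H(\finfut{Z}{t+1}{L}\mid\past{X}{t})$, and combined with the $\bm E$-relative closure this yields $I(\past{X}{t};\finfut{Z}{t+1}{L}\mid\past{Z}{t})=0$.

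The main obstacle I anticipate is not a single hard estimate but the bookkeeping of \emph{which process plays the role of the microscale at each step}: the $\upsilon$-machine, the commuting diagram, and Proposition~\ref{prop:Markov_Lumpability} must each be invoked sometimes over $\bm X$ and sometimes over the Markov chain $\bm E$, and the two ``lifting'' passages --- from $\bm E$-relative to $\bm X$-relative statements in (4), and the exploitation of $\upsilon$-machine sufficiency in (3) --- are precisely where unifilarity, stationarity and the discreteness of the alphabet are genuinely consumed. A secondary subtlety is the initial-distribution clause of weak lumpability: in the stationary setting one must read ``$\bm E$ is weakly lumpable'' as ``$f(\bm E)$ is Markov when started from the unique stationary law of $\bm E$'', which is what keeps (1)$\Leftrightarrow$(2) aligned with the entropic characterisation in Lemma~\ref{lemma:strong_lump_char}.
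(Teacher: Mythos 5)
Your proposal is correct, and for parts (1), (2) and (4) it follows essentially the same route as the paper: (1) and (2) are proved identically (computational closure gives $E'_t=f^*(E_t)$ with $\bm E'$ Markov, hence weak lumpability; conversely weak lumpability makes $Z_t=f(E_t)$ Markov so its causal state is a function of $Z_t$ alone, hence of $E_t$), and for (4) the paper likewise reduces to the argument of Proposition~\ref{prop:Markov_Lumpability} applied to $\bm E$, performing inline the same ``lifting'' step $H(Z_\ell|\past{X}{\ell-1})=H(Z_\ell|E_{\ell-1})$ that you isolate via unifilarity and the predictive sufficiency of $E_t$. The genuine divergence is in part (3): the paper proves strong lumpability by an entropy squeeze, sandwiching $H(E'_{t+1}|E'_t)-H(E'_{t+1}|E_t)$ between $0$ and $0$ using the unifilarity identities $H(E'_{t+1}|E'_t,Z_{t+1})=H(E'_{t+1}|E_t,Z_{t+1})=0$ together with the chain $0=H(Z_{t+1}|E'_t)-H(Z_{t+1}|U_t)=H(Z_{t+1}|E'_t)-H(Z_{t+1}|E_t)$; you instead argue directly at the level of conditional distributions, showing that the law of $E'_{t+1}=\delta(E'_t,Z_{t+1})$ given $E_t=e$ depends on $e$ only through $f^*(e)$, because the law of $Z_{t+1}$ on $\{E_t=e\}$ is a function of $U_t=f'(e)$ and causal closure makes $f'$ and $f^*$ induce the same partition. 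Your version is arguably more transparent in that it establishes the Kemeny--Snell transition condition for strong lumpability directly (so the appeal to Lemma~\ref{lemma:strong_lump_char} becomes a formality), whereas the paper's entropic chain stays uniformly within the information-theoretic language used elsewhere in the appendix; both arguments consume the same ingredients (unifilarity, $\upsilon$-machine sufficiency, and Theorem~\ref{teo:compu_closure} to get $E'_t=f^*(E_t)$), and your handling of the stationary-initial-law reading of weak lumpability matches the paper's intended interpretation.
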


\begin{proof}
    We first show that $\bm Z=f(\bm E)$, for an arbitrary spatial coarse-graining $f\in\mathcal{F}_\mathcal{E}$, is a (potentially spatio-temporal) coarse-graining of $\bm
    X$. By
    definition, the $\epsilon$-machine $\bm E$ is a coarse-graining of
    $\bm X$, i.e., there exists an $\epsilon\in\mathcal{G}_{\X}$ such that
    $E_t=\epsilon(\past{X}{t})$. Further, since $Z_t=f(E_t)$ for some
    $f\in\mathcal{F}_{\mathcal{E}}$, $\bm Z$ is a coarse-graining of
    $\bm X$ in the sense that $Z_t=f\big(\epsilon(\past{X}{t})\big)$.

    To prove the first statement, note that computational closure
    implies that there exists a function $f\in\mathcal{F}_\mathcal{E}$
    such that 
    $\bm E'$ (the causal states of the $\epsilon$-machine of $\bm Z$) is given by $E'_t=f(E_t)$. Since, by definition, the causal states of $\epsilon$-machines have Markovian dynamics, weak lumpability follows from this.
    
    For the second statement, assume that $\bm E$ is weakly lumpable
    w.r.t.\ some coarse-graining function
    $f\in\mathcal{F}_\mathcal{E}$. As a consequence $\bm Z$, defined
    by $Z_t=f(E_t)$, is Markov. The $\epsilon$-machine of $\bm Z$ is
    therefore a spatial coarse-graining of $\bm Z$, i.e.,
    $E'_t=\epsilon'(Z_t)$ for some $\epsilon'\in\mathcal{F}_\mathcal{Z}$. We hence
    have that $E'_t=\epsilon'(f(E_t))$, with $\epsilon'\circ
    f\in\mathcal{F}_\mathcal{E}$. Hence, the causal states of the $\epsilon$-machine of
    $\bm Z$ are a (spatial) coarse-graining of the causal states of the $\epsilon$-machine of $\bm X$, which is the
    definition of computational closure. 

    For the third statement, we make use of a fact of $\epsilon$-machines, namely that the current causal state and the future state of the process determine the future causal state, i.e., 
    \begin{subequations}
        \begin{equation}\label{eq:propertyA}
        H(E'_{t+1}|E'_t,Z_{t+1})=0.
    \end{equation}
    Since for informationally closed spatial coarse-grainings we have computational closure, i.e., that $E'_t$ is a function of $E_t$, we have also, by the fact that conditioning reduces entropy,
      \begin{multline}\label{eq:propertyB}
          H(E'_{t+1}|E_t,Z_{t+1})=H(E'_{t+1}|E'_t,E_t,Z_{t+1})\\\le H(E'_{t+1}|E'_t,Z_{t+1})=0 .
      \end{multline}
      and
      \begin{multline}\label{eq:propertyC}
          H(Z_{t+1}|E'_{t+1},E'_t)-H(Z_{t+1}|E'_{t+1},E_t)=\\
          H(Z_{t+1}|E'_{t+1},E'_t)-H(Z_{t+1}|E'_{t+1},E'_t,E_t)=\\
          I(Z_{t+1};E_t|E'_t,E'_{t+1})
      \end{multline}
    \end{subequations}    
    Now note that information closure implies that $I(Z_{t+1};\past{X}{t}|\past{Z}{t})=0$. Hence, with the definitions of $\epsilon$- and $\upsilon$-machines, we get:
    \begin{align}
        0&=H(Z_{t+1}|\past{Z}{t})-H(Z_{t+1}|\past{X}{t})\nonumber\\
        &=H(Z_{t+1}|E'_t)-H(Z_{t+1}|U_t)\nonumber\\
        &\stackrel{(a)}{=}H(Z_{t+1}|E'_t)-H(Z_{t+1}|E_t)\nonumber\\
        &\stackrel{(b)}{=}H(Z_{t+1},E'_{t+1}|E'_t)-H(Z_{t+1},E'_{t+1}|E_t)\nonumber\\
        &\stackrel{(c)}{=} 
        H(E'_{t+1}|E'_t)-H(E'_{t+1}|E_t)+I(Z_{t+1};E_t|E'_t,E'_{t+1})\nonumber\\
        &\stackrel{(d)}{\geq}
        H(E'_{t+1}|E'_t)-H(E'_{t+1}|E_t)\nonumber\\
        &\stackrel{(e)}{\geq} 
        0~,\nonumber
    \end{align}
    where $(a)$ follows because for spatial coarse-grainings $U_t$ is a function of $E_t$, since thus $E_t$ must contain all information about $Z_{t+1}$ that is necessary, and since $U_t$ is minimal, and where $(b)$ follows from adding~\eqref{eq:propertyA} and subtracting~\eqref{eq:propertyB},
    $(c)$ follows from applying the chain rule to the conditional
    entropies and~\eqref{eq:propertyC},
    $(d)$ from the non-negativity of the mutual information, and $(e)$ from the fact that informational closure of spatial coarse-grainings implies computational closure, and hence $H(E'_{t+1}|E_t) = H(E'_{t+1}|E_t,E'_t)$. 
    This derivation proves that $H(E'_{t+1}|E'_t)=H(E'_{t+1}|E_t)$, which is the equivalent condition for strong lumpability~\cite[Th.~9]{GeigerTemmel_kLump}.

    We finally show that $\bm Z$ is informationally closed in the sense of Definition~\ref{def:infoclosure} if $\bm E$ is irreducible, aperiodic, and strongly lumpable w.r.t.\ some $f\in\mathcal{F}_\mathcal{E}$. In this setting, we have $H(Z_t|E_{t-1})=H(Z_t|Z_{t-1})$ for any $t$. Hence, for every finite $L$,
    \begin{align}
        &I(\finfut{Z}{t+1}{L};\past{X}{t}|\past{Z}{t})\\
        &= H(\finfut{Z}{t+1}{L}|\past{Z}{t})-H(\finfut{Z}{t+1}{L}|\past{X}{t},\past{Z}{t})\\
        &= \sum_{\ell=t+1}^{t+L} \Big(H(Z_\ell|\past{Z}{\ell-1})-H(Z_\ell|\past{Z}{\ell-1},\past{X}{t})\Big)\\
        &\stackrel{(a)}{\le} \sum_{\ell=t+1}^{t+L} \Big(H(Z_\ell|Z_{\ell-1})-H(Z_\ell|\past{X}{\ell-1}) \Big)\\
        &\stackrel{(b)}{=} \sum_{\ell=t+1}^{t+L} \Big(H(Z_\ell|Z_{\ell-1})-H(Z_\ell|E_{\ell-1}) \Big)\\
        &=0,
    \end{align}
    where $(a)$ follows because $\bm  Z$ is Markov, because 
    $Z_t=f(\epsilon(\past{X}{t}))$ since $E_t=\epsilon(\past{X}{t})$ is lumpable w.r.t.\ $f\in\mathcal{F}_\mathcal{E}$, 
    and from the fact that conditioning reduces entropy, and $(b)$ because all information of $\past{X}{t}$ relevant for the future of $\bm E$'s coarse-graining must be contained in $E_t$. 
    The rest of the proof follows along the lines of Proposition~\ref{prop:Markov_Lumpability}.
\end{proof}

This result provides another view on the relationship between
information/causal closure and computational closure, which can be
seen now via the relationship between strong and weak lumpability. In
particular, this result highlights information closure as a more
`robust' property. 
In effect, due to its relationship with strong lumpability, this shows that information closure is (mostly) a property of the transition probabilities between causal states and not  their marginal distribution. 
In contrast, computational closure --- as weak lumpability --- depends on the marginal as much as on the transitions. 
This property is, hence, less robust than information closure, as small
deviations from the stationary distribution may break the former but
not the latter.

Additionally, this result is also important because it allows us to use numerical methods that has been developed to empirically discover lumpable partitions of a Markov chain. Indeed, Theorem~\ref{teo:lumpability} shows that those methods can be readily used to empower data-driven discovery of emergent coarse-grainings, requiring only mild conditions on the transitions of the corresponding $\epsilon$-machines. 

The literature of practical methods for discovering lumpable coarse grainings is extensive. For example, an algorithm that constructs the coarsest lumpable coarse-graining of a Markov chain using binary search trees has been presented in Ref.~\cite{DERISAVI2003309}, while an eigenvector-based approach is suggested by~\cite{Barr_Eigenvector,Jacobi_Eigenvector}. Also, an algorithm for the certain class of successively lumpable Markov chains was developed in~\cite{Katehakis_Successively}. 
Since in many practically relevant cases the measurements of $\bm X$ are noisy, a number of algorithms for detecting quasi-lumpable coarse-grainings (i.e., for which the lumpability condition holds approximately) have also been developed. An algorithm based on spectral theory was developed by Jacobi in~\cite{Jacobi_Algo}, while algorithms based on the information-theoretic characterization of lumpability have been proposed by~\cite{GeigerEtAl_OptimalMarkovAggregation}. To prevent these algorithmic approaches to terminate at trivial, e.g., i.i.d. coarse-grainings, annealing procedures have been proposed~\cite{Amjad_GeneralizedMA}~\footnote{Code for this type of approaches can be found at \url{https://github.com/stegsoph/Constrained-Markov-Clustering}.}.

\bibliographystyle{IEEEtran}
\bibliography{main}

\end{document}